 \newtheorem{definition}{Definition}
 \newtheorem{theorem}{Theorem}
 \newtheorem{lemma}{Lemma}
\newcommand{\f}[1]{\ensuremath{\mathit{f}_{#1}}\xspace}
\newcommand{\n}[1]{\ensuremath{\mathit{n}_{#1}}\xspace}
\newcommand{\len}[1]{\ensuremath{\mathit{w}_{#1}}\xspace}
\newcommand{\W}[1]{\ensuremath{\mathit{W}_{#1}}\xspace}
\newcommand{\parent}[1]{\ensuremath{\mathit{parent}({#1})}\xspace}
\newcommand{\Children}[1]{\ensuremath{\mathit{Children}({#1})}\xspace}
\newcommand{\inputs}[1]{\ensuremath{\mathit{inputs}({#1})}\xspace}
\newcommand{\Done}{\ensuremath{\mathit{Done}}\xspace}
\newcommand{\Ancestors}[1]{\ensuremath{\mathit{Ancestors}(#1)}\xspace}
\newcommand{\false}{\ensuremath{\mathit{false}}\xspace}
\newcommand{\true}{\ensuremath{\mathit{true}}\xspace}
\newcommand{\ParSubtrees}{\textsc{ParSubtrees}\@\xspace}
\newcommand{\ParSubtreesMath}{\textsc{ParSubtrees}} 
\newcommand{\ParSubtreesOptim}{\textsc{ParSubtreesOptim}\@\xspace}
\newcommand{\SplitSubtrees}{\textsc{SplitSubtrees}\@\xspace}
\newcommand{\ParInnerFirst}{\textsc{ParInnerFirst}\@\xspace}
\newcommand{\ParInnerFirstMemLimit}{\textsc{ParInnerFirstMemLimit}\@\xspace}
\newcommand{\ParInnerFirstMemLimitOptim}{\textsc{ParInnerFirstMemLimitOptim}\@\xspace}
\newcommand{\ParDeepestFirst}{\textsc{ParDeepestFirst}\@\xspace}
\newcommand{\ParDeepestFirstMemLimit}{\textsc{ParDeepestFirstMemLimit}\@\xspace}
\newcommand{\ParDeepestFirstMemLimitOptim}{\textsc{ParDeepestFirstMemLimitOptim}\@\xspace}
\newcommand{\BookingInnerFirst}{\textsc{MemBookingInnerFirst}\@\xspace}
\newcommand{\PO}{\ensuremath{\mathit{PO}}}
\definecolor{mygray}{gray}{0.75}
\begin{document}

\makeRR

\selectlanguage{american}

\section{Introduction}

Parallel workloads are often modeled as task graphs, where nodes
represent tasks and edges represent the dependencies between
tasks. There is an abundant literature on task graph scheduling when
the objective is to minimize the total completion time, or
makespan. However, with the increase of the size of the data to be
processed, the memory footprint of the application can have a dramatic
impact on the algorithm execution time, and thus needs to be
optimized.  This is best exemplified with an application which,
depending on the way it is scheduled, will either fit in the memory,
or will require the use of swap mechanisms or out-of-core techniques.
There are very few existing studies on the minimization of the memory
footprint when scheduling task graphs, and even fewer of them
targeting parallel systems.


We consider the following memory-aware parallel scheduling problem for
rooted trees. The nodes of the tree correspond to tasks, and the edges
correspond to the dependencies among the tasks. The dependencies are
in the form of input and output files\footnote{The concept of \emph{file}
is used here in a very general meaning and does not necessarily correspond to
a classical file on a disk. Essentially, a file is a set of data.}: each
node takes as input several large files, one for each of its children,
and it produces a single large file, and the different files may have
different sizes. Furthermore, the execution of any node requires its
\emph{execution} file to be present; the execution file models
the program and/or the temporary data of the task.
We are to execute such a set of tasks on a parallel
system made of $p$ identical processing resources sharing the same
memory. The execution scheme corresponds to a schedule of the tree
where processing a node of the tree translates into reading the
associated input files and producing the output file.  How can the tree be
scheduled so as to optimize the memory usage?

Modern computing platforms exhibit a complex memory hierarchy ranging
from caches to RAM and disks and even sometimes tape storage, with the
classical property that the smaller the memory, the quicker. Thus, to
avoid large running times, one usually wants to avoid the use of
memory devices whose IO bandwidth is below a given threshold: even if
out-of-core execution (when large data are unloaded to disks) is
possible, this requires special care when programming the application
and one usually wants to stay in the main memory (RAM). This is why in
this paper, we are interested in the question of minimizing the amount
of \emph{main memory} needed to completely process an application.

Throughout the paper, we consider \emph{in-trees} where a task can be
executed only if all its children have already been executed (This is
absolutely equivalent to considering \emph{out-trees} as a solution
for an in-tree can be transformed into a solution for the
corresponding out-tree by just reversing the arrow of time, as outlined
in~\cite{ipdps-tree-traversal}).
%
%
A task can be
processed only if all its
files (input, output, and execution) fit in currently available
memory. At a given time, many files may be stored in the memory, and
at most $p$ tasks may be processed by  the $p$ processors. This is
obviously possible only if all tasks and execution files fit in memory. When a
task finishes, the memory needed for its execution file and its input files
is released.
Clearly, the schedule which determines the processing times of each
task plays a key role in determining which amount of main memory is
needed for a successful execution of the entire tree.

\medskip The first motivation for this work comes from numerical
linear algebra. Tree workflows (assembly or elimination trees) arise
during the factorization of sparse matrices, and the huge size of the
files involved makes it absolutely necessary to reduce the memory
requirement of the factorization.  The sequential version of this
problem (i.e., with $p=1$ processor) has already been studied.
Liu~\cite{Liu86} discusses how to find a memory-minimizing traversal
when the traversal is required to correspond to a postorder traversal
of the tree. A follow-up study~\cite{Liu87} presents an optimal
algorithm to solve the general problem, without the postorder
constraint on the traversal. Postorder traversals are known to be
arbitrarily worse than optimal traversals for memory
minimization~\cite{ipdps-tree-traversal}. However, they are very
natural and straightforward solutions to this problem, as they allow
to fully process one subtree before starting a new one. Therefore,
they are thus widely used in sparse matrix software like
\texttt{MUMPS}~\cite{amestoy2001fam,amestoy2005hsp}, and in practice,
they achieve close to optimal performance on actual elimination
trees~\cite{ipdps-tree-traversal}.


The parallel version of this problem is a natural continuation of
these studies: when processing large elimination trees, it is very
meaningful to take advantage of parallel processing
resources. However, to the best of our knowledge, no theoretical study
exists for this problem. A preliminary version of this work, with
fewer complexity results and proposed heuristics, was presented at
IPDPS 2013~\cite{MarchalSiVi13IPDPS}.  The key contributions of this
work are:
\begin{compactitem}
\item A new proof that the parallel variant of the \emph{pebble game}
  problem is NP-complete (simpler than
  in~\cite{MarchalSiVi13IPDPS}). This shows that the introduction of
  memory constraints, in the simplest cases, suffices to make the
  problem NP-hard (Theorem~\ref{thm.np}).
\item The proof that no schedule can simultaneously achieve a
  constant-ratio approximation for the memory minimization and for the
  makespan minimization (Theorem~\ref{thm.shared.inapprox}); bounds on
  the achievable approximation ratios for makespan and memory when
  the number of processors is fixed (Theorems~\ref{thm.newalphabeta}
  and~\ref{thm.shared.inapprox_chains}).
\item A series of practical heuristics achieving different trade-offs
  between the minimization of peak memory usage and makespan; some of
  these heuristics are guaranteed to keep the memory under a
  given memory limit.
\item An exhaustive set of simulations using realistic tree-shaped
  task graphs corresponding to elimination trees of actual matrices;
  the simulations assess the relative and absolute performance of
  the heuristics.
\end{compactitem}

The rest of this paper is organized as
follows. Section~\ref{sec.related} reviews related studies. The
notation and formalization of the problem are introduced in
Section~\ref{sec.model}. Complexity results are presented in
Section~\ref{sec.complexity} while Section~\ref{sec.heuristics}
proposes different heuristics to solve the problem, which are
evaluated in Section~\ref{sec.experiments}.

\section{Background and Related Work}
\label{sec.related}

\subsection{Sparse matrix factorization}

As mentioned above, determining a memory-efficient tree traversal is
very important in sparse numerical linear algebra. The elimination
tree is a graph theoretical model that represents the storage
requirements, and computational dependencies and requirements, in the
Cholesky and LU factorization of sparse matrices. In a previous study~\cite{ipdps-tree-traversal},
we have described how such trees are built, and how the multifrontal
method~\cite{liu92multifrontal} organizes the computations along the
tree.  This is the context of the founding
studies of Liu~\cite{Liu86,Liu87} on memory minimization for
postorder or general tree traversals presented in the previous
section. Memory minimization is still a concern in modern multifrontal
solvers when dealing with large matrices. Among other, efforts have
been made to design dynamic schedulers that take into account dynamic
pivoting (which impacts the weights of edges and nodes) when
scheduling elimination trees with strong memory
constraints~\cite{guermouche04ipdps}, or to consider both task and tree
parallelism with memory constraints~\cite{agulloPP12}. While these
studies try to optimize memory management in existing parallel
solvers, we aim at designing a simple model to study the fundamental
underlying scheduling problem.

\subsection{Scientific workflows}

The problem of scheduling a task graph under memory constraints also
appears in the processing of scientific workflows whose tasks require
large I/O files. Such workflows arise in many scientific fields, such
as image processing, genomics or geophysical simulations. The problem
of task graphs handling large data has been identified
in~\cite{ramakrishnan07ccgrid} which proposes some simple heuristic
solutions. Surprisingly, in the context of quantum chemistry
computations, Lam et al.~\cite{rauber11CLSS} have recently
rediscovered the algorithm published in 1987 in~\cite{Liu87}.

\subsection{Pebble game and its variants}
\label{sec.pebble}

\newcommand{\MinMemory}{\textsc{MinMemory}\xspace}
\newcommand{\MinIO}{\textsc{MinIO}\xspace}

On the more theoretical side, this work builds upon the many papers
that have addressed the pebble game and its variants.  Scheduling a
graph on one processor with the minimal amount of memory amounts to
revisiting the I/O pebble game with pebbles of arbitrary sizes that
must be loaded into main memory before \emph{firing} (executing) the
task.  The pioneering work of Sethi and Ullman~\cite{SethiUllman70}
deals with a variant of the pebble game that translates into the
simplest instance of our problem when all input/output files have
weight 1 and all execution files have weight 0. The concern
in~\cite{SethiUllman70} was to minimize the number of registers that
are needed to compute an arithmetic expression.  The problem of
determining whether a general DAG can be executed with a given number
of pebbles has been shown NP-hard by Sethi~\cite{Sethi73} if no vertex
is pebbled more than once (the general problem allowing recomputation,
that is, re-pebbling a vertex which has been pebbled before, has been
proven \textsc{Pspace} complete~\cite{gilbert80}). However, this
problem has a polynomial complexity for tree-shaped
graphs~\cite{SethiUllman70}.

To the best of our knowledge, there have been no attempts to
extend these results to parallel machines, with the objective of
minimizing both memory and total execution time. We present such an
extension in Section~\ref{sec.complexity}.


\section{Model and objectives}
\label{sec.model}

\subsection{Application model}

We consider in this paper a tree-shaped task-graph $T$ composed of $n$
nodes, or tasks, numbered from $1$ to $n$. Nodes in the tree have an
output file, an execution file (or program), and several input files
(one per child). More precisely:
\begin{itemize}
\item Each node $i$ in the tree has an execution file of size \n{i}
  and its processing on a processor takes time \len{i}.
\item Each node $i$ has an output file of size \f{i}. If $i$ is not
  the root, its output file is used as input by its parent
  $\parent{i}$; if $i$ is the root, its output file can be of size
  zero, or contain outputs to the outside world.
\item Each non-leaf node $i$ in the tree has one input file per
  child. We denote by $\Children{i}$ the set of the children of
  $i$. For each child $j \in \Children{i}$, task $j$ produces a file
  of size \f{j} for $i$. If $i$ is a leaf-node, then $\Children{i} =
  \emptyset$ and $i$ has no input file: we consider that the initial
  data of the task either resides in its execution file or is read
  from disk (or received from the outside word) during the execution
  of the task.
\end{itemize}

During the processing of a task $i$, the memory must contain its input
files, the execution file, and the output file. The memory needed for
this processing is thus:
$$
\left(\sum_{j \in \Children{i}} \f{j}\right) + \n{i} + \f{i}
$$
After $i$ has been processed, its input files and execution file
(program) are discarded, while its output file is kept in memory
until the processing of its parent.


\subsection{Platform model and objectives}

In this paper, our goal is to design a simple platform model which
allows to study memory minimization on a parallel platform. We thus
consider $p$ identical processors sharing a single memory.

Any sequential optimal schedule for memory minimization is obviously
an optimal schedule for memory minimization on a platform with any
number $p$ of processors. Therefore, memory minimization on parallel
platforms is only meaningful in the scope of multi-criteria
approaches that consider trade-offs between the following two
objectives:
\begin{itemize}
\item \textbf{Makespan:} the classical
  makespan, or total execution time, which corresponds to the
  time-span between the beginning of the execution of the first leaf
  task and the end of the processing of the root task.
\item \textbf{Memory:} the amount of memory
  needed for the computation. At each time step, some files are stored
  in the memory and some task computations occur, inducing a memory
  usage. The \emph{peak memory} is the maximum usage of the memory
  over the whole schedule, hence the memory that needs to be available,
  which we aim to minimize.
\end{itemize}


\section{Complexity results in the pebble game model}
\label{sec.complexity}

Since there are two objectives, the decision version of our problem can be stated as follows.
\begin{definition}[BiObjectiveParallelTreeScheduling]
  Given a tree-shaped task graph $T$ with file sizes and
  task execution times, $p$ processors, and two bounds $B_{C_{\max}}$ and
  $B_{\mathit{mem}}$, is there a schedule of the task graph on the
  processors whose makespan is not larger than $B_{C_{\max}}$ and
  whose peak memory is not larger than $B_{\mathit{mem}}$?
\end{definition}

This problem is obviously NP-complete. Indeed, when there are no
memory constraints ($B_{\mathit{mem}} = \infty$) and when the task
tree does not contain any inner node, that is, when all tasks are
either leaves or the root, then our problem is equivalent to
scheduling independent tasks on a parallel platform which is an
NP-complete problem as soon as tasks have different execution
times~\cite{LenstraRKBr77}. Conversely, minimizing the makespan for a
tree of same-size tasks can be solved in polynomial-time when there
are no memory constraints~\cite{Hu61}. In this section, we consider
the simplest variant of the problem. We assume that all input files
have the same size ($\forall i, \f{i} = 1$) and no extra memory is
needed for computation ($\forall i, \n{i}=0$). Furthermore, we assume
that the processing of each node takes unit time: $\forall i, \len{i}
= 1$. We call this variant of the problem the \emph{Pebble Game} model
since it perfectly corresponds to the pebble game problems introduced
above: the weight $\f{i} = 1$ corresponds to the pebble one must put
on node $i$ to process it; this pebble must remain there until the
parent of node $i$ has been completed, because the parent of node $i$
uses as input the output of node $i$. Processing a node 
is done in unit time.

In this section, we first show that, even in this simple variant, the
introduction of memory constraints (a limit on the number of pebbles) makes
the problem NP-hard (Section~\ref{sec:np}). Then, we show that when
trying to minimize both memory and makespan, it is not possible to get
a solution with a constant approximation ratio for both objectives,
and we provide tighter ratios when the number of processors is fixed
(Section~\ref{sec:inapprox}). 


\subsection{NP-completeness}
\label{sec:np}

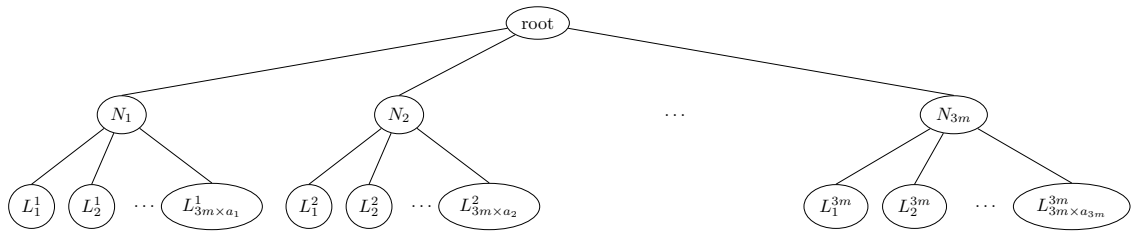
\begin{figure*}
  \centering
  \resizebox{\textwidth}{!}{
    \begin{tikzpicture}[scale=0.9, child anchor = north]
      \tikzstyle{every node}=[ellipse, draw]
      \node{root}
      [sibling distance=60mm, level distance=20mm]
      child{
        node{$N_1$}    [sibling distance=13mm]
        child{node{$L^1_1$}}
        child{node{$L^1_2$}}
        child{node[draw=none]{\ldots~~~} edge from parent [draw=none]}
        child{node{$L^1_{3m\times a_1}$}}
      }
      child{
        node{$N_2$}    [sibling distance=13mm]
        child{node{$L^2_1$}}
        child{node{$L^2_2$}}
        child{node[draw=none]{\ldots~~~} edge from parent [draw=none]}
        child{node{$L^2_{3m\times a_2}$}}
      }
      child{node[draw=none]{\ldots} edge from parent [draw=none]}
      child{
        node{$N_{3m}$}    [sibling distance=17mm]
        child{node{$L^{3m}_1$}}
        child{node{$L^{3m}_2$}}
        child{node[draw=none]{\ldots~~~} edge from parent [draw=none]}
        child{node{$L^{3m}_{3m\times a_{3m}}$}}
      }
      ;
    \end{tikzpicture}
  }
  \caption{Tree used for the NP-completeness proof}
  \label{fig:np}
\end{figure*}

\begin{theorem}
  \label{thm.np}
  The BiObjectiveParallelTreeScheduling problem is NP-complete in the
  Pebble Game model (i.e., with $\forall i, \f{i} = \len{i} = 1, \n{i}
  = 0$).
\end{theorem}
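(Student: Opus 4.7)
The plan is to reduce from 3-Partition, which is (strongly) NP-complete: given integers $a_1, \ldots, a_{3m}$ with $\sum_i a_i = mB$ and $B/4 < a_i < B/2$, decide whether the $a_i$'s can be partitioned into $m$ triples each summing to $B$. The tree of Figure~\ref{fig:np} --- a root, $3m$ inner nodes $N_i$, and $3m\cdot a_i$ unit leaves below each $N_i$ --- precisely encodes such an instance, with the ``weight'' $a_i$ represented by the leaf-count of the $i$-th subtree. Membership in NP is immediate since any schedule of $3m^2B + 3m + 1$ unit tasks can be verified against the two bounds in polynomial time.

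For the forward direction, I assume the 3-Partition instance admits a solution $G_1, \ldots, G_m$ and exhibit a schedule with $p = 3mB$ processors, makespan $B_{C_{\max}} = 2m+1$ and memory $B_{\mathit{mem}} = 3mB + 3m$. In step $2k-1$ I process in parallel the $3m\sum_{i \in G_k} a_i = 3mB$ leaves of group $G_k$; in step $2k$ I process the three $N_i$'s of $G_k$; in step $2m+1$ I process the root. Peak memory is reached at step $2m$, where one finds $3mB$ pebbles on the leaves of $G_m$, $3$ pebbles on the $N_i$'s of $G_m$ being computed, and $3(m-1)$ output pebbles left over from the previously completed groups, for a total of $3mB + 3m$.

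For the backward direction, I suppose a valid schedule is given and recover a 3-Partition. The crucial observation is that $a_i > B/4$ implies that any four subtrees $N_i$ whose leaves are simultaneously all pebbled use strictly more than $3mB$ leaf-pebbles, which combined with the output pebbles already accumulated from completed subtrees would exceed $B_{\mathit{mem}}$. Hence at every instant at most three subtrees can be ``fully pebbled'' at once. Using the makespan bound and the fact that each $N_i$ must see all its leaves pebbled at some moment, I argue that the $3m$ subtrees decompose into exactly $m$ disjoint such triples --- one triple per ``leaf phase.'' Finally, $a_i < B/2$ forces each triple to contain exactly three elements, and if one triple had $\sum_{i \in G_k} a_i < B$ then pigeonhole forces another to have $\sum > B$, again exceeding the memory bound; so each triple sums to exactly $B$, which is a 3-Partition.

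The hardest step will be the backward direction, specifically the combinatorial bookkeeping needed to force the schedule into a ``$m$ phases of exactly one triple each'' structure. A schedule is free to pebble only a fraction of a subtree's leaves in a given step, to interleave leaves from several subtrees, or to delay processing some $N_i$ long after its leaves are ready. Pinning down the bounds $(p, B_{C_{\max}}, B_{\mathit{mem}})$ so that all these pathological interleavings are ruled out --- while the intended 3-partition-aligned schedule remains feasible --- is where the technical effort concentrates; the values above are plausible but may need small adjustments to make the backward implication airtight.
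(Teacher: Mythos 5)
Your reduction, tree, and bounds ($p=3mB$, $B_{\mathit{mem}}=3mB+3m$, $B_{C_{\max}}=2m+1$) are exactly those of the paper, and your forward direction is correct as stated; contrary to your closing worry, no adjustment of the bounds is needed. The genuine gap is the backward direction, which you leave as a plan rather than a proof, and that is precisely where all the technical content of the theorem lies. Note also that your ``crucial observation'' is off as stated: four subtrees whose leaves are simultaneously all pebbled account for at least $3m(B+1)=3mB+3m=B_{\mathit{mem}}$ pebbles, which does \emph{not} exceed the bound, and your fallback (``output pebbles already accumulated from completed subtrees'') is vacuous at the beginning of the schedule. The statement that actually works, and that the paper uses, concerns four nodes $N_i$ being \emph{processed} in the same step: each such node then holds its $3m\,a_i$ input pebbles \emph{and} its own output pebble, giving at least $3m(B+1)+4>B_{\mathit{mem}}$.

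The assertion that the schedule ``decomposes into exactly $m$ disjoint triples, one per leaf phase'' is exactly what must be proved, and an arbitrary schedule is free to pebble a subtree's leaves gradually, to interleave leaves of many subtrees, or to postpone an $N_i$ long after its leaves are ready. The paper rules these out not by tweaking $(p,B_{C_{\max}},B_{\mathit{mem}})$ but by a normalization/exchange argument on an optimal-makespan schedule: the root occupies step $2m+1$; at most three $N_i$'s fit in any step; the leaves of the $N_j$'s processed at step $2m$ must all be resident at steps $2m-1$ and $2m$; if some $N_k$ were also processed at step $2m-1$, the memory bound forces $a_k$ plus the sum of those $a_j$'s to be at most $B$, so $N_k$ can be delayed to step $2m$ without increasing the peak; iterating this backwards shows that odd steps contain only the leaves of the current group and even steps only its (at most three) $N_i$'s. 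Only after this structure is established does your final pigeonhole step (each group sums to at most $B$, the total is $mB$, hence each sums to exactly $B$) apply. Without that exchange argument the backward implication, and hence NP-hardness, is not established.
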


\begin{proof}
  First, it is straightforward to check that the problem is in NP:
  given a schedule, it is easy to compute its peak memory and
  makespan.

  To prove the problem NP-hard, we perform a reduction from
  3-\textsc{Partition}, which is known to be
  NP-complete in the strong sense~\cite{GareyJohnson}. We consider the following instance
  $\mathcal{I}_1$ of the 3-\textsc{Partition} problem: let $a_i$ be $3m$ integers and $B$
  an integer such that $\sum a_i = m B$. We consider the variant of
  the problem, also NP-complete, where $\forall i, B/4 < a_i < B/2$.
  To solve $\mathcal{I}_1$, we need to solve the following question: does there
  exist a partition of the $a_i$'s in $m$ subsets $S_1,\ldots, S_m$,
  each containing exactly 3 elements, such that, for each $S_k$,
  $\sum_{i \in S_k} a_i = B$?  We build the following instance $\mathcal{I}_2$
  of our problem, illustrated in Figure~\ref{fig:np}. The tree
  contains a root $r$ with $3m$ children, the $N_i$'s, each one
  corresponding to a value
  $a_i$. Each node $N_i$ has $3m \times a_i$ children, $L_1^i, ..., L_{3m\times a_i}^i$, which are
  leaf nodes. The question is to find a schedule of this tree on $p=3mB$
  processors, whose peak memory is not larger than $B_{\mathit{mem}} =
  3mB+3m$ and whose makespan is not larger than $B_{C_{\max}} =
  2m+1$.

  Assume first that there exists a solution to $\mathcal{I}_1$, i.e., that there are $m$
  subsets $S_k$ of 3 elements with $\sum_{i \in S_k} a_i = B$. In this
  case, we build the following schedule:
  \begin{itemize}
  \item At step 1, we process all the nodes $L_x^{i_1}$, $L_y^{j_1}$,
    and $L_z^{k_1}$ with $S_1 = \{a_{i_1}, a_{j_1}, a_{k_1}\}$. There are
    $3mB = p$ such nodes, and the amount of memory needed is also $3mB$.
  \item At step 2, we process the nodes $N_{i_1}$, $N_{j_1}$,
    $N_{k_1}$. The memory needed is $3mB+3$.
  \item At step $2n+1$, with $1 \leq n \leq m-1$, we process the $3mB = p$ nodes $L_x^{i_n}$, $L_y^{j_n}$,
    $L_z^{k_n}$ with $S_n = \{a_{i_n}, a_{j_n}, a_{k_n}\}$. 
    The amount of memory needed 
    is
    $3mB+3n$ (counting the memory for the output files of the $N_t$ nodes previously processed).
  \item At step $2n+2$, with $1 \leq n \leq m-1$, we process the  nodes $N_{i_n}$, $N_{j_n}$,
    $N_{k_n}$. The memory needed for this step is $3mB+3(n+1)$.
  \item At step $2m+1$, we process the root node and the memory needed
    is $3m+1$.
  \end{itemize}
  Thus, the peak memory of this schedule is $B_{\mathit{mem}}$ and its
  makespan $B_{C_{\max}}$.

  Reciprocally, assume that there exists a solution to problem
  $\mathcal{I}_2$, that is, there exists a schedule of makespan
  at most $B_{C_{\max}}=2m+1$.  Without loss of generality, we assume
  that the makespan is exactly $2m+1$. We start by proving that at any
  step of the algorithm, at most three of the $N_i$ nodes
  are being processed. By contradiction, assume that four (or more)
  such nodes $N_{i_s}, N_{j_s}, N_{k_s}, N_{l_s}$ are processed during
  a certain step $s$. We recall that $a_i>B/4$ so that
  $a_{i_s}+a_{j_s}+a_{k_s}+a_{l_s} > B$ and thus
  $a_{i_s}+a_{j_s}+a_{k_s}+a_{l_s} \geq B+1$. The memory needed at
  this step is thus at least $(B+1)3m$ for the children of the nodes
  $N_{i_s}$, $N_{j_s}$, $N_{k_s}$, and $N_{l_s}$ and $4$ for the nodes
  themselves, hence a total of at least $(B+1)3m+4$, which is more
  than the prescribed bound $B_{\mathit{mem}}$. Thus, at most three of
  $N_i$ nodes are processed at any step.  In the considered schedule,
  the root node is processed at step $2m+1$. Then, at step $2m$, some
  of the $N_i$ nodes are processed, and at most three of them from
  what precedes. The $a_i$'s corresponding to those nodes make the
  first subset $S_1$. Then all the nodes $L_x^j$ such that $a_j \in
  S_1$ must have been processed at the latest at step $2m-1$, and they
  occupy a memory footprint of $3m\sum_{a_j \in S_1} a_j$ at steps
  $2m-1$ and $2m$. Let us assume that a node $N_k$ is processed at
  step $2m-1$. For the memory bound $B_{\mathit{mem}}$ to be
  satisfied we must have $a_k + \sum_{a_j \in S_1} a_j \leq
  B$. (Otherwise, we would need a memory of at least $3m(B+1)$ for the
  involved $L_x^j$ nodes plus 1 for the node $N_k$). Therefore, node
  $N_k$ can as well be processed at step $2m$ instead of step $2m-1$.
  We then modify the
  schedule so as to schedule $N_k$ at step $2m$ and thus we add $k$ to
  $S_1$. We can therefore assume, without loss of generality, that no
  $N_i$ node is processed at step $2m-1$. Then, at step $2m-1$ only
  the children of the $N_j$ nodes with $a_j \in S_1$ are processed,
  and all of them are. So, none of them have any memory footprint
  before step $2m-1$. We then generalize this analysis: at step $2i$,
  for $1 \leq i \leq m-1$, only some $N_j$ nodes are processed and
  they define a subset $S_i$; at step $2i-1$, for $1 \leq i \leq m-1$,
  are processed exactly the nodes $L_x^k$ that are children of the
  nodes $N_j$ such that $a_j \in S_i$.


  Because of the memory constraint, each of the $m$ subsets of $a_i$'s
  built above sum to at most $B$. Since they contain all $a_i$'s,
  their sum is $mB$. Thus, each subset $S_k$ sums to $B$ and we have
  built a solution for $\mathcal{I}_1$.
\end{proof}

\subsection{Joint minimization of both objectives}

\label{sec:inapprox}

As our problem is NP-complete, it is natural to wonder whether
approximation algorithms can be designed. In this section, we prove
that there does not exist any scheduling algorithm which approximates
both the minimum makespan and the minimum peak memory with constant
factors. This is equivalent to saying that there is no \emph{Zenith}
(also called \emph{simultaneous}) approximation. We first state a
lemma, valid for any tree-shaped task graph, which provides lower
bounds for the makespan of any schedule.

\begin{lemma}\label{le.lower.bounds}
  For any schedule $S$ on $p$ processors with a peak memory $M$, we
  have the two following lower bounds on the makespan $C_{\max}$:
  \begin{align*}
    C_{\max} \ &  \geq \ \frac{1}{p} \sum_{i=1}^n \len{i}\\
    M \times C_{\max} \ &  \geq \ \sum_{i=1}^n \left(\n{i} + \f{i} + \sum_{j \in \Children{i}} \f{j}\right)\len{i}
  \end{align*}
  In the pebble game model, these equations can be written as:
  \begin{align*}
    C_{\max} \ &  \geq \ n/p \\
    M \times C_{\max} \  &\  \geq 2 n-1
  \end{align*}
\end{lemma}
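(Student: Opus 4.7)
The plan is to establish the first inequality by the standard work-area argument and the second by an analogous argument applied to memory-time area, then specialize both to the pebble game model.

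For the first bound, I would observe that each task $i$ occupies a single processor for exactly $\len{i}$ units of time, so the total amount of processor-time consumed by any valid schedule equals $\sum_{i=1}^n \len{i}$. Since only $p$ processors are available at each instant, the total processor-time available in $[0,C_{\max}]$ is at most $p \cdot C_{\max}$, which immediately yields $C_{\max} \geq \frac{1}{p}\sum_i \len{i}$.

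For the second bound, I would consider the instantaneous memory usage $m(t)$ of the schedule $S$. By definition of the memory model, whenever a task $i$ is being executed at time $t$, the memory $m(t)$ must simultaneously hold its execution file, its output file, and the input files coming from each child of $i$, contributing at least $\n{i}+\f{i}+\sum_{j\in\Children{i}}\f{j}$ to $m(t)$. Since distinct simultaneously-running tasks occupy disjoint files (their own execution and output files, and their own children's output files), these contributions add up, so summing the bound over all tasks running at time $t$ gives a lower bound on $m(t)$. Integrating over $[0,C_{\max}]$, the right-hand side becomes $\sum_{i=1}^n \len{i}\bigl(\n{i}+\f{i}+\sum_{j\in\Children{i}}\f{j}\bigr)$, while the left-hand side is at most $M \cdot C_{\max}$ because $m(t) \leq M$ everywhere. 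This yields the second inequality.

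Finally, to specialize to the pebble game model, I would substitute $\n{i}=0$, $\f{i}=1$, $\len{i}=1$: the first bound becomes $C_{\max}\geq n/p$, and the right-hand side of the second bound becomes $\sum_{i=1}^n \bigl(1+|\Children{i}|\bigr)$. Since the tree has $n-1$ edges and $\sum_{i=1}^n |\Children{i}|$ simply counts edges (each non-root node is a child of exactly one parent), this sum equals $n + (n-1) = 2n-1$. The only mildly subtle step is the additivity claim for $m(t)$ when multiple tasks run concurrently; this is where I would be careful, but it follows directly from the fact that each file belongs to a unique task and is counted in the memory footprint of at most one running task at each instant.
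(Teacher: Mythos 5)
Your proof is correct and follows essentially the same route as the paper: the classical processor-area bound for the first inequality, and the memory-time area (average memory) argument for the second, with the same edge-counting specialization to the pebble game. The only difference is that you explicitly justify the additivity of footprints of simultaneously running tasks (via disjointness of their files), a step the paper leaves implicit.
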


\begin{proof}
  The first inequality is a classical bound that states that all tasks
  must be processed before  $C_{\max}$.


  Similarly, each task $i$ uses a memory of $\n{i} + \f{i} + \sum_{j
    \in \Children{i}} \f{j}$ during a time $\len{i}$. Hence, the total
   memory usage (i.e., the sum over all time instants $t$ of the memory
   used by $S$ at time $t$) needs to be at least equal to $\sum_{i=1}^n
  \left(\n{i} + \f{i} + \sum_{j \in \Children{i}}
    \f{j}\right)\len{i}$. Because $S$ uses a memory that is not larger
  than $M$ at any time, the total memory usage is upper bounded by
  $M\times C_{\max}$. This gives us the second inequality.  In the
  pebble game model, the right-hand-side term of the second inequality
  can be simplified:
  $$\sum_{i=1}^n \left(\n{i} + \f{i} + \sum_{j \in
      \Children{i}} \f{j}\right)\len{i}
  = \left(\sum_{i=1}^n \f{i}
  \right) + \left(\sum_{i=1}^n\sum_{j \in \Children{i}} \f{j}\right)
  = n + (n-1)$$
\end{proof}


In the next theorem, we show that it is not possible
to design an algorithm with constant approximation ratios for both
makespan and maximum memory, \textit{i.e.} approximation ratios independent of
the number of processors $p$. In Theorem~\ref{thm.newalphabeta}, we
will provide a refined version which analyzes the dependence on $p$.

\begin{theorem}\label{thm.shared.inapprox}
  For any given constants $\alpha$ and $\beta$, there does not exist
  any algorithm for the pebble game model that is both an
  $\alpha$-approximation for makespan minimization and a
  $\beta$-approximation for peak memory minimization when scheduling
  in-tree task graphs.
\end{theorem}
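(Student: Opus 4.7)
The plan is to exhibit a family of trees for which both $M_{\mathit{opt}}$ and $C_{\max,\mathit{opt}}$ are logarithmic in the size $n$, while Lemma~\ref{le.lower.bounds} forces any schedule to satisfy $M \cdot C_{\max} = \Omega(n)$. The gap between the two quantities then rules out any pair $(\alpha,\beta)$ of constants that would work simultaneously.

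For each integer $h \geq 1$, I would take $T_h$ to be the complete binary tree of height $h$, which has $n = 2^{h+1} - 1$ nodes, and I would assume that the number of processors satisfies $p \geq 2^h$, so that an entire layer of $T_h$ can be fired in a single time step. The two optima are then easy to bound. A level-by-level bottom-up schedule fires all $2^\ell$ tasks of level $\ell$ in one step, giving $C_{\max,\mathit{opt}}(T_h) \leq h+1$. For the memory, a postorder depth-first traversal (process the left subtree completely, then the right subtree while keeping one pebble on the left subtree's root, then the root) yields, by induction on $h$, a peak memory $M(h) \leq \max\{1 + M(h-1),\, 3\}$, starting from $M(1) = 3$. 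A straightforward induction gives $M_{\mathit{opt}}(T_h) \leq h+2$.

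Third, I would combine these bounds with the second inequality of Lemma~\ref{le.lower.bounds}, applied in the pebble game model: every schedule of $T_h$ satisfies $M \cdot C_{\max} \geq 2n - 1 = 2^{h+2} - 3$. Suppose for contradiction that some algorithm always returns a schedule with $C_{\max} \leq \alpha\, C_{\max,\mathit{opt}}$ and $M \leq \beta\, M_{\mathit{opt}}$. Specializing to $T_h$,
\[
\alpha\beta\,(h+1)(h+2) \; \geq \; M \cdot C_{\max} \; \geq \; 2^{h+2}-3.
\]
The left-hand side grows polynomially in $h$, the right-hand side exponentially, so for $h$ large enough (depending on $\alpha$ and $\beta$) the inequality fails, yielding the contradiction.

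The real work, and the place where I would spend the most care, is the upper bound on $M_{\mathit{opt}}(T_h)$: one has to be careful that the pebble game charges both the pebbles on the children of the node currently being fired and the new pebble placed on that node, which is why the recurrence bottoms out at $M(1) = 3$ rather than $2$. Everything else is either a standard depth/width argument or a direct application of Lemma~\ref{le.lower.bounds}.
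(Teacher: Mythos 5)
Your proposal is correct and takes essentially the same approach as the paper: both apply the product bound $M \cdot C_{\max} \geq 2n-1$ of Lemma~\ref{le.lower.bounds} to a tree family whose optimal makespan and optimal peak memory have product $o(n)$, then let the size grow to contradict constant ratios $\alpha,\beta$. The only difference is the witness family — you use complete binary trees with $p \geq 2^h$ processors (both optima $\Theta(\log n)$, and your postorder memory bound $h+2$ is correct), whereas the paper uses a two-level tree with $m$ internal nodes of $m$ leaves each (optimal makespan $3$, optimal memory $2m$); both choices make the same argument go through.
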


\begin{figure*}
  \centering
  \scalebox{0.7}{
  \begin{tikzpicture}[child anchor = north, label distance = -2pt]
    \tikzstyle{every node} = [circle, fill=black, draw]
    \tikzstyle{every label} = [font=\Large,fill=none,draw=none]
    \node (root) [label=above:root]{}
    [sibling distance=40mm, level distance=15mm]
    child{  node[label=left:$a_1$]{}
      [sibling distance=10mm, level distance=15mm]
      child{node[label=below:$b_{1,1}$]{}}
      child{node[label=below:$b_{1,2}$]{}}
      child{node[draw=none, fill=white]{...} edge from parent [draw=none]}
      child{node[label=below:$b_{1,m}$]{}}
    }
    child{  node[label=left:$a_2$]{}
      [sibling distance=10mm, level distance=15mm]
      child{node[label=below:$b_{2,1}$]{}}
      child{node[label=below:$b_{2,2}$]{}}
      child{node[draw=none, fill=white]{...} edge from parent [draw=none]}
      child{node[label=below:$b_{2,m}$]{}}
    }
    child{ node[draw=none, fill=white]{\Huge \bfseries \ldots } edge from parent [draw=none] }
    child {
      node[label=left:$a_m$]{}
      [sibling distance=10mm, level distance=15mm]
      child{node[label=below:$b_{m,1}$]{}}
      child{node[label=below:$b_{m,2}$]{}}
      child{node[draw=none, fill=white]{...} edge from parent [draw=none]}
      child{node[label=below:$b_{m,m}$]{}}
    };
  \end{tikzpicture}}
  \caption{Tree used for establishing Theorem~\ref{thm.shared.inapprox}.}
  \label{fig.new-proof-alpha-beta}
\end{figure*}
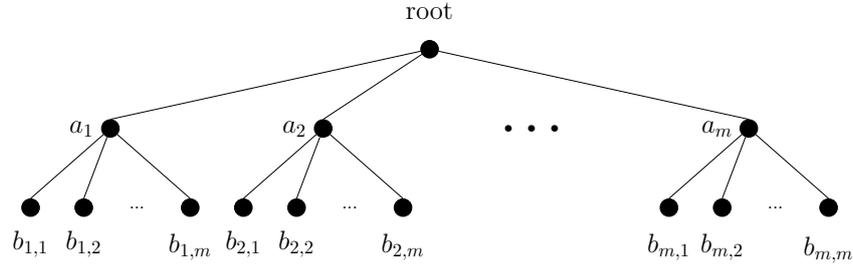

\begin{proof}
  We consider in this proof the tree depicted in
  Figure~\ref{fig.new-proof-alpha-beta}. The root of this tree has $m$
  children $a_1,\ldots,a_m$. Any of these children, $a_i$, has $m$
  children $b_{i,1},\ldots, b_{i,m}$. Therefore, overall this tree
  contains $n= 1 + m + m\times m$ nodes. On the one hand, with a large
  number of processors (namely, $m^2$), this tree can be processed in
  $C^*_{\max}=3$ time steps: all the leaves are processed in the first
  step, all the $a_i$ nodes in the second step, and finally the root
  in the third and last step. On the other hand, the minimum memory
  required to process the tree is $M^* = 2m$. This is achieved by
  processing the tree with a single processor. The subtrees rooted at
  the $a_i$'s are processed one at a time. The processing of the
  subtree rooted at node $a_i$ requires a memory of $m+1$ (for the
  processing of its root $a_i$ once the $m$ leaves have been
  processed). Once such a subtree is processed there is a unit file
  that remains in memory. Hence, the peak memory usage when processing
  the $j$-th of these subtrees is $(j-1)+(m+1)=j+m$. The overall peak
  $M^* = 2m$ is thus reached when processing the root of the last of
  these subtrees.

  Let us assume that there exists a schedule $S$ which is both an
  $\alpha$-approximation for the makespan and a $\beta$-approximation
  for the peak memory. Then, for the tree of
  Figure~\ref{fig.new-proof-alpha-beta}, the makespan $C_{\max}$ of
  $S$ is at most equal to $3\alpha$, and its peak memory $M$ is at
  most equal to $2 \beta m$. Because $n = 1 + m + m^2$,
  Lemma~\ref{le.lower.bounds} implies that $M\times C_{\max} \geq
  2n-1=2m^2+2m+1$. Therefore $M \geq \frac{2m^2+
    2m+1}{3\alpha}$. For a sufficiently large value of $m$, this is
  larger than $2 \beta m$, the upper bound on $M$. This contradicts
  the hypothesis that $S$ is a $\beta$-approximation for peak memory
  usage.
\end{proof}


Theorem~\ref{thm.shared.inapprox} only considers approximation
algorithms whose approximation ratios are constant. In the next
theorem we consider algorithms whose approximations ratios may depend
on the number of processors in the platform.

\begin{theorem}\label{thm.newalphabeta}
  When scheduling in-tree task graphs in the pebble-game model on a
  platform with $p \geq 2$ processors, there does not exist any
  algorithm that is both an $\alpha(p)$-approximation for makespan
  minimization and a $\beta(p)$-approximation for peak memory
  minimization, with
  $$ \alpha(p)\beta(p) \ < \ \frac{2p}{\lceil\log(p)\rceil+2}\cdot$$
\end{theorem}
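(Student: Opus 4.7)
The plan is to exhibit, for each $p$, a parametric family of trees $T^{(p,\ell)}$ for which Lemma~\ref{le.lower.bounds} yields a lower bound on $\alpha(p)\beta(p)$ that tends to $\frac{2p}{\lceil\log p\rceil+2}$ as $\ell\to\infty$. Since $\alpha(p)$ and $\beta(p)$ are constants once $p$ is fixed, it suffices to rule out the strict inequality $\alpha(p)\beta(p)<\frac{2p}{\lceil\log p\rceil+2}$ by driving $\ell$ to infinity.

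Concretely, assuming $p=2^{k}$ for simplicity, I would take $T^{(p,\ell)}$ to be a perfect binary tree of depth $k=\lceil\log p\rceil$ in which each of the $p$ binary leaves is extended downward by a chain of $\ell-1$ additional nodes, so that every binary leaf becomes the top of a downward chain of length $\ell$. This tree has $n=p(\ell+1)-1$ nodes and depth $k+\ell-1$. First I would compute $M^{*}$ via Liu's optimal sequential algorithm: a chain of length $\ell\geq 2$ has pebble-game number $2$ and produces one output pebble, and propagating upwards through the binary tree using the recurrence $M^{*}(v)=\max(M^{*}(T_{1}),\,M^{*}(T_{2})+1,\,3)$ for an internal node with two equally-valued subtrees adds exactly one per level, giving $M^{*}=k+2=\lceil\log p\rceil+2$. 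Since any parallel schedule can be serialized without increasing memory, this is also the optimum on $p$ processors. Next I would compute $C^{*}_{\max}$ on $p$ processors: the depth forces $C^{*}_{\max}\geq k+\ell$, and this is matched by the schedule that assigns one processor to each chain for $\ell$ steps and then processes the binary tree level by level in $k$ further steps.

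Applying Lemma~\ref{le.lower.bounds}, which in the pebble-game model states $M\cdot C_{\max}\geq 2n-1$ for every schedule, yields
\[
\alpha(p)\beta(p)\ \geq\ \frac{2n-1}{M^{*}\,C^{*}_{\max}}\ =\ \frac{2p(\ell+1)-3}{(\lceil\log p\rceil+2)(\ell+\lceil\log p\rceil)},
\]
and the right-hand side tends to $\frac{2p}{\lceil\log p\rceil+2}$ from below as $\ell\to\infty$. Hence any algorithm achieving $\alpha(p)\beta(p)<\frac{2p}{\lceil\log p\rceil+2}$ strictly is refuted by choosing $\ell$ large enough that the right-hand side exceeds $\alpha(p)\beta(p)$. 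The main obstacle is the rigorous calculation of $M^{*}$: one must check that the $p$ output pebbles produced at the chain tops do not inflate the pebble number of the whole tree to $\Theta(p)$, which is precisely where Liu's ordered-subtree recursion must be invoked to show that a sequential processing of the binary tree uses only $\lceil\log p\rceil+2$ pebbles. A secondary, essentially mechanical point is handling values of $p$ not equal to a power of two, which is done by taking a binary part of depth exactly $\lceil\log p\rceil$ with a matching number of chain-leaves so that the same bounds hold and the ceiling in the statement emerges naturally.
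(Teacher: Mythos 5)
Your proposal is correct and follows essentially the same strategy as the paper's proof: a parametric family of trees with a logarithmic-height binary reduction on top and long appendages below (giving $M^*\leq \lceil\log p\rceil+2$, $C^*_{\max}\approx \ell+\lceil\log p\rceil$, $n\approx p\ell$), combined with the product bound of Lemma~\ref{le.lower.bounds} and a limit argument over the appendage length. The only difference is cosmetic: you attach $p$ chains (one processor and sequential peak $2$ each) to a binary top with $p$ leaves, whereas the paper attaches $\lceil p/2\rceil$ comb subtrees (two processors and sequential peak $3$ each) to a binary top with $\lceil p/2\rceil$ leaves; both constructions yield the same constant $\lceil\log p\rceil+2$ and the same limiting ratio.
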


\begin{proof}
  We establish this result by contradiction. We assume that there
  exists an algorithm that is an $\alpha(p)$-approximation for
  makespan minimization and a $\beta(p)$-approximation for peak memory
  minimization when scheduling in-tree task graphs, with
  $\alpha(p)\beta(p) = \frac{2p}{(\lceil\log(p)\rceil+2)}
  -\epsilon$ with $\epsilon > 0$.

  The proof relies on a tree similar to the one depicted in
  Figure~\ref{fig.newalphabeta} for the case $p=13$. The top part of
  the tree is a complete binary subtree with $\lceil\frac{p}{2}\rceil$ leaves,
  $l_1,\ldots,l_{\lceil\frac{p}{2}\rceil}$, and of height
  $m$. Therefore, this subtree contains $2 \lceil\frac{p}{2}\rceil-1$
  nodes, all of its leaves are at depth either $m$ or $m-1$, and
  $m=1+\lceil\log(\lceil\frac{p}{2}\rceil)\rceil=\lceil\log(p)\rceil$.
  To prove the last equality, we consider whether $p$ is even:
  \begin{compactitem}
  \item $p$ is even: $\exists l\in \mathbb{N}, p=2l$. Then,
    $1+\lceil\log(\lceil\frac{p}{2}\rceil)\rceil=
    \lceil\log(2\lceil\frac{2l}{2}\rceil)\rceil= \lceil\log(2l)\rceil=
    \lceil\log(p)\rceil.  $
  \item $p$ is odd: $\exists l\in \mathbb{N}, p=2l+1$. Then,
    $1+\lceil\log(\lceil\frac{p}{2}\rceil)\rceil=
    \lceil\log(2\lceil\frac{2l+1}{2}\rceil)\rceil=
    \lceil\log(2l+2)\rceil$. Since $2l+1$ is odd,
    $\lceil\log(2l+2)\rceil =
    \lceil\log(2l+1)\rceil=\lceil\log(p)\rceil$.
  \end{compactitem}
  Each node $l_i$ is the root of a \emph{comb} subtree of height $k$
  (except the last node if $p$ is odd); each comb subtree contains
  $2k-1$ nodes. If $p$ is odd, the last leaf of the binary top
  subtree, $l_{\lceil\frac{p}{2}\rceil}$, is the root of a chain
  subtree with $k-1$ nodes. Then, the entire tree contains $n=pk-1$
  nodes (be careful not to count twice the roots of the comb
  subtrees):
  \begin{compactitem}
  \item $p$ is even: $\exists l\in \mathbb{N}, p=2l$. Then,
    $$
    n = \left(2 \left\lceil\frac{p}{2}\right\rceil-1\right) + \left(\left\lceil\frac{p}{2}\right\rceil(2k-2) \right)
    = (2l-1) + l(2k-2)=pk-1.
    $$
  \item $p$ is odd: $\exists l\in \mathbb{N}, p=2l+1$. Then,
    \begin{multline*}
      n = \left(2 \left\lceil\frac{p}{2}\right\rceil-1\right) +
      \left(\left(\left\lceil\frac{p}{2}\right\rceil-1\right)(2k-2)
      \right) + (k-2)\\ = (2(l+1)-1) + (l+1-1)(2k-2)+(k-2) =(2l+1)k-1 =pk-1.
    \end{multline*}
  \end{compactitem}

  \newlength{\combleveldist}
  \setlength\combleveldist{6mm} 
  \newlength{\combsibdist}
  \setlength\combsibdist{4mm} 
  \newlength{\combdist}
  \setlength\combdist{10mm} 
  \newlength{\binleveldisttop}
  \newlength{\binleveldistmiddle}
  \newlength{\binleveldistbottom}
  \setlength\binleveldisttop{10mm} 
  \setlength\binleveldistmiddle{15mm} 
  \newlength{\dotdistance}
  \setlength\dotdistance{\binleveldisttop+\binleveldistmiddle} 
  \newlength{\combrootlevel}
  \setlength{\combrootlevel}{10mm+\dotdistance}
  \newlength{\combleaflevel}
  \setlength{\combleaflevel}{10mm+\dotdistance+5\combleveldist}
  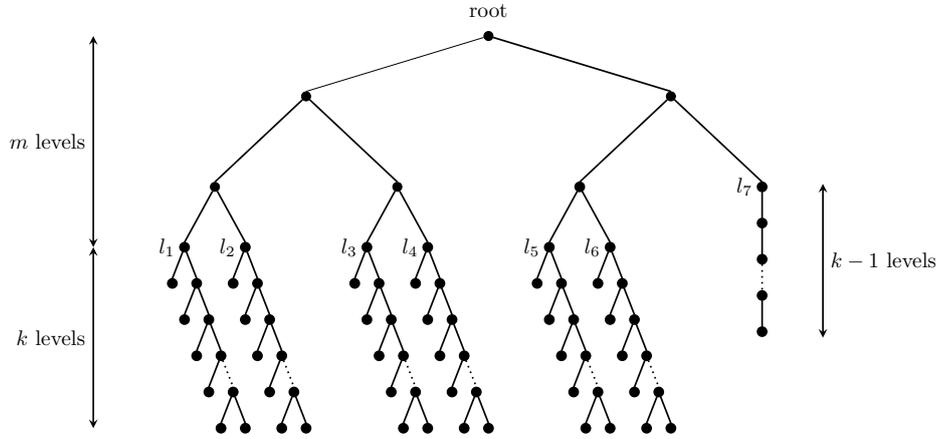
\begin{figure*}
    \centering
    \scalebox{0.8}{
      \begin{tikzpicture}[child anchor = north, label distance = -2pt, rangearrow/.style={thick,<->,>=stealth}]
        \tikzstyle{every node} = [circle, fill=black, draw, inner sep=1.5pt]
        \tikzstyle{every label} = [fill=none, draw=none]
        \node (root) at (0,0) [circle,draw,label=above:root]{}
        [sibling distance=60mm, level distance=\binleveldisttop]
        child{  node[circle, draw]{}
          [sibling distance=30mm, level distance=\binleveldistmiddle]
          child{node[circle, draw]{}
            [sibling distance=\combdist, level distance=\binleveldisttop]
            child{node[circle, solid, draw, label=left:$l_1$]{}
              [sibling distance=\combsibdist, level distance=\combleveldist]
              child{ node {} }
              child{ node {}
                child{ node {} }
                child{ node {}
                  child{ node {} }
                  child{ node {}
                    child{ node {} }
                    child{ node {}
                      child{ node[solid] {} edge from parent[solid]}
                      child{ node[solid] {} edge from parent[solid]}
                      edge from parent[dotted, thick]
                    }
                  }
                }
              }
              edge from parent[solid]}
            child{node[circle, solid, draw, label=left:$l_2$]{}
              [sibling distance=\combsibdist, level distance=\combleveldist]
              child{ node {} }
              child{ node {}
                child{ node {} }
                child{ node {}
                  child{ node {} }
                  child{ node {}
                    child{ node {} }
                    child{ node {}
                      child{ node[solid] {} edge from parent[solid]}
                      child{ node[solid] {} edge from parent[solid]}
                      edge from parent[dotted, thick]
                    }
                  }
                }
              }
              edge from parent[solid]}
            edge from parent [solid, thick]
          }
          child{node[circle, draw]{}
            [sibling distance=\combdist, level distance=\binleveldisttop]
            child{node[circle, solid, draw, label=left:$l_3$]{}
              [sibling distance=\combsibdist, level distance=\combleveldist]
              child{ node {} }
              child{ node {}
                child{ node {} }
                child{ node {}
                  child{ node {} }
                  child{ node {}
                    child{ node {} }
                    child{ node {}
                      child{ node[solid] {} edge from parent[solid]}
                      child{ node[solid] {} edge from parent[solid]}
                      edge from parent[dotted, thick]
                    }
                  }
                }
              }
              edge from parent[solid]}
            child{node[circle, solid, draw, label=left:$l_4$]{}
              [sibling distance=\combsibdist, level distance=\combleveldist]
              child{ node {} }
              child{ node {}
                child{ node {} }
                child{ node {}
                  child{ node {} }
                  child{ node {}
                    child{ node {} }
                    child{ node {}
                      child{ node[solid] {} edge from parent[solid]}
                      child{ node[solid] {} edge from parent[solid]}
                      edge from parent[dotted, thick]
                    }
                  }
                }
              }
              edge from parent[solid]}
            edge from parent [solid, thick]
          }
        }
        child{  node[circle, draw]{}
          [sibling distance=30mm, level distance=\binleveldistmiddle]
          child{node[circle, draw]{}
            [sibling distance=\combdist, level distance=\binleveldisttop]
            child{node[circle, solid, draw, label=left:$l_5$]{}
              [sibling distance=\combsibdist, level distance=\combleveldist]
              child{ node {} }
              child{ node {}
                child{ node {} }
                child{ node {}
                  child{ node {} }
                  child{ node {}
                    child{ node {} }
                    child{ node {}
                      child{ node[solid] {} edge from parent[solid]}
                      child{ node[solid] {} edge from parent[solid]}
                      edge from parent[dotted, thick]
                    }
                  }
                }
              }
              edge from parent[solid]}
            child{node[circle, solid, draw, label=left:$l_6$]{}
              [sibling distance=\combsibdist, level distance=\combleveldist]
              child{ node {} }
              child{ node {}
                child{ node {} }
                child{ node {}
                  child{ node {} }
                  child{ node {}
                    child{ node {} }
                    child{ node {}
                      child{ node[solid] {} edge from parent[solid]}
                      child{ node[solid] {} edge from parent[solid]}
                      edge from parent[dotted, thick]
                    }
                  }
                }
              }
              edge from parent[solid]}
            edge from parent [solid, thick]
          }
          child{node[circle, solid, draw, label=left:$l_7$]{}
              [sibling distance=\combsibdist, level distance=\combleveldist]
              child{ node {}
                child{ node {}
                  child{ node {}
                    child{ node[solid] {} edge from parent[solid]}
                    edge from parent[dotted, thick]
                  }
                }
              }
              edge from parent[solid]}
            edge from parent [solid, thick]
          }
        ;
        \path (-65mm,0mm) edge[rangearrow] node[draw=none, fill=none,        label=left:$m$ levels\ ~] {} (-65mm,-\combrootlevel);
        \path (-65mm,-\combrootlevel) edge[rangearrow] node[draw=none,        fill=none, label=left:$k$ levels\ ~] {} (-65mm,-\combleaflevel);
        \path (55mm,10.5mm-\combrootlevel) edge[rangearrow] node[draw=none,        fill=none, label=right:\ $k-1$ levels] {} (55mm,15mm-\combleaflevel);
      \end{tikzpicture}}
    \caption{Tree used to establish Theorem~\ref{thm.newalphabeta}
      for $p=13$ processors.}
    \label{fig.newalphabeta}
  \end{figure*}

  With the $p$ processors, it is possible to process all comb subtrees
  (and the chain subtree if $p$ is odd) in parallel in $k$ steps by
  using two processors per comb subtree (and one for the chain
  subtree). Then, $m-1$ steps are needed to complete the processing
  of the binary reduction (the $l_i$ nodes have already been processed
  at the last step of the processing of the comb subtrees). Thus, the
  optimal makespan with $p$ processors is $C_{\max}^* =k+m-1$.

  We now compute the optimal peak memory usage, which is obtained with
  a sequential processing. Each comb subtree can be processed with 3
  units of memory, if we follow any postorder traversal starting from
  the deepest leaves.  We consider the sequential processing of the
  entire tree that follows a postorder traversal that process each
  comb subtree as previously described, that process first the
  leftmost comb subtree, then the second leftmost comb subtree, the
  parent node of these subtrees, and so on, and finishes with the
  rightmost comb subtree (or the chain subtree if $p$ is odd). The
  peak memory is reached when processing the last comb subtree. At
  that time, either $m-2$ or $m-1$ edges of the binary subtree are
  stored in memory (depending on the value of
  $\lceil\frac{p}{2}\rceil$). The processing of the last comb subtree
  itself uses 3 units of memory. Hence, the optimal peak memory is not
  greater than $m+2$: $M^* \leq m+2$.

  Let $C_{\max}$ denote the makespan achieved by the studied algorithm
  on the tree, and let $M$ denote its peak memory usage. By
  definition, the studied algorithm is an $\alpha(p)$-approximation for
  the makespan: $C_{\max} \leq \alpha(p) C_{\max}^*$. Thanks to
  Lemma~\ref{le.lower.bounds}, we know that
  $$
  M \times C_{\max}  \geq  2 n-1 = 2pk -3.
  $$
  Therefore,
  $$
  M \geq \frac{2pk-3}{C_{\max}} \geq \frac{2pk-3}{\alpha(p)(k+m-1)}.
  $$
  The approximation ratio of the studied algorithm with respect to the
  peak memory usage is, thus, bounded by:
  $$
  \beta(p) \geq \frac{M}{M^*} \geq \frac{2pk-3}{\alpha(p)  (k+m-1)(m+2)}\cdot
  $$
  Therefore, if we recall that $m=\lceil \log(p) \rceil$,
  $$
  \alpha(p)\beta(p) \geq \frac{2pk-3}{(k+m-1)(m+2)}  =
  \frac{2pk-3}{(k+\lceil\log(p)\rceil-1)(\lceil\log(p)\rceil+2)}
  \xrightarrow[k\to \infty]{}
  \frac{2p}{\lceil\log(p)\rceil+2} \cdot
  $$
  Hence, there exists a value $k_0$ such that, for any $k \geq
  k_0$,
  $$
  \alpha(p)\beta(p) \geq   \frac{2p}{\lceil\log(p)\rceil+2} -\frac{\epsilon}{2}\cdot
  $$
  This contradicts the definition of $\epsilon$ and, hence, concludes the proof.
\end{proof}

Readers may wonder whether the bound in Theorem~\ref{thm.newalphabeta}
is tight. This interrogation is especially relevant because the proof
of Theorem~\ref{thm.newalphabeta} uses the average memory usage as a
lower bound to the peak memory usage. This technique enables to design
a simple proof, which may however be very crude. In fact, in the
special case where $\alpha(p)=1$, that is, for makespan-optimal
algorithms, a stronger result holds. For that case,
Theorem~\ref{thm.newalphabeta} states that $\beta(p) \geq
\frac{2p}{\lceil\log(p)\rceil+2}$. Theorem~\ref{thm.shared.inapprox_chains}
below states that $\beta(p) \geq p-1$ (which is a stronger bound when $p
\geq 4$). This result is established through a careful, painstaking
analysis of a particular task graph. Using the average memory usage
argument on this task graph would not enable to obtain a non-trivial
bound.

\begin{theorem}\label{thm.shared.inapprox_chains}
  There does not exist any algorithm that is both optimal for makespan
  minimization and that is a $(p-1-\epsilon)$-approximation algorithm
  for the peak memory minimization, where $p$ is the number of
  processors and $\epsilon>0$.
\end{theorem}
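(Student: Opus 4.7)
The plan is to argue by contradiction, exhibiting a parameterized family of trees for which any makespan-optimal schedule is forced to use close to $p$ times the sequentially optimal peak memory. Fix $\epsilon>0$ and suppose, for contradiction, that some algorithm $A$ is makespan-optimal and a $(p-1-\epsilon)$-approximation for peak memory. For an integer $m$ to be fixed later, consider the tree $T_m$ whose root $r$ has $p$ children $v_1,\ldots,v_p$, each having $m-1$ leaf children; this tree contains $pm+1$ tasks and has critical path length $3$.

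First I would determine $C^*_{\max}$ and bound $M^*$. Lemma~\ref{le.lower.bounds} gives $C^*_{\max}\geq\lceil(pm+1)/p\rceil=m+1$, and the schedule that processes $p$ leaves per step during steps $1,\ldots,m-1$, then all $p$ inner nodes in parallel at step $m$, then the root at step $m+1$ achieves this bound. For the peak memory, processing the $p$ subtrees sequentially one at a time (each requiring $m$ memory internally and leaving one pebble behind) gives $M^*\leq m+p-1$.

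The crucial step is the following lemma: in \emph{any} schedule of makespan $m+1$, all $p$ inner nodes $v_i$ must be processed at step $m$. Indeed, let $t_i$ denote the step at which $v_i$ is processed. Since $t_i\leq m$ and each leaf of subtree $i$ must be scheduled strictly before $v_i$, every leaf lies in steps $1,\ldots,m-1$. Writing $I_t$ for the number of inner nodes at step $t$, the total number of non-inner-node slots available at steps $1,\ldots,m-1$ is $p(m-1)-\sum_{t\leq m-1}I_t$, which must accommodate the $p(m-1)$ leaves; this forces $\sum_{t\leq m-1}I_t=0$, i.e.\ no inner node appears before step $m$. Moreover, at least one $v_i$ must occupy step $m$: otherwise, step $m$ would leave all $p$ processors idle, and combined with the $p-1$ idle slots expected at step $m+1$ this would exceed the global idle budget of $p(m+1)-(pm+1)=p-1$. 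Hence $t_1=\cdots=t_p=m$.

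Processing all $p$ inner nodes simultaneously at step $m$ requires memory at least $pm$, since each $v_i$ needs its $m-1$ leaf inputs plus one unit for its own output. Therefore the memory $M$ achieved by $A$ satisfies $M\geq pm$, and $M/M^*\geq pm/(m+p-1)\to p$ as $m\to\infty$. Choosing $m$ large enough so that $pm/(m+p-1)>p-1-\epsilon$ (for instance, any $m>(p-1)(p-1-\epsilon)/(1+\epsilon)$) then contradicts the assumed approximation ratio of $A$. The main obstacle is the key lemma pinning all inner nodes to step $m$: as the paper hints, a bound derived from the average-memory-usage inequality of Lemma~\ref{le.lower.bounds} (as used in Theorem~\ref{thm.newalphabeta}) only yields ratios of order $p/\log p$ here, and it is precisely the tight slot-counting argument comparing leaves against inner nodes that pushes the lower bound up to $p-1$.
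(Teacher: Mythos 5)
Your proof is sound, and it takes a genuinely different route from the paper's. The paper builds a much more intricate instance---the root has $p-1$ subtrees, each combining a backbone of length $\delta$ with attached stars and a pendant chain of length $k$---and, before it can bound memory, it must normalize the whole class of makespan-optimal schedules via an exchange argument (the children of a star are processed contiguously); only then does it exhibit a time slot where memory is at least $(p-1)\delta+p+2$ against an optimum of $\delta+p-1$, a ratio tending to $p-1$. Your fork-of-forks tree instead has zero slack: the root must run alone, the total work is $pm+1$, so makespan $m+1$ forces all processors to be busy on non-root tasks throughout $[0,m]$, the $p(m-1)$ leaves exactly saturate $[0,m-1]$, and therefore \emph{every} makespan-optimal schedule runs all $p$ inner nodes concurrently in $[m-1,m]$. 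No normalization or exchange argument is needed, the forced memory is $pm$ against an optimum of $m+p-1$, and the ratio tends to $p$ rather than $p-1$---slightly stronger than the statement requires. Two small points should be tightened. First, Lemma~\ref{le.lower.bounds} only yields $C_{\max}\geq m+1/p$, and rounding up is not justified for schedules with non-integer start times; the clean argument for $C^*_{\max}=m+1$ is that all $pm$ non-root tasks precede the root, so the root cannot start before time $m$. Second, your slot-counting tacitly assumes the schedule is aligned on unit steps; this follows because the idle budget $p(m+1)-(pm+1)=p-1$ is entirely consumed while the root runs alone, so there is no idle time in $[0,m]$ and unit tasks tile it at integer boundaries---you invoke this idle-budget fact only later, in the redundant ``at least one $v_i$ at step $m$'' remark---or, more simply, replace slot-counting by a work-versus-capacity argument on $[0,m-1]$ (leaf work $p(m-1)$ equals the capacity of that interval), which needs no alignment at all.
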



\newcommand{\cp}[2]{\ensuremath{cp^{#1}_{#2}}}
\newcommand{\nd}[2]{\ensuremath{d^{#1}_{#2}}}
\newcommand{\lf}[3]{\ensuremath{a^{#1,#2}_{#3}}}
\newcommand{\cc}[2]{\ensuremath{c^{#1}_{#2}}}
\newcommand{\nwidgets}{\ensuremath{p-1}}

  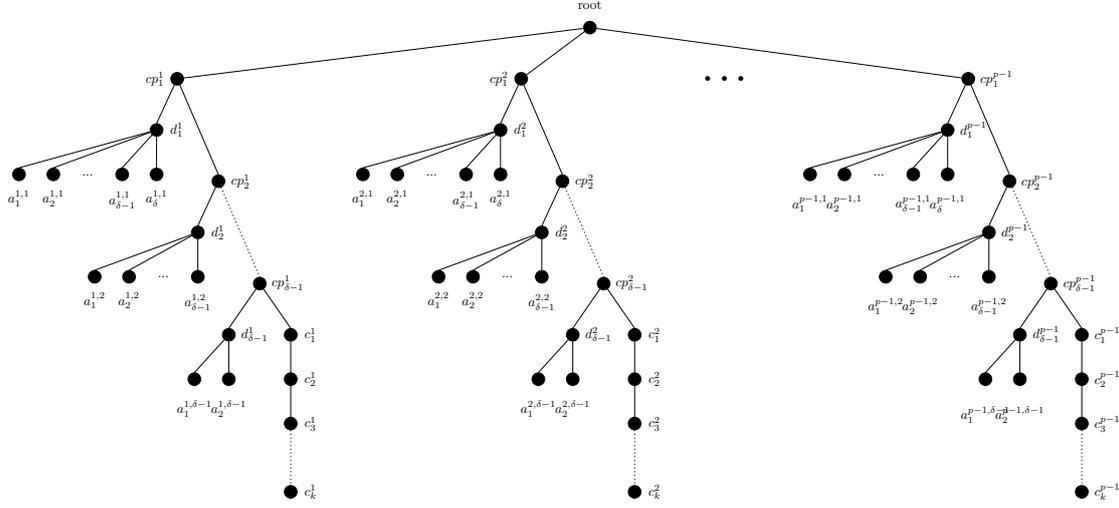
\begin{figure}
    \centering
    \resizebox{\textwidth}{!}{\begin{tikzpicture}[scale=0.9, child anchor = north, label distance = -2pt]
  \tikzstyle{every node} = [circle, fill=black, draw]
  \node (root) [label=above:root]{}
  --   (-120mm,-15mm) 
     node[label=left:\cp{1}{1}]{}
    [sibling distance=12mm, level distance=15mm]
    child{
      node[label=right:\nd{1}{1}]{} 
      [grow via three points={one child at (0,-1.3) and two children at (0,-1.3) and (-1,-1.3)}]
      child{node[label=below:\lf{1}{1}{\delta}]{}}
      child{node[label=below:\lf{1}{1}{\delta-1}]{}}
      child{node[draw=none, fill=white]{...} edge from parent [draw=none]}
      child{node[label=below:\lf{1}{1}{2}]{}}
      child{node[label=below:\lf{1}{1}{1}]{}}
    }
    child[level distance=30mm,sibling distance=24mm]{ 
      node[label=right:\cp{1}{2}]{}
      [sibling distance=12mm, level distance=15mm]
      child{
        node[label=right:\nd{1}{2}]{} 
        [grow via three points={one child at (0,-1.3) and two children at (0,-1.3) and (-1,-1.3)},sibling distance=10mm]  
        child{node[label=below:\lf{1}{2}{\delta-1}]{}}
        child{node[draw=none,fill=white]{...} edge from parent [draw=none]}
        child{node[label=below:\lf{1}{2}{2}]{}}
        child{node[label=below:\lf{1}{2}{1}]{}}
      }      
      child[level distance=30mm,sibling distance=24mm]{
        node[label=right:\cp{1}{ \delta-1}]{}    edge from parent[dotted,thick] 
        [level distance=15mm,sibling distance=18mm]
        child[solid,thin]{
          node[label=right:\nd{1}{{\delta-1}}]{} 
          [grow via three points={one child at (0,-1.3) and two children at (0,-1.3) and (-1,-1.3)},sibling distance=10mm]
          child{node[label=below:\lf{1}{\delta-1}{2}]{}}
          child{node[label=below:\lf{1}{\delta-1}{1}]{}}
        }
        child[thin,solid]{
          node[label=right:\cc{1}{1}]{} edge from parent[solid]
          [sibling distance = 10mm, level distance = 13mm]
          child{
            node[label=right:\cc{1}{2}]{}
            child{
              node[label=right:\cc{1}{3}]{}
              child[level distance=20mm]{node [label=right:\cc{1}{k}] {} edge from parent[dotted,thick]}
            }
          }
        }
      }
    }
    edge (root)
  --   (-20mm,-15mm) 
    node[label=left:\cp{2}{1}]{}
    [sibling distance=12mm, level distance=15mm]
    child{
      node[label=right:\nd{2}{1}]{} 
      [grow via three points={one child at (0,-1.3) and two children at (0,-1.3) and (-1,-1.3)}]
      child{node[label=below:\lf{2}{1}{\delta}]{}}
      child{node[label=below:\lf{2}{1}{\delta-1}]{}}
      child{node[draw=none, fill=white]{...} edge from parent [draw=none]}
      child{node[label=below:\lf{2}{1}{2}]{}}
      child{node[label=below:\lf{2}{1}{1}]{}}
    }
    child[level distance=30mm,sibling distance=24mm]{ 
      node[label=right:\cp{2}{2}]{}
      [sibling distance=12mm, level distance=15mm]
      child{
        node[label=right:\nd{2}{2}]{} 
        [grow via three points={one child at (0,-1.3) and two children at (0,-1.3) and (-1,-1.3)},sibling distance=10mm]  
        child{node[label=below:\lf{2}{2}{\delta-1}]{}}
        child{node[draw=none,fill=white]{...} edge from parent [draw=none]}
        child{node[label=below:\lf{2}{2}{2}]{}}
        child{node[label=below:\lf{2}{2}{1}]{}}
      }      
      child[level distance=30mm,sibling distance=24mm]{
        node[label=right:\cp{2}{ \delta-1}]{}    edge from parent[dotted,thick] 
        [level distance=15mm,sibling distance=18mm]
        child[solid,thin]{
          node[label=right:\nd{2}{{\delta-1}}]{} 
          [grow via three points={one child at (0,-1.3) and two children at (0,-1.3) and (-1,-1.3)},sibling distance=10mm]
          child{node[label=below:\lf{2}{\delta-1}{2}]{}}
          child{node[label=below:\lf{2}{\delta-1}{1}]{}}
        }
        child[thin,solid]{
          node[label=right:\cc{2}{1}]{} edge from parent[solid]
          [sibling distance = 10mm, level distance = 13mm]
          child{
            node[label=right:\cc{2}{2}]{}
            child{
              node[label=right:\cc{2}{3}]{}
              child[level distance=20mm]{node [label=right:\cc{2}{k}] {} edge from parent[dotted,thick]}
            }
          }
        }
      }
    }
edge (root)
  --   (40mm,-15mm) 
 node[draw=none, fill=white]{\Huge \bfseries \ldots } 
  -- (110mm,-15mm) 
    node[label=right:\cp{\nwidgets}{1}]{}
    [sibling distance=12mm, level distance=15mm]
    child{
      node[label=right:\nd{\nwidgets}{1}]{} 
      [grow via three points={one child at (0,-1.3) and two children at (0,-1.3) and (-1,-1.3)}]
      child{node[label=below:\lf{\nwidgets}{1}{\delta}]{}}
      child{node[label=below:\lf{\nwidgets}{1}{\delta-1}]{}}
      child{node[draw=none, fill=white]{...} edge from parent [draw=none]}
      child{node[label=below:\lf{\nwidgets}{1}{2}]{}}
      child{node[label=below:\lf{\nwidgets}{1}{1}]{}}
    }
    child[level distance=30mm,sibling distance=24mm]{ 
      node[label=right:\cp{\nwidgets}{2}]{}
      [sibling distance=12mm, level distance=15mm]
      child{
        node[label=right:\nd{\nwidgets}{2}]{} 
        [grow via three points={one child at (0,-1.3) and two children at (0,-1.3) and (-1,-1.3)},sibling distance=10mm]  
        child{node[label=below:\lf{\nwidgets}{2}{\delta-1}]{}}
        child{node[draw=none,fill=white]{...} edge from parent [draw=none]}
        child{node[label=below:\lf{\nwidgets}{2}{2}]{}}
        child{node[label=below:\lf{\nwidgets}{2}{1}]{}}
      }      
      child[level distance=30mm,sibling distance=24mm]{
        node[label=right:\cp{\nwidgets}{ \delta-1}]{}    edge from parent[dotted,thick] 
        [level distance=15mm,sibling distance=18mm]
        child[solid,thin]{
          node[label=right:\nd{\nwidgets}{{\delta-1}}]{} 
          [grow via three points={one child at (0,-1.3) and two children at (0,-1.3) and (-1,-1.3)},sibling distance=10mm]
          child{node[label=below:\lf{\nwidgets}{\delta-1}{2}]{}}
          child{node[label=below:\lf{\nwidgets}{\delta-1}{1}]{}}
        }
        child[thin,solid]{
          node[label=right:\cc{\nwidgets}{1}]{} edge from parent[solid]
          [sibling distance = 10mm, level distance = 13mm]
          child{
            node[label=right:\cc{\nwidgets}{2}]{}
            child{
              node[label=right:\cc{\nwidgets}{3}]{}
              child[level distance=20mm]{node [label=right:\cc{\nwidgets}{k}] {} edge from parent[dotted,thick]}
            }
          }
        }
      }
    }
edge (root);
\end{tikzpicture}}
    \caption{Tree used for establishing
      Theorem~\ref{thm.shared.inapprox_chains}.}
    \label{fig:tree_inapprox_shared_chains}
  \end{figure}

\begin{proof}
  To establish this result, we
  proceed by contradiction. Let $p$ be the number of processors.  We then assume that there exists an
  algorithm $\mathcal{A}$ which is optimal for makespan minimization
  and which is a $\beta(p)$-approximation for peak memory minimization,
  with $\beta(p) < p-1$. So, there exists $\epsilon > 0$ such that
  $\beta(p) = p - 1 -\epsilon$.

  \textbf{The tree.} Figure~\ref{fig:tree_inapprox_shared_chains}
  presents the tree used to derive a contradiction. This tree is made
  of $p-1$ identical subtrees whose roots are the children of the tree
  root. The value of $\delta$ will be fixed later on.

  \textbf{Optimal peak memory.}  A memory-optimal sequential schedule
  processes each subtree rooted at $\cp{i}{1}$ sequentially. Each of
  these subtrees can be processed with a memory of $\delta+1$ by
  processing first the subtree rooted at $\nd{i}{1}$, then the one
  rooted at $\nd {i}{2}$, etc., until the one rooted at
  $\nd {i}{\delta-1}$, and then the chain of $\cc {i}{j}$ nodes, and the
  remaining $\cp{i}{j}$ nodes. The peak memory, reached when
  processing the last subtree rooted at $\cp{i}{1}$, is thus $\delta+p-1$.

  \textbf{Optimal execution time.} The optimal execution time with $p$
  processors is at least equal to the length of the critical path. The
  critical path has a length of $\delta+k$, which is the length of the
  path from the root to any $\cc{i}{k}$ node, with $1 \leq i \leq
  p-1$. We now define $k$ for this lower bound to be an achievable
  makespan, with an overall schedule as follows:
  \begin{compactitem}
  \item Each of the first $p-1$ processors processes one of the
    critical paths from end to end (except obviously for the root node
    that will only be processed by one of them).
  \item The last processor processes all the other nodes. We define
    $k$ so that this processor finishes processing all the nodes it is
    allocated at time $k+\delta-2$. This allows, the other processors
    to process all $p-1$ nodes $\cp{1}{1}$ through $\cp{p-1}{1}$ from
    time $k+\delta-2$ to time $k+\delta-1$.
  \end{compactitem}
  In order to find such a value for $k$, we need to compute the number
  of nodes allocated to the last processor. In the subtree rooted in
  \cp{i}{1}, the last processor is in charge of processing the
  $\delta-1$ nodes \nd{i}{1} through \nd{i}{\delta-1}, and the
  descendants of the \nd{i}{j} nodes, for $1 \leq j \leq \delta-1$. As
  node \nd{i}{j} has $\delta-j+1$ descendants, the number of nodes in
  the subtree rooted in \cp{i}{1} that are allocated to the last
  processor is equal to:
  $$
  \displaystyle  (\delta-1) + \sum_{j=1}^{\delta-1} (\delta-j+1) =
  \displaystyle  \delta-2 + \frac{\delta(\delta+1)}{2} =
  \displaystyle  \frac{\delta^2+3\delta-4}{2} =
  \displaystyle  \frac{(\delta+4)(\delta-1)}{2}.
  $$
  All together, the last processor is in charge of the processing of
  $(p-1) \frac{(\delta+4)(\delta-1)}{2}$ nodes. As we have stated
  above, we want this processor to be busy from time 0 to time
  $k+\delta-2$. This gives the value of $k$:
  $$
  k+\delta-2 = (p-1)\frac{(\delta+4)(\delta-1)}{2} \quad \Leftrightarrow
  \quad k = \frac{(p-1)\delta^2+(3p-5)\delta+4(2-p)}{2}.
  $$
  Remark, by looking at the first equality, that the expression on the
  right-hand side of the second equality is always an integer;
  therefore, $k$ is well defined.

  To conclude that the optimal makespan with $p$ processors is
  $k+\delta-1$ we just need to provide an explicit schedule for the last
  processor. This processor processes all its allocated nodes, except
  node \nd{1}{1}, and nodes \lf{1}{1}{1} through \lf{1}{1}{\delta-3},
  in any order between the time 0 and $k$. Then, between time $k$ and
  $k+\delta-2$, it processes the remaining \lf{1}{1}{j} nodes and then
  node \nd{1}{1}.

  \textbf{Lower bound on the peak memory usage.} We first remark that,
  by construction, under any makespan-optimal algorithm the $p-1$
  nodes \cc{i}{j} are processed during the time interval $[k-j,
  k-j+1]$. Similarly, the $p-1$ nodes \cp{i}{j} are processed during
  the time interval $[k+\delta-j-1, k+\delta-j]$. Without loss of
  generality, we can assume that processor $P_i$, for $1 \leq i \leq
  p-1$, processes nodes \cc{i}{k} through \cc{i}{1} and then nodes
  \cp{i}{\delta-1} through \cp{i}{1} from time 0 to time
  $k+\delta-1$. The processor $P_p$ processes the other nodes. The
  only freedom an algorithm has is in the order processor $P_p$ is
  processing its allocated nodes.

  To establish a lower bound, we consider the class of schedules which
  are makespan-optimal for the studied tree, whose peak memory usage
  is minimum among makespan-optimal schedules, and which satisfy the
  additional properties we just defined. We first show that, without
  loss of generality, we can further restrict our study to schedules
  which, once one child of a node \nd{i}{j} has started being processed,
  process all the other children of that node and then the node
  \nd{i}{j} itself before processing any children of any other node
  \nd{i'}{j'}.

  We also establish this property by contradiction by assuming that
  there is no schedule (in the considered class) that satisfies the
  last property. Then, for any schedule $\mathcal{B}$, let
  $t(\mathcal{B})$ be the first date at which $\mathcal{B}$ starts the
  processing of a node \lf{i}{j}{m} while some node \lf{i'}{j'}{m'}
  has already been processed, but node \nd{i'}{j'} has not been
  processed yet. We consider a schedule $\mathcal{B}$ which maximizes
  $t(\mathcal{B})$ (note that $t$ can only take values no greater than
  $\delta+k-1$ and that the maximum and $\mathcal{B}$ are thus well
  defined). We then build from $\mathcal{B}$ another schedule
  $\mathcal{B}'$ whose peak memory is not greater and that does not
  overlap the processing of nodes \nd{i}{j} and \nd{i'}{j'}. This
  schedule is defined as follows. It is identical to $\mathcal{B}$
  except for the time slots at which node \nd{i}{j}, node \nd{i'}{j'}
  or any of their children was processed. If, under $\mathcal{B}$ node
  \nd{i}{j} was processed before node \nd{i'}{j'} (respectively, node
  \nd{i'}{j'} was processed before node \nd{i}{j}), then under the new
  schedule, in these time slots, all the children of \nd{i}{j} are
  processed first, then \nd{i}{j}, then all the children of
  \nd{i'}{j'} and finally \nd{i'}{j'} (resp. all the children of
  \nd{i'}{j'} are processed first, then \nd{i'}{j'}, then all the
  children of \nd{i}{j} and finally \nd{i}{j}). The peak memory due to
  the processing of nodes \nd{i}{j} and \nd{i'}{j'} and of their
  children is now $\max\{\delta-j+2, \delta-j'+3\}$
  (resp. $\max\{\delta-j'+2, \delta-j+3\}$). In the original schedule
  it was no smaller than $\max\{\delta-j+3, \delta-j'+3\}$ because at
  least the output of the processing of one of the children of node
  \nd{i'}{j'} (resp. \nd{i}{j}) was in memory while node \nd{i}{j}
  (resp. \nd{i'}{j'}) was processed. Hence the peak memory of the new
  schedule $\mathcal{B}'$ is no greater than the one of the original
  schedule. The new schedule satisfies the desired property at least
  until time $t+1$. Hence $t(\mathcal{B}')$ is larger than
  $t(\mathcal{B})$, which contradicts the maximality of $\mathcal{B}$
  with respect to the value $t$.

  From the above, in order to establish a lower bound on the peak memory
  of any makespan-optimal schedule, it is sufficient to consider
  only those schedules that do not overlap the processing of different
  nodes \nd{i}{j} (and of their children). Let $\mathcal{B}$ be such
  a schedule. We know that,
  under schedule $\mathcal{B}$, processor $P_p$  processes nodes without
  interruption from time 0 until time $k+\delta-2$. Therefore, in the
  time interval $[k+\delta-3,k+\delta-2]$ processor $P_p$ processes a
  node \nd{i}{1}, say \nd{1}{1}. Then we have shown that we can assume
  without loss of generality that processor $P_p$ exactly processes
  the $\delta$ children of \nd{1}{1} in the time interval
  $[k-3,k+\delta-3]$. Therefore, at time $k$, processor $P_p$ has
  processed all nodes allocated to it except node \nd{1}{1} and
  $\delta-3$ of its children. Therefore, during the time interval
  $[k,k+1]$ the memory must contain:
  \begin{compactenum}
  \item The output of the processing of all \nd{i}{j} nodes, for $1
    \leq i \leq p-1$ and $1 \leq j \leq \delta-1$, except for the
    output of node \nd{1}{1} (which has not yet been processed). This
    corresponds to $(p-1)(\delta-1) - 1$ elements.
  \item The output of the processing of 3 children of node \nd{1}{1}
    and a additional unit to store the result of a fourth one. This
    corresponds to $4$ elements.
  \item The result of the processing of the \cc{i}{1} nodes, for $ 1
    \leq i \leq p-1$ and room to store the results of the
    \cp{i}{\delta-1} nodes, for $ 1 \leq i \leq p-1$. This corresponds
    to $2(p-1)$ elements.
  \end{compactenum}
  Overall, during this time interval the memory must contain:
  $$((p-1)(\delta-1) - 1) + 4 + 2(p-1) = (p-1)\delta + p +2$$ elements.
  As the optimal peak memory is $\delta+p-1$ this gives us a lower
  bound on the ratio $\rho$ of the memory used by $\mathcal{B}$ with
  respect to the optimal peak memory usage:
  $$
  \rho \ \geq \ \frac{(p-1)\delta + p +2}{\delta+p-1}
  \   \xrightarrow[\delta\to \infty]{} \ p-1.
  $$
  Therefore, there exists a value $\delta_0$ such that
  $$
  \frac{(p-1)\delta_0 + p
    +2}{\delta_0+p-1} > p-1+\frac{1}{2}\epsilon.
  $$
  As algorithm $\mathcal{A}$ cannot have a strictly lower peak memory
  than algorithm $\mathcal{B}$ by definition of $\mathcal{B}$, this
  proves that the ratio for $\mathcal{A}$ is at least equal to
  $p-1+\frac{1}{2}\epsilon$, which contradicts the definition of
  $\epsilon$.
\end{proof}

Furthermore, a similar result can also be derived in the general model
(with arbitrary execution times and file sizes), but without the
restriction that $\alpha(p) = 1$. This is done in the next lemma.

\begin{lemma}
  \label{le.gen_alphabeta}
  When scheduling in-tree task graphs in the general model on a
  platform with $p \geq 2$ processors, there does not exist any
  algorithm that is both an $\alpha(p)$-approximation for makespan
  minimization and a $\beta(p)$-approximation for peak memory
  minimization, with $\alpha(p) \beta(p) \ < p$.
\end{lemma}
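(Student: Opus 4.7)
The plan is to proceed by contradiction in the same spirit as the proof of Theorem~\ref{thm.shared.inapprox}, exploiting the freedom to choose arbitrary weights that the general model offers. I would assume that some algorithm achieves $\alpha(p)\beta(p) < p$, set $\epsilon := p - \alpha(p)\beta(p) > 0$, and exhibit a family of trees, parametrized by two large reals $L$ and $F$, on which the assumed guarantees cannot both hold.

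The instance is a three-level tree: a root $r$ has $p$ children $c_1, \ldots, c_p$, and each $c_i$ has one leaf child $\ell_i$. All execution files have size zero ($\n{i} = 0$); the execution time is $L$ for every $c_i$ and every $\ell_i$, and $0$ for the root (any fixed constant for $\len{r}$ works too, being absorbed as a lower-order term when $L$ is large). The file sizes are chosen asymmetrically: each leaf produces a large output of size $F$, each $c_i$ ``compresses'' its input to an output of size $1/F$, and the root produces nothing. This asymmetry is what makes sequential processing memory-efficient (only one large file of size $F$ is ever held at a time, with the remaining cost consisting of small accumulated outputs) while forcing any makespan-efficient parallel schedule to hold many large files simultaneously.

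I then compute three quantities. With $p$ processors, processing all $\ell_i$ in parallel, then all $c_i$ in parallel, then the root yields $C^*_{\max} = 2L$. A sequential postorder traversal that completes one branch before starting the next achieves peak memory $M^* \leq F + p/F$, the peak being reached while the last $c_i$ is processed with $p-1$ small outputs of size $1/F$ already stored. Applying Lemma~\ref{le.lower.bounds} to this instance gives $M \cdot C_{\max} \geq 2pLF + pL/F$. Since any makespan-$\alpha(p)$-approximation must satisfy $C_{\max} \leq 2\alpha(p) L$, this forces $M \geq pF/\alpha(p) + p/(2\alpha(p)F)$, whence
\[
\alpha(p)\beta(p) \ \geq\ \alpha(p) \cdot \frac{M}{M^*} \ \geq\ \frac{p\,(F^2 + 1/2)}{F^2 + p} \ \xrightarrow[F \to \infty]{}\ p,
\]
so choosing $F$ large enough yields $\alpha(p)\beta(p) > p - \epsilon$, contradicting the hypothesis $\alpha(p)\beta(p) = p - \epsilon$.

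The main obstacle is designing the tree and weights so that the parallel-vs-sequential memory gap is tight enough to force the full factor $p$ rather than something weaker. The asymmetry $F$ versus $1/F$ between a leaf's output and its parent's output is the heart of the construction; once in place, the rest is routine bookkeeping with Lemma~\ref{le.lower.bounds}. The only subtlety is handling the two limits cleanly (taking $L$ large to absorb any nonzero $\len{r}$ or additive constants in $C^*_{\max}$, and $F$ large to make $p/F$ negligible against $F$ in the memory ratio).
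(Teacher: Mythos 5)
Your proof is correct and follows essentially the same route as the paper: both apply the memory--time product bound of Lemma~\ref{le.lower.bounds} to a wide, shallow tree whose makespan-optimal parallel schedule forces a peak memory roughly $p$ times the sequential optimum. The paper's instance is simply more economical---a star with $p$ unit-time leaves whose memory resides in unit execution files ($\n{i}=1$, $\f{i}=0$, Figure~\ref{fig.tree.p.approx}), giving $C^*_{\max}=1$, $M^*=1$ and $M\,C_{\max}\geq p$ directly, so $\alpha(p)\beta(p)\geq p$ follows on a single fixed instance without your two limiting parameters $L$ and $F$.
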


\begin{proof}
  \begin{figure*}
    \centering
    {
      \begin{tikzpicture}
        \tikzstyle{every node} = [circle, fill=black, font=\scriptsize,draw, inner sep=1.5pt, label distance = -35pt]
        \node (root) [circle,draw,label=north:{\rule[-1.5cm]{0pt}{0pt}$\begin{array}{c}n_0=0,\\ w_0=0,\\ f_0=0\end{array}$}]{}
        [sibling distance=20mm, level distance=10mm]
        child{  node[circle, draw,label=south:{\rule[1.5cm]{0pt}{0pt}$\begin{array}{c}n_1=1,\\ w_1=1,\\ f_1=0\end{array}$}]{} }
        child{  node[circle, draw,label=south:{\rule[1.5cm]{0pt}{0pt}$\begin{array}{c}n_2=1,\\ w_2=1,\\ f_2=0\end{array}$}]{} }
        child{ node[circle, draw,draw=none, fill=white]{\Huge \bfseries \ldots } edge from parent [draw=none] }
        child{  node[circle, draw,label=south:{\rule[1.5cm]{0pt}{0pt}$\begin{array}{c}n_p=1,\\ w_p=1,\\ f_p=0\end{array}$}]{} }
        ;
      \end{tikzpicture}}
    \caption{Tree used for the proof of Lemma~\ref{le.gen_alphabeta}.}
    \label{fig.tree.p.approx}
  \end{figure*}
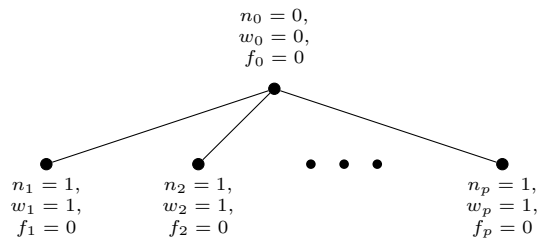
  Consider the tree drawn in Figure~\ref{fig.tree.p.approx}. This tree
  can be scheduled in time $C_{\max}^* = 1$ on $p$ processors if all
  non-root nodes are processed simultaneously (by using a peak memory
  of $p$), or sequentially in time $p$ by using only $M^* = 1$
  memory. Lemma~\ref{le.lower.bounds} states that for any schedule
  with makespan $C_{\max}$ and peak memory $M$, we have $MC_{\max}
  \geq p$. This immediately implies that any algorithm with
  approximation ratio $\alpha(p)$ for makespan and $\beta(p)$ for peak
  memory minimization must verify $\alpha(p) \beta(p) \geq p$. This bound is
  tight because, in this model, any memory-optimal sequential
  schedule is an approximation algorithm with $\beta(p) = 1$ and $\alpha(p)
  = p$.
\end{proof}

\section{Heuristics}
\label{sec.heuristics}

Given the complexity of optimizing the makespan and memory at the same
time, we have investigated heuristics and we propose six
algorithms. The intention is that the proposed algorithms cover a
range of use cases, where the optimization focus wanders between the
makespan and the required memory. The first heuristic, \ParSubtrees
(Section~\ref{sec:parallelSubtrees}), employs a memory-optimizing
sequential algorithm for each of its subtrees, the different subtrees
being processed in parallel. Hence, its focus is more on the memory
side. In contrast, \ParInnerFirst and \ParSubtrees are two list
scheduling based algorithms (Section~\ref{sec:listsched}), which
should be stronger in the makespan objective. Nevertheless, the
objective of \ParInnerFirst is to approximate a postorder in parallel,
which is good for memory in sequential. The focus of \ParDeepestFirst
is fully on the makespan. Then, we move to memory-bounded heuristics
(Section~\ref{sec:memorybounds}).  Initially, we adapt the two
list-scheduling heuristics to obtain bounds on their memory
consumption. Finally, we design a  heuristic, \BookingInnerFirst
(Section~\ref{sec:memorybooking}), that proposes a parallel execution
of a sequential postorder while satisfying a memory bound given as
input.


\subsection{Parallel execution of subtrees}
\label{sec:parallelSubtrees}


The most natural idea to process a tree $T$ in parallel is arguably
to split it into subtrees, to process each of these subtrees with
a sequentially memory-optimal
algorithm~\cite{ipdps-tree-traversal,Liu87}, and to have these
sequential processings happen in parallel. The underlying idea is to
assign to each processor a whole subtree in order to enable as much
parallelism as there are processors, while allowing to use
a single-processor memory-optimal traversal on each
subtree. Algorithm~\ref{algo.parSubtrees} outlines such an algorithm,
using Algorithm~\ref{algo.splitSubtrees} for splitting $T$ into subtrees.
The makespan obtained
using \ParSubtrees is denoted by $C_{max}^{\ParSubtreesMath}$.

\LinesNumbered
\begin{algorithm}
\DontPrintSemicolon
\caption{\ParSubtrees($T$, $p$) \label{algo.parSubtrees}}
Split tree $T$ into $q$ subtrees ($q \leq p$) and a set of remaining
nodes, using \SplitSubtrees($T$, $p$).\;
Concurrently process the $q$ subtrees, each using a memory minimizing algorithm, e.g., \cite{ipdps-tree-traversal}. \label{algo.parSubtrees.step2}\;
Sequentially process the set of remaining nodes, using a memory minimizing algorithm.\;
\end{algorithm}

In this approach, $q$ subtrees of $T$, $q \leq p$, are processed in
parallel. Each of these subtrees is a maximal subtree of $T$. In other
words, each of these subtrees includes all the descendants (in $T$) of
its root.  The nodes not belonging to the $q$ subtrees are processed
sequentially. These are the nodes where the $q$ subtrees merge, the
nodes included in subtrees that were produced in excess (if more than
$p$ subtrees were created), and the ancestors of these nodes.  An
alternative approach, as discussed below, is to process all produced
subtrees in parallel, assigning more than one subtree to each
processor when $q > p$. The advantage of
Algorithm~\ref{algo.parSubtrees} is that we can construct a splitting
into subtrees that minimizes its makespan, established shortly in
Lemma~\ref{lem:SplitSubtreesIsOptimal}.

As \len{i} is the computation weight of node $i$, \W{i} denotes the
total computation weight (i.e., sum of weights) of all nodes in the
subtree rooted in $i$, including $i$.  \SplitSubtrees uses a node
priority queue $\mathit{PQ}$ in which the nodes are sorted by
non-increasing \W{i}, and ties are broken according to non-increasing
\len{i}. $head(\mathit{PQ})$ returns the first node of $\mathit{PQ}$,
while $popHead(\mathit{PQ})$ also removes it. $\mathit{PQ}[i]$ denotes
the $i$-th element in the queue.

\SplitSubtrees starts with the root of the entire tree and continues
splitting the largest subtree (in terms of the total computation
weight \W{}) until this subtree is a leaf node ($\W{head(\mathit{PQ})}
= \len{head(\mathit{PQ})}$). The execution time of
Step~\ref{algo.parSubtrees.step2} of \ParSubtrees is that of the
largest of the $q$ subtrees of the splitting, hence
\W{head(\mathit{PQ})} for the solution found by
\SplitSubtrees. Splitting subtrees that are smaller than the largest
leaf ($\W{j}<\max_{i \in T}\len{i}$) cannot decrease the parallel
time, but only increase the sequential time. More generally, given any
splitting $s$ of $T$ into subtrees, the best execution time for $s$
with \ParSubtrees is achieved by choosing the $p$ largest subtrees for
the parallel Step~\ref{algo.parSubtrees.step2}. This can be easily
derived, as swapping a large tree included in the sequential part with
a smaller tree included in the parallel part cannot increase the total
execution time. Hence, the value $C_{max}^{\ParSubtreesMath}(s)$
computed in Step~\ref{algo.splitSubtrees.step10} is the makespan that
would be obtained by \ParSubtrees on the splitting computed so far. At
the end of algorithm \SplitSubtrees
(Step~\ref{algo.splitSubtrees.step12}), the splitting which yields the
smallest makespan is selected.


\begin{algorithm}
  \DontPrintSemicolon
  \caption{\SplitSubtrees($T$, $p$)\label{algo.splitSubtrees}}
  \lForEach{node $i$}{compute $\W{i}$ (the total processing time of
    the tree rooted at $i$)}\;
  Initialize priority queue $\textit{PQ}$ with the tree root\;
  $\textit{seqSet} \leftarrow \emptyset$\;
  $Cost(0)=\W{root}$\;
  $s \leftarrow 1$ \tcc*{splitting rank}
  \While {$\W{head(\mathit{PQ})} > \len{head(\mathit{PQ})}$} {
    $\textit{node} \leftarrow popHead(\mathit{PQ})$\;
    $\textit{seqSet} \leftarrow \textit{seqSet} \cup \textit{node}$\;
    Insert all children of $\textit{node}$ into priority queue $\textit{PQ}$\;
    $C_{max}^{\ParSubtreesMath}(s) = \W{head(\mathit{PQ})} +
      \sum_{i \in seqSet}\len{i} + \sum_{i=\mathit{PQ}[p+1]}^{|\mathit{PQ}|}\W{i}$\label{algo.splitSubtrees.step10}\;
    $s \leftarrow s+1$\;
  }
  Select splitting $S$ with $C_{max}^{\ParSubtreesMath}(S)=\min_{t=0}^{s-1} C_{max}^{\ParSubtreesMath}(t)$ (break ties in favor of smaller $t$)\label{algo.splitSubtrees.step12}\;
\end{algorithm}

\begin{lemma}\label{lem:SplitSubtreesIsOptimal}
 \SplitSubtrees returns a splitting of $T$ into subtrees that results in the \textit{makespan}-optimal processing of $T$ with \ParSubtrees.
\end{lemma}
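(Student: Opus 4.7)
The plan is to show that among the splittings visited by \SplitSubtrees, at least one achieves the makespan of any fixed splitting. Identifying a splitting with its sequential set $U$ (an upper-closed set in the tree containing the root), write $M(U) = \W{head(\mathit{PQ})}$ for the maximum $\W{i}$ on the frontier of $U$ and $f(U)$ for the resulting \ParSubtreesMath makespan. A first observation is that at iteration $s$ of \SplitSubtrees the set $\textit{seqSet}$ consists of the $s$ tree nodes with the largest values $\W{i}$ (ties broken by $\len{i}$, as in \SplitSubtrees): any node outside $\textit{seqSet}$ is either on the frontier, hence in $\mathit{PQ}$, or strictly below it, and $\W{}$ is strictly decreasing from parent to child, so $head(\mathit{PQ})$ is always the non-$\textit{seqSet}$ node of maximum $\W{i}$. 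Consequently, for every threshold $\tau$, the threshold splitting $U_\tau := \{v : \W{v} > \tau\}$ is precisely the state of \SplitSubtrees at iteration $|U_\tau|$.

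The crux of the proof is the following claim: for every splitting $U$, the splitting $\hat U := \{v : \W{v} > M(U)\}$ satisfies $f(\hat U) \le f(U)$. Any $v$ with $\W{v} > M(U)$ must lie in $U$ (else an ancestor of $v$ would sit on the frontier with $\W{} \ge \W{v} > M(U)$), so $\hat U \subseteq U$, which yields $M(\hat U) \le M(U)$ and $L(\hat U) := \sum_{v \in \hat U} \len{v} \le L(U)$. Rewriting
\[
  f(U) = M(U) + \W{root} - \mathrm{top}_p(U),
\]
where $\mathrm{top}_p(U)$ denotes the sum of the largest $\min(p,|\mathit{PQ}|)$ subtree weights in the frontier, it remains to prove $\mathrm{top}_p(\hat U) \ge \mathrm{top}_p(U)$. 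I would do this via a containment argument: each frontier subtree of the $U$-splitting is contained in a unique frontier subtree of the $\hat U$-splitting, and the weight of such an $\hat U$-subtree is at least the combined weights of all $U$-subtrees it contains (the remainder being the $\len{i}$ values of the sequential nodes from $U \setminus \hat U$ that lie inside it). A short case analysis, on whether the top-$p$ $U$-subtrees fall into distinct $\hat U$-subtrees or collapse into a single larger one, then delivers the bound. I expect this top-$p$ comparison to be the main technical obstacle; the rest is bookkeeping.

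Combining these ingredients yields $\min_s C_{max}^{\ParSubtreesMath}(s) \le f(\hat U) \le f(U)$ for any splitting $U$, in particular an optimal one, which proves the lemma. A small caveat is that the while loop terminates at some iteration $s_{\max}$ (when $head(\mathit{PQ})$ becomes a leaf), so $|\hat U|$ could in principle exceed $s_{\max}$; however, one can assume without loss of generality that an optimal sequential set $U^\ast$ contains no leaf (removing a leaf from $\textit{seqSet}$ only weakly decreases $f$, as a direct computation using the formula above shows), which forces $M(U^\ast) \ge \max_{\ell \in \text{leaves}} \W{\ell}$ and hence $|\hat{U^\ast}| \le s_{\max}$, so that \SplitSubtrees does visit $\hat{U^\ast}$.
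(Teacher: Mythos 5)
Your proposal is correct and is essentially the paper's own argument in different notation: your threshold splitting $\hat U=\{v:\W{v}>M(U)\}$ is exactly the splitting the paper's proof examines at the first step $t$ at which a node of weight $\W{r}=M(S_{opt})$ heads the priority queue, and, through your identity $f(U)=M(U)+\W{root}-\mathrm{top}_p(U)$, your inequality $\mathrm{top}_p(\hat U)\ge\mathrm{top}_p(U)$ is the paper's claim $Seq(t)\le Seq(S_{opt})$, established there by the same containment reasoning on frontier subtrees (and your anticipated ``case analysis'' is in fact unnecessary: the top-$p$ frontier subtrees of $U$ map into at most $p$ distinct frontier subtrees of $\hat U$ of no smaller total weight). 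The remaining differences are cosmetic: you argue directly rather than by contradiction, and you make explicit the termination/leaf caveat that the paper dispatches with the observation that every subtree heavier than the heaviest leaf is eventually split.
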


\begin{proof}
 The proof is by contradiction. Let $S$ be the splitting into subtrees
 selected by \SplitSubtrees. Assume now that there is a different
 splitting $S_{opt}$ which results in a strictly shorter processing with \ParSubtrees.

 Because of the termination condition of the \emph{while}-loop,
 \SplitSubtrees splits any subtree that is heavier than the heaviest
 leaf. Therefore, any such tree will be at one time at the head of the
 priority queue. Let $r$ be the root node of a heaviest subtree in
 $S_{opt}$. From what precedes, there always exists a step $t$ which
 is the first step in \SplitSubtrees where a node, say $r_t$, of
 weight \W{r}, is the head of $\mathit{PQ}$ at the end of the step
 ($r_t$ is not necessarily equal to $r$, as there can be more than one
 subtree of weight \W{r}). 
 Let $S_t$ be the solution built by \SplitSubtrees at the end of step
 $t$.  By definition of $r$, there cannot be any leaf node in the entire
 tree that is heavier than \W{r}. The cost of the solution $S_t$ is equal
 to the execution time of the parallel processing of the $\min\{q,
 p\}$ subtrees plus the execution time of the sequential processing of
 the remaining nodes.  Therefore
 $C_{max}^{\ParSubtreesMath}(t)=\W{r}+Seq(t)$, where $Seq(t)$ is the
 total weight of the sequential set $seqSet(t)$ plus the total weight
 of the surplus subtrees (that is, of all the subtrees in
 $\mathit{PQ}$ except the $p$ subtrees of largest weights). The cost of
 $S_{opt}$ is $C_{max}^{*}=\W{r}+Seq(S_{opt})$, given that $r$ is the
 root of a heaviest subtree of $S_{opt}$ by definition.

 \SplitSubtrees splits subtrees by non-increasing
 weights. Furthermore, by definition of step $t$, all subtrees split
 by \SplitSubtrees, up to step $t$ included, were subtrees whose
 weights were strictly greater than \W{r}. Therefore, because $r$ is
 the weight of the heaviest subtree in $S_{opt}$, all the subtrees
 split by \SplitSubtrees up to step $t$ included must have been split
 to obtain the solution $S_{opt}$.  This has several
 consequences. Firstly, $seqSet(t)$ is a subset of $seqSet(S_{opt})$,
 because, for any solution $S$, $seqSet(S)$ is the set of all nodes
 that are roots of subtrees split to obtain the solution
 $S$. Secondly, either a subtree of $S_t$ belongs to $S_{opt}$ or this
 subtree has been split to obtain $S_{opt}$. Therefore, the sequential
 processing of the $\max\{q-p, 0\}$ exceeding subtrees is no smaller
 in $S_{opt}$ than in the solution built at step $t$. It directly
 follows from the two above consequences that $Seq(t) \leq
 Seq(S_{opt})$.  However, $S_{opt}$ and $S_t$ have the same execution
 time for the parallel phase \W{r}. It follows that
 $C_{max}^{\ParSubtreesMath}(t) \leq C_{max}^{*}$, which is a
 contradiction to $S_{opt}$'s shorter processing time.
\end{proof}

\paragraph{Complexity} We first analyze the complexity of
\SplitSubtrees.  Computing the weights \W{i} costs $O(n)$.  Each
insertion into $\mathit{PQ}$ costs $O(\log{n})$ and calculating
$C_{max}^{\ParSubtreesMath}(s)$ in each step costs
$O(p)$. 
Given that there are $O(n)$ steps, \SplitSubtrees's complexity is
$O(n(\log{n}+p))$.  The complexity of the sequential traversal
algorithms used in Steps 2 and 3 of \ParSubtrees is at most
$O(n^2)$, e.g., \cite{ipdps-tree-traversal,Liu87}, or $O(n\log{n})$ if
the optimal postorder is sufficient. Thus the total complexity
of \ParSubtrees is $O(n^2)$ or $O(n(\log{n}+p))$, depending on the
chosen sequential algorithm.


\paragraph{Memory}

\begin{lemma}\label{parsubtrees.memory}
  \ParSubtrees is a $p$-approximation algorithm for peak memory
  minimization: the peak memory, $M$, verifies $M \leq p
  M_{seq}$, where $M_{seq}$ is the memory required for the complete
  sequential execution.
\end{lemma}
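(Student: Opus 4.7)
The plan is to show that at every instant of \ParSubtrees's execution the memory in use is at most $p \cdot M_{seq}$, analyzing separately the parallel phase (Step~2 of Algorithm~\ref{algo.parSubtrees}) and the sequential phase (Step~3).

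First I would establish an auxiliary claim: for any subtree $T'$ rooted at some node of $T$, a memory-optimal sequential traversal of $T'$ uses peak memory at most $M_{seq}$. I would take a memory-optimal sequential schedule $\tau^*$ of $T$ (achieving peak $M_{seq}$) and extract from it the subsequence of steps that process nodes of $T'$. This extracted subsequence is a valid sequential traversal of $T'$ (dependencies within $T'$ are preserved), and at every such step the files of $T'$ held in memory form a subset of those held by $\tau^*$ at the corresponding step, so their total size is at most $M_{seq}$. Hence the optimum for $T'$ in isolation is at most $M_{seq}$.

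For the parallel phase, at most $q \leq p$ pairwise-disjoint subtrees of $T$ are processed concurrently, each by a memory-optimal sequential algorithm. By the auxiliary claim, each such concurrent traversal uses peak memory at most $M_{seq}$; since the memory footprints of disjoint subtrees do not overlap, the total memory at any instant is at most $q \cdot M_{seq} \leq p \cdot M_{seq}$.

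The hard part will be the sequential phase: the memory now contains both the residual output files of the (at most $q$) already-completed parallel subtrees that are still awaiting consumption and the working memory of the memory-minimizing sequential algorithm on the remaining nodes. The difficulty is that these remaining nodes together with the $q$ ``virtual leaves'' (parallel outputs) do not form a subtree of $T$, so the auxiliary claim does not apply directly. I would bound the memory in this phase by noting that every output file has size at most $M_{seq}$ (since it must fit alone in the optimal sequential memory), and by a careful accounting showing that at any moment the memory decomposes into at most $p$ such contributions, namely the at most $q$ residuals coming from other processors plus the current sequential worker's state, altogether bounded by $p \cdot M_{seq}$.
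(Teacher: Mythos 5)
Your auxiliary claim and the treatment of the parallel phase are sound and coincide with the paper's argument: each of the $q\leq p$ concurrently processed maximal subtrees can be traversed with peak memory at most $M_{seq}$, so the parallel phase needs at most $pM_{seq}$. The gap is in the sequential phase. The accounting you sketch has $q$ residual output files (one per parallel subtree, each of size at most $M_{seq}$) \emph{plus} the working memory of the sequential pass (also bounded by $M_{seq}$, up to the justification of why the residuals can be charged separately). That is $q+1$ contributions, and since $q$ may equal $p$, your decomposition only yields $(p+1)M_{seq}$, not $pM_{seq}$. Asserting that "the memory decomposes into at most $p$ such contributions" is exactly the statement that needs proof, and it is not true contribution-by-contribution: you must exhibit a saving of one full $M_{seq}$ somewhere, and nothing in your plan produces it (the sequential worker's peak may occur far from any residual, so you cannot in general absorb a residual into the worker's state).

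The missing idea is a structural property of the splitting returned by \SplitSubtrees. Let $X$ be the node split at the last step of the selected splitting. The paper shows, by contradiction using the cost formula $C_{max}^{\ParSubtreesMath}(s)$ and the tie-breaking rule of the selection step, that if at most one child of $X$ ended up among the subtrees processed in parallel, then splitting $X$ could not have strictly decreased the makespan, so the selection would have stopped one step earlier. Hence at least two children $C_1,C_2$ of $X$ are roots of parallel subtrees. Since any sequential schedule must at some point process $X$ with all of its children's output files simultaneously in memory, $f_{C_1}+f_{C_2}\leq M_{seq}$; combined with $f_i\leq M_{seq}$ for the remaining at most $p-2$ residuals and the $M_{seq}$ working memory of the sequential pass, this gives $M_S \leq 2M_{seq}+(p-2)M_{seq}=pM_{seq}$. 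Without this pairing argument (or some substitute saving of one $M_{seq}$), your proof establishes only a $(p+1)$-approximation.
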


\begin{proof}
  We first remark that during the parallel part of \ParSubtrees, the
  total memory used, $M_p$, is not more than $p$ times $M_{seq}$.  Indeed, each of
  the $p$ processors executes a maximal subtree and the
  processing of any subtree does not use, obviously, more memory (if done
  optimally) than the processing of the whole tree.  Thus, $M_{p} \leq
  p \cdot M_{seq}$.

  During the sequential part of \ParSubtrees, the memory used, $M_S$, is
  bounded by $M_{seq} + \sum_{i \in Q}\f{i}$, where the second term is
  for the output files produced by the root nodes of the $q \leq p$ subtrees processed
  in parallel ($Q$ is the set of the root nodes of the $q$ trees processed
  in parallel). We now claim that at least two of those subtrees have a
  common parent. More specifically, let us denote by $X$ the node that
  was split last ({\it i.e.}, it was split in the step $S$ which is
  selected at the end of \SplitSubtrees). Our claim is that at least
  two children of $X$ are processed in the parallel part. Before $X$
  was split (in step $S-1$), the makespan as computed in Step
  \ref{algo.splitSubtrees.step10} of \SplitSubtrees is $C_{max}(S-1) =
  W_X + Seq(S-1)$, where $Seq(S-1)$ is the work computed in
  sequential ($\sum_{i \in seqSet}\len{i} +
  \sum_{i=\mathit{PQ}[p+1]}^{|\mathit{PQ}|}\W{i}$). Let
  $D$ denote the set of children of $X$ which are not executed in parallel,
  then the total weight of their subtrees is $W_D = \sum_{i \in D}
  W_i$. We now show that if at most one child of $X$ is processed in the
  parallel part, $X$ was not the node that was split last:
\begin{itemize}
\item If exactly one child $C$ of $X$ is processed in the parallel
  part, then $C_{max}(S) = W_{X'} + Seq(S-1) + w_X + W_D$, where
  $X'$ is the new head of the queue, and thus verifies $W_{X'} \geq
  W_C$. And since $W_X = w_X + W_C + W_D$, we can conclude that
  $C_{max}(S) \geq C_{max}(S-1)$.
\item If no child of $X$ is processed in the parallel part, then
  $C_{max}(S) = W_{X'} + Seq(S-1) - W_{Y} + w_X + W_D$, where $X'$
  is the new head of the queue and $Y$ is the newly inserted node in
  the $p$ largest subtrees in the queue. Since $W_{X'} \geq W_{Y}$ and
  $W_X = w_X + W_D$, we obtain once again $C_{max}(S) \geq
  C_{max}(S-1)$.
\end{itemize}
In both cases we have $C_{max}(S) \geq C_{max}(S-1)$, which contradicts
the definition of $X$ (the select phase, Step~\ref{algo.splitSubtrees.step12} of \SplitSubtrees, would have selected
step $S-1$ rather than step $S$). Let us now denote by $C_1$ and $C_2$
two children of $X$ which are processed in the parallel
phase. Remember that the memory used during the sequential part is
bounded by $M_S\leq M_{seq} + f_{C_1} + f_{C_2} + \sum_{i \in
  Q\setminus\{C_1, C_2\}}\f{i}$.  Since a sequential execution must
process node $X$, we obtain $f_{C_1} + f_{C_2} \leq M_{seq}$. And
since $\forall i, f_{i} \leq M_{seq}$, we can bound the memory used
during the sequential part by $M_S \leq 2M_{seq} + (p-2)M_{seq} \leq
pM_{seq}$.

Furthermore, given that up to $p$ processors work in parallel, each on
its own subtree,
it is easy to see that this bound is tight if the sequential peak memory can be reached
in each subtree.
\end{proof}

\paragraph{Makespan} \ParSubtrees delivers a $p$-approximation
algorithm for makespan minimization, and this bound is tight. Because
at least one processor is working at any time
under \ParSubtrees, \ParSubtrees delivers, in the worst case, a
$p$-approximation for makespan minimization.  To prove that this bound
is tight, we consider a tree of height 1 with $p \cdot k$ leaves (a
fork), where all execution times are equal to 1 ($\forall i \in
T$, $\len{i}=1$), and where $k$ is a large integer (this tree is
depicted in Figure~\ref{fig:fork}).  The optimal makespan for such a
tree is $C_{max}^{*}=kp/p+1=k+1$ (the leaves are processed in
parallels, in batches of size $p$, and then the root is processed).
With \ParSubtrees $p$ leaves are processed in parallel, and then
the remaining nodes are processed sequentially. The makespan is thus
$C_{max}=(1+pk-p)+1=p(k-1)+2$. When $k$ tends to $+\infty$ the
ratio between the makespans tends to $p$.

\begin{figure}
  \centering
  \scalebox{0.8}{
\begin{tikzpicture}[scale=0.9, child anchor = north]
\tikzstyle{every node}=[ellipse, draw]
  \node{root}
  [sibling distance=25mm, level distance=15mm]
  child{node{1}}
  child{node{2}}
  child{node[draw=none]{...} edge from parent [draw=none]}
  child{node{$p \cdot k-1$}}
  child{node{$p \cdot k$}}
  ;
\end{tikzpicture}
}
  \caption{\ParSubtrees is at best a $p$-approximation for the makespan.}
  \label{fig:fork}
\end{figure}
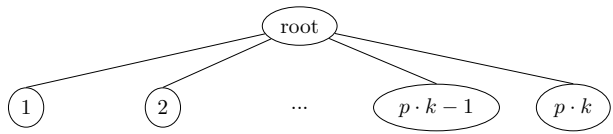

Given the just observed worst case for the makespan, a makespan
optimization for \ParSubtrees is to allocate all produced subtrees to
the $p$ processors instead of only $p$ subtrees. This can be done by
ordering the subtrees by non-increasing total weight and allocating
each subtree in turn to the processor with the lowest total
weight. Each of the parallel processors executes its subtrees
sequentially. This optimized form of the algorithm is
named \ParSubtreesOptim. Note that this optimization should improve
the makespan, but it will likely worsen the peak memory usage.


\subsection{List scheduling algorithms}
\label{sec:listsched}

\ParSubtrees is a high-level algorithm employing sequential
memory-optimized algorithms. An alternative, explored in this section,
is to design algorithms that directly work on the tree in parallel.
We first present two such algorithms that are event-based list
scheduling algorithms~\cite{Hwang1989:spg}.  One of the strong points
of list scheduling algorithms is that they are
$(2-\frac{1}{p})$-approximation algorithms for makespan
minimization~\cite{GrahamList}.

Algorithm~\ref{algo.listScheduling} outlines a generic list
scheduling, driven by node finish time events. At each event at least
one node has finished so at least one processor is available for
processing nodes. Each available processor is given the respective
head node of the priority queue. The priority of nodes is given by the
total order $O$, a parameter to Algorithm~\ref{algo.listScheduling}.

\begin{algorithm}
\DontPrintSemicolon
\caption{List scheduling($T$, $p$, $O$) \label{algo.listScheduling}}
  Insert leaves in priority queue $\mathit{PQ}$ according to order $O$\;
  $\mathit{eventSet} \leftarrow \{0\}$   \tcc*{ascending order}
  \While(\tcc*[f]{event:node finishes}){$\mathit{eventSet} \neq \emptyset$}{
    $popHead(\mathit{eventSet})$\;
    $\Done \leftarrow$ set of the new ready nodes\;
    Insert nodes from \Done in $\mathit{PQ}$ according to order $O$\tcc*{available parents of nodes completed at event}
    $\mathcal{P} \leftarrow$ available processors\;
    \While{$\mathcal{P} \neq \emptyset$ and $\mathit{PQ} \neq \emptyset$}{
      $\mathit{proc} \leftarrow \mathit{popHead}(\mathcal{P})$\;
      $\mathit{node} \leftarrow \mathit{popHead}(\mathit{PQ})$\;
      Assign $\mathit{node}$ to $\mathit{proc}$\;
      $\mathit{eventSet}~\leftarrow~\mathit{eventSet} \cup \mathit{finishTime}(\mathit{node})$\;
    }
   }
\end{algorithm}

\subsubsection{Heuristic \ParInnerFirst}

From the study of the sequential case, one knows that a
\textit{postorder} traversal, while not optimal for all instances,
provides good results~\cite{ipdps-tree-traversal}.  Our intention is
to extend the principle of postorder traversal to the parallel
processing. For the first heuristic, called \ParInnerFirst, the
priority queue uses the following ordering $O$: 1) inner nodes, in an
arbitrary order; 2) leaf nodes ordered according to a given postorder
traversal. It makes heuristic sense that the chosen postorder is an
optimal sequential postorder, so that memory consumption can be
minimized. We do not further define the order of inner nodes
because it has absolutely no impact. Indeed, because we target the
processing of tree-shaped task-graphs, the processing of a node makes
at most one new inner node available, and the processing of this new
inner node can start right away on the processor that freed it by
completing the processing of its last un-processed child.



\paragraph{Complexity}
The complexity of \ParInnerFirst is that of determining the input
order $O$ and that of the list scheduling. Computing the optimal
sequential postorder is $O(n\log{n})$~\cite{Liu86}. In the list
scheduling algorithm there are $O(n)$ events and $n$ nodes are
inserted and retrieved from $\mathit{PQ}$. An insertion into
$\mathit{PQ}$ costs $O(\log{n})$, so the list scheduling complexity is
$O(n\log{n})$. Hence, the total complexity is also $O(n\log{n})$.

\paragraph{Memory}
\ParInnerFirst is not an approximation algorithm with respect to peak
memory usage. This is derived considering the tree in
Figure~\ref{fig:NoMemBoundInnerFirst}. All output files have size 1
and the execution files have size 0 ($\forall i \in T: \f{i}=1,\n{i}=0$).
Under an optimal sequential processing, leaves are
processed in a deepest first order. The resulting optimal memory
requirement is $M_{seq}=p+1$, reached when processing a join node.
With $p$ processors, all leaves have been processed at the time the
first join node ($k-1$) can be executed. (The longest chain has length
$2k-2$.) At that time there are $(k-1)\cdot(p-1)+1$ files in
memory. When $k$ tends to $+\infty$ the ratio between the memory
requirements also tends to $+\infty$.

\begin{figure}
  \centering
    \begin{tikzpicture}[scale=0.7,
      level/.style={level distance=12mm, sibling distance=7mm},
      vertex/.style={circle,solid,draw},
      ghost/.style={circle}]

      \makeatletter
      \let\mypgfutil@ifnch\pgfutil@ifnch
      \def\pgfutil@ifnch{%
        \let\x@next\@empty
        \ifx\pgfutil@let@token\Chain\let\pgfutil@let@token C\let\x@next\expandafter\fi
        \ifx\pgfutil@let@token\children\let\pgfutil@let@token c\let\x@next\expandafter\fi
        \ifx\pgfutil@let@token\mynolabelchildren\let\pgfutil@let@token m\let\x@next\expandafter\fi
      \x@next\mypgfutil@ifnch}
    \makeatother

      \newcommand{\Chain}{%
        node[vertex,label=right:$k$] {} edge from parent[solid] child{
          node[vertex] {} edge from parent[solid] child{node[vertex]
            {} edge from parent[dotted]
            child{node[vertex,label=right:$2k$] {} edge from
              parent[solid]}}}%
      }%

      \newcommand{\children}{%
        child{node[vertex,label=below:1] {} edge from parent[solid]}
        child{node[ghost] {$\cdots$} edge from parent[draw=none]}
	child{node[vertex] {} edge from parent[solid]}
        child{node[ghost] {$\cdots$} edge from parent[draw=none]}
        child{node[vertex,label=below:$p-1$] {} edge from parent[solid]}
        child{node[ghost] {} edge from parent[draw=none]}%
      }%

      \newcommand{\mynolabelchildren}{%
        child{node[vertex] {} edge from parent[solid]}
        child{node[ghost] {$\cdots$} edge from parent[draw=none]}
	child{node[vertex] {} edge from parent[solid]}
        child{node[ghost] {$\cdots$} edge from parent[draw=none]}
        child{node[vertex] {} edge from parent[solid]}
        child{node[ghost] {} edge from parent[draw=none]}%
      }%
     \footnotesize
     \node[vertex,label=right:1] (root) {}
       \children
       child{node[vertex] {}
         child{node[vertex,right=2] {} edge from parent[dotted]
	   \mynolabelchildren
	   child{node[vertex,label=right:$k-1$] {} edge from parent[solid]
             \children
             child{node[vertex,label=right:$k$] {} edge from parent[solid] child{ node[vertex] {} edge from parent[solid] child{node[vertex] {} edge from parent[dotted] child{node[vertex,label=right:$2k-2$] {} edge from parent[solid]}}}}
           }
         }
       }
     ;
   \end{tikzpicture}
      \caption{No memory bound for \ParInnerFirst.}\label{fig:NoMemBoundInnerFirst}
    \end{figure}

%

\subsubsection{Heuristic \ParDeepestFirst}
\label{sec.parDeepestFirst}
The previous heuristic, \ParInnerFirst, tries to take advantage of the
memory performance of optimal sequential postorders. Going in the
opposite direction, another heuristic objective can be the
minimization of the makespan. For trees, an inner node depends on all
the nodes in the subtree it defines. Therefore, it makes heuristic
sense to try to process the deepest nodes first to try to reduce any
possible waiting time. For the parallel processing of a tree, the most
meaningful definition of the depth of a node $i$ is the
\len{}-weighted length of the path from $i$ to the root of the tree,
including \len{i} (therefore, the depth of node $i$ is equal to its
top-level plus \len{i}~\cite{livre6}). A deepest node in the tree is
a deepest node in a critical path of the tree.

\ParDeepestFirst is our proposed list-scheduling deepest-first
heuristic. \ParDeepestFirst is defined by
Algorithm~\ref{algo.listScheduling} called with the following node
ordering $O$: nodes are ordered according to their depths and, in case
of ties, inner nodes have priority over leaf nodes, and remaining ties
are broken according to an optimal sequential postorder.



\paragraph{Complexity}
The complexity is the same as for \ParInnerFirst, namely $O(n\log{n})$. See \ParInnerFirst's complexity analysis.


\paragraph{Memory}
The memory required by \ParDeepestFirst is unbounded with respect to
the optimal sequential memory $M_{seq}$. Consider the tree in
Figure~\ref{fig.combWithChains} with many long chains, assuming the
Pebble Game model (i.e., $\forall i \in T : \f{i}=1, \n{i}=0, \len{i}=1$).
The optimal sequential memory requirement is
3. 
The memory usage of \ParDeepestFirst will be proportional to
the number of leaves, because they are all at the same depth, the
deepest one. As we can build a tree like the one of
Figure~\ref{fig.combWithChains} for any predefined number of chains,
the ratio between the memory required by \ParDeepestFirst and the
optimal one is unbounded.

    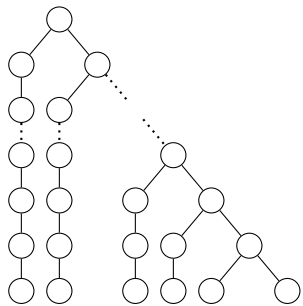
\begin{figure}
    \centering
    \begin{tikzpicture}[scale=0.5]
     \tikzstyle{level 1}=[level distance=12mm, sibling distance=20mm]
     \tikzstyle{vertex}=[circle,draw] 

     \node[vertex]{}
       child{node[vertex]{}
         child{node[vertex]{}
           child{node[vertex]{} edge from parent[dotted,thick]
             child[solid,thin]{node[vertex]{}
               child{node[vertex]{}
                 child{node[vertex]{}}
               }
             }
           }
         }
       }
       child{node[vertex]{}
         child{node[vertex]{}
           child{node[vertex]{} edge from parent[dotted,thick]
             child[solid,thin]{node[vertex]{}
               child{node[vertex]{}
                 child{node[vertex]{}}
               }
             }
           }
         }
         child{node[draw=none]{} edge from parent[dotted,thick]
           child[]{node[draw=none]{} edge from parent[draw=none]
             child{node[draw=none]{} edge from parent[draw=none]
               child{node[draw=none]{} edge from parent[draw=none]
                 child{node[draw=none]{} edge from parent[draw=none]}
               }
             }
           }
           child[thin,solid]{node[vertex]{} edge from parent[dotted, thick]
             child[thin,solid]{node[vertex]{}
               child{node[vertex]{}
                 child{node[vertex]{}}
               }
             }
             child[thin,solid]{node[vertex]{}
               child{node[vertex]{}
                 child{node[vertex]{}}
               }
               child{node[vertex]{}
                 child{node[vertex]{}}
                 child{node[vertex]{}}
               }
             }
           }
         }
       }
     ;

    \end{tikzpicture}
      \caption{Tree with long chains.} \label{fig.combWithChains}
    \end{figure}

\subsection{Memory bounded heuristics}
\label{sec:memorybounds}

From the analysis of the three algorithms presented so far we have
seen that only \ParSubtrees gives a guaranteed bound on the required
peak memory. The memory behavior of the two other
algorithms, \ParInnerFirst and \ParDeepestFirst, will be analyzed in
the experimental evaluation presented in
Section~\ref{sec.experiments}. In a practical setting it might be very
desirable to have a strictly bounded memory consumption, so as to be
certain that the algorithm can be executed with the available
memory. In fact, a guaranteed upper limit might be more important than
a good average behavior as the system needs to be equipped with
sufficient memory for the worst case. \ParSubtrees's guarantee of at
most $p$ times the optimal sequential memory seems high,
and thus an obvious goal would be to have a heuristic that minimizes
the makespan while keeping the peak memory usage below a given
bound. In order to approach this goal, we first study how to limit the
memory consumption of \ParInnerFirst and \ParDeepestFirst. Our study
relies on some reduction property on trees and we show how to
transform general trees into reduction trees in Section~\ref{sec:reductiontrees}.
We then develop memory bounded versions of \ParInnerFirst
and \ParDeepestFirst (Section~\ref{sec:boundedlistsched}).  The memory
bounds achieved by these new variants are rather lax. Therefore, we
design our last heuristic \BookingInnerFirst, with stronger
memory properties (Section~\ref{sec:memorybooking}). In the
experimental section (Section~\ref{sec.experiments}), we will show
that these three heuristics achieve different trade-offs between
makespan and memory usage.

\subsubsection{Simplifying tree properties}
\label{sec:reductiontrees}

To design our memory-bounded heuristics, we make two simplifying
assumptions. First, the considered trees do not have any execution
files. In other words, we assume that, for any task $i$, $n_i = 0$.

\paragraph{Eliminating execution files} To still be able to deal with
general trees, we can transform any tree $T$ with execution files into
a strictly equivalent tree $T'$ where all execution files have a null size. Let $i$
be any node of $T$. We add to $i$ a new leaf child $i'$ whose execution time
is null ($\len{i'}=0$), whose execution file is of null size ($\n{i'}=0$),
and whose output file has size $\n{i}$ ($\f{i'}=\n{i}$).
Then we set $\n{i}$ to 0. Any schedule $S$ for the original tree $T$
can be easily transformed into a schedule $S'$ for the new tree $T'$
with the exact same memory and execution-time characteristics: $S'$
schedules a node from $T$ at the same time than $S$, and a node $i$ from
$T'\setminus T$ at the same time than the father of $i$ is scheduled by $S$
(because $i$ has a null execution time).
\\

\noindent The second simplifying assumption is that all considered trees
are reduction trees:

\begin{definition}[reduction tree]
  A task tree is a \textbf{reduction tree} if the size of the output
  file of any inner node $i$ is not more than the sum of its input files:
  \begin{equation}
    \label{eq.f-redution}
    \f{i} \leq \sum_{j \in \Children{i}} \f{j}.
  \end{equation}
\end{definition}

This reduction property is very useful, because it implies that
executing an inner node does not increase the amount of memory needed
(this will be used for instance in Theorem~\ref{theo.memLimitedAlgos}).

For convenience, we sometimes use the following notation to denote
the sum of the sizes of the input files of an inner node $i$:
$$
\inputs{i} = \sum_{j\in \Children{i}} f_j.
$$

We now show how general trees can be transformed into reduction trees.

\paragraph{Turning trees into reduction trees} We can transform any
tree $T$ that does not satisfy the reduction property stated by
Equation~\eqref{eq.f-redution} into a tree where each (inner) node
satisfies it. Let $i$ be any inner node of $T$. We add to $i$ a new leaf
child $i'$ whose execution time is null ($\len{i'}=0$), whose
execution file is of null size ($\n{i'}=0$), and whose output file has
size: $$
\f{i'}=\max\left\{0,\f{i}-\left(\sum_{j \in \Children{i}}
    \f{j}\right)\right\}=\max\{0, \f{i}-\inputs{i}\}.
    $$

The new tree is not equivalent to the original one. Let us consider an
inner node $i$ that did not satisfy the reduction property. Then,
$\f{i'}= \f{i}-\inputs{i} >0$. The memory used to execute node $i$ in
the tree $T$ is: $\inputs{i}+\n{i}+\f{i}$. In the new tree, the memory
needed to execute this node is: $(\inputs{i}+(\f{i}-\inputs{i}) +
\n{i}+\f{i} > \inputs{i}+\n{i}+\f{i}$. Any schedule of the original
tree can be transformed into a schedule of the new tree with an
increase of the memory usage bounded by:
$$ p \times \max_i \{0, \f{i}-\inputs{i}\}.$$ Obviously, a more
clever approach is to transform a tree first into a tree without
execution files, and then to transform the new tree into a tree with the
reduction property. Under this approach, the increase of the memory
usage is bounded by:
$$
p \times \max_i \{0, \f{i}-\inputs{i}-\n{i}\}.$$

\paragraph{Transforming schedules} The algorithms proposed in the following
subsections produce schedules for reduction trees without execution files, which
might have been created from general trees which do not possess our simplifying
properties. The schedule $S'$ produced by an algorithm for a reduction tree
without execution files $T'$ can be readily transformed into a schedule $S$ for
the original tree $T$. To create schedule $S$, we simply remove all (leaf) nodes
from the schedule $S'$ that were introduced in the simplification transform
($i' \in T' \setminus T$). Because those nodes have zero processing time ($\forall i' \in
T' \setminus T: \len{i'}=0$) there is no impact on the ordering and on
the starting time of the
other nodes of $T$. In terms of memory consumption, the peak memory for schedule $S$
is never higher than that for schedule $S'$. A leaf $i'$ that was added to eliminate an
execution file might use memory earlier in $S'$ than the execution file $\n{i}$ in $S$,
but it is the same amount and freed at the same time. In terms of leaf nodes
introduced to enforce the reduction property, they might only increase the memory
needed for tree $T'$ (as discussed above); hence, removing these nodes can not
increase the peak memory needed for schedule $S$. In summary, the schedule $S$ for tree
$T$ has the same makespan as $S'$ and a peak memory that is not
greater than that of $S'$.

\subsubsection{Memory-bounded \ParInnerFirst and \ParDeepestFirst}
\label{sec:boundedlistsched}

Both \ParInnerFirst and \ParDeepestFirst are based on the list
scheduling approach presented in
Algorithm~\ref{algo.listScheduling}. To achieve a memory bounded
version of these algorithms for reduction trees, we modify
Algorithm~\ref{algo.listScheduling} to obtain
Algorithm~\ref{algo.listSchedulingMemLimit}. The code common to both
algorithms is shown in gray in
Algorithm~\ref{algo.listSchedulingMemLimit} and the new code is
printed in black.

We use the same event concept as previously. However, we only start
processing a node if i) it is an inner node; or ii) it is a leaf node
and the current memory consumption plus the leaf's output file
($\f{c}$) is less than the amount $M$ of available memory. Once a node
is encountered that fulfills neither of these conditions, the node
assignment is stopped ($\mathcal{P} \leftarrow \emptyset$) until the next
event. Therefore, Algorithm~\ref{algo.listSchedulingMemLimit} may
deliberately keep some processors idle when there are available tasks,
and thus does not necessarily produce a list schedule (hence, the name
of ``pseudo'' list schedules). Subsequently, the only approximation
guarantee on the makespan produced by these heuristics is that they
are $p$-approximations, the worst case for heuristics that always use
at least one processor at any time before the entire processing
completes.

The algorithms based on this memory-bounded scheduling approach are
called \ParInnerFirstMemLimit and \ParDeepestFirstMemLimit.


\begin{algorithm}
  \DontPrintSemicolon
  \caption{Pseudo list scheduling with memory limit ($T$, $p$, $O$, $M$) \label{algo.listSchedulingMemLimit}}
  \textcolor{mygray} {Insert leaves in priority queue $\mathit{PQ}$ according to order $O$\;
    \textcolor{mygray} {$\mathit{eventSet} \leftarrow \{0\}$   \tcc*{ascending order}}
    \textcolor{black} {$M_{\mathit{used}} \leftarrow 0$ \tcc*{amount of memory used}}
    \While(\tcc*[f]{event:node finishes}){$\mathit{eventSet} \neq \emptyset$}{
      $popHead(\mathit{eventSet})$\;
      $\Done \leftarrow$ set of the new ready nodes\;
      Insert nodes from \Done in $\mathit{PQ}$ according to order $O$\tcc*{available parents of nodes completed at event}
      $\mathcal{P} \leftarrow$ available processors\;
      \textcolor{black}{$M_{\mathit{used}} \leftarrow M_{\mathit{used}} - \sum_{j \in \Done} \inputs{j}$}\;
      \While{$\mathcal{P} \neq \emptyset$ and $\mathit{PQ} \neq \emptyset$}{
        \textcolor{black} {$c \leftarrow head(\mathit{PQ})$          \label{algo.line.candidate}\;
          \If{$|\Children{c}| > 0$ \textbf{or} $M_{\mathit{used}} + \f{c} \leq M$}{
          \label{algo.line.check}
            $M_{\mathit{used}} \leftarrow M_{\mathit{used}} + \f{c}$\;
            \textcolor{mygray} {$proc \leftarrow popHead(\mathcal{P})$}\;
            \textcolor{mygray} {$node \leftarrow popHead(\mathit{PQ})$}\;
            \textcolor{mygray} {Assign $node$ to $proc$}\;
            \textcolor{mygray} {$\mathit{eventSet} \leftarrow \mathit{eventSet} \cup finishTime(node)$}\;
          }
          \Else{$\mathcal{P} \leftarrow \emptyset$}
        }
      }
    }
  }
\end{algorithm}


\begin{theorem}
  \label{theo.memLimitedAlgos}
  The peak memory requirement of \ParInnerFirstMemLimit
  and \ParDeepestFirstMemLimit for a reduction tree without execution files
  processed with a memory bound $M$ and a node order $O$ is at most $2M$,
  if $M \geq M_{seq}$, where $M_{seq}$ is the peak memory usage of the
  corresponding sequential algorithm with the same node order $O$.
\end{theorem}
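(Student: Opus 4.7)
The approach would be to decompose the memory $M_{\text{used}}(t)$ at any time $t$ and bound two pieces separately. Let $S(t)$ be the set of tasks started by time $t$ and let $R_I(t)$ be the subset of currently running inner tasks. A key structural observation is that for any $c \in R_I(t)$, $\parent{c} \notin S(t)$: otherwise the parent would be currently running (it cannot have completed since $c$ has not), but a running task must have all its children completed, contradicting $c \in R_I(t)$. Hence the output $\f{c}$ of every running inner node lies in the \emph{frontier} of $S(t)$, i.e.\ the set of $j \in S(t)$ whose parent is not in $S(t)$.

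The first step is to prove the invariant $M^{\mathrm{clean}}(t) \leq M$, where $M^{\mathrm{clean}}(t) := \sum_{j \in S(t),\, \parent{j} \notin S(t)} \f{j}$ is the total size of output files at the frontier. I would proceed by induction on events. At a leaf start, the check in Algorithm~\ref{algo.listSchedulingMemLimit} forces $M_{\text{used}} \leq M$ immediately afterward; since the non-frontier memory consists only of the inputs of currently running inner tasks (which are non-negative), $M^{\mathrm{clean}} \leq M_{\text{used}}$, and the invariant is preserved. At an inner-node start $c$, the children of $c$ leave the frontier (their parent $c$ has now started) while $c$ itself joins it, so $M^{\mathrm{clean}}$ changes by $\f{c} - \inputs{c} \leq 0$ by the reduction property. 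Completions do not modify $S(t)$ and therefore leave $M^{\mathrm{clean}}$ unchanged.

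The second step is the refined invariant
\[
M_{\text{used}}(t) \;\leq\; M + \sum_{c \in R_I(t)} \f{c},
\]
again proved by induction on events. Leaf starts satisfy it immediately from the memory check, leaf completions change neither side, and inner-node starts add $\f{c}$ to both sides. The delicate case is an inner-node completion: $M_{\text{used}}$ drops by $\inputs{c}$ while the right-hand sum drops only by $\f{c}$, and the invariant survives precisely because $\inputs{c} \geq \f{c}$ in a reduction tree.

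Combining the two invariants then closes the argument: the outputs $\f{c}$ for $c \in R_I(t)$ are disjoint contributions to the frontier by the structural observation, so $\sum_{c \in R_I(t)} \f{c} \leq M^{\mathrm{clean}}(t) \leq M$, and therefore $M_{\text{used}}(t) \leq 2M$. The main subtlety I anticipate is the choice of the refined invariant: naively trying to bound $\sum_{c \in R_I(t)} \inputs{c}$ directly by $M$ appears to fail, since those inputs can live outside the frontier, whereas bounding $\sum_{c \in R_I(t)} \f{c}$ is exactly matched to what the reduction property buys in the inductive step for inner-node completions. Since the argument uses only properties of the algorithm that hold regardless of the order $O$, it applies identically to \ParInnerFirstMemLimit and \ParDeepestFirstMemLimit.
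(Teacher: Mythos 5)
Your factor-two argument is essentially sound and, in substance, it is the paper's own argument repackaged: your frontier quantity $M^{\mathrm{clean}}$ plays the role of the paper's ``$\mathit{In}_\textsc{in}+\mathit{Out}_\textsc{lf}+\mathit{InIdle}$'' (a base quantity that can only grow when a leaf is started, hence stays below $M$ thanks to the guard), and the reduction property is used in exactly the same two places (outputs of running inner nodes are dominated by their inputs; a newly started inner node's output is dominated by its already-resident inputs). The paper phrases this as a static decomposition of the memory at the instant a candidate node starts, while you phrase it as two event-induction invariants; your second invariant $M_{\mathit{used}}\leq M+\sum_{c\in R_I}\f{c}$ is a slightly different bookkeeping of the same slack, and the disjointness of the running inner nodes' outputs inside the frontier is the same observation the paper makes implicitly. (Like the paper, you read $M_{\mathit{used}}$ as the memory actually occupied, with a node's inputs released at its completion; that is the reading under which both proofs are coherent.)

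The genuine gap is that you prove only half of the theorem and never use the hypothesis $M\geq M_{seq}$ --- which should have been a warning sign. Your invariants show that \emph{whatever the heuristic does} it never occupies more than $2M$, but they do not show that the heuristic actually processes the whole tree: since leaf starts are refused whenever the test fails, the algorithm could a priori stall forever with unprocessed leaves, in which case the statement ``the tree is processed with peak memory at most $2M$'' is vacuous for the part of the tree that matters. The paper devotes the second half of its proof to precisely this point: assuming the algorithm terminates with work left, it considers the first unprocessed leaf $l$ in the order $O$, argues that at that moment the parallel heuristic has completed exactly the leaves that the sequential traversal with order $O$ has completed before starting $l$, and that (by the reduction property) the memory then available is no smaller than what the sequential traversal has available just before $l$; since the sequential traversal runs within $M_{seq}\leq M$, leaf $l$ could in fact be started, a contradiction. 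You need to add this completeness argument (or an equivalent one) for both \ParInnerFirstMemLimit and \ParDeepestFirstMemLimit; without it, and without any use of $M\geq M_{seq}$, the proof of the theorem as stated is incomplete.
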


\begin{proof}
  We first show that the required memory never exceeds $2M$ and then
  we show that the algorithms completely process the considered tree
  $T$.

  We analyze the memory usage at the time a new candidate node $c$ is
  considered for execution (line~\ref{algo.line.candidate} of
  Algorithm~\ref{algo.listSchedulingMemLimit}). The amount of
  currently used memory is then $M_{\mathit{used}} = \mathit{In}_\textsc{in} +
  \mathit{Out}_\textsc{in} + \mathit{Out}_\textsc{lf}
  + \mathit{InIdle}$, where:
  \begin{compactitem}
  \item $\mathit{In}_\textsc{in}$ is the size of the input files of
    the currently processed inner nodes;
  \item $\mathit{Out}_\textsc{in}$ is the size of the output files of
    the currently processed inner nodes;
  \item $\mathit{Out}_\textsc{lf}$ is the size of the output files of
    the currently processed leaves;
  \item $\mathit{InIdle}$ is the size of the input files stored in
    memory but not currently used (because they are input files of
    inner nodes that are not yet ready).
  \end{compactitem}

  There are two cases, the candidate node $c$ can be either a leaf node or an inner node:
  \begin{compactenum}
  \item\label{thm6.case1} $c$ is a leaf node.  The processing of a leaf node only starts
    if $M_{\mathit{used}} + \f{c} \leq M$. Therefore, the processing of a leaf
    node never provokes the violation of the memory bound of $M$ and,
    thus, a fortiori, of a memory limit of $2M$.

  \item\label{thm6.case2} $c$ is an inner node. The processing of a
    candidate inner node always starts right away, regardless of the
    amount of available memory. When the processing of $c$ starts, the
    amount of required memory becomes $M_{\mathit{new}} =
    \mathit{In}_\textsc{in} + \mathit{Out}_\textsc{in} +
    \mathit{Out}_\textsc{lf} + \mathit{InIdle} + \f{c}$. $T$ is by
    hypothesis a reduction tree. Therefore, the size of the output
    file $\f{c}$ does not exceed $\mathit{InIdle}$, that is, the size
    of all possible input files stored in memory right before the
    start of the processing of inner node $c$, but not used at that
    time, because this includes all the input files of inner node
    $c$. Also, the total size of the output files of the processed
    inner nodes, $\mathit{Out}_\textsc{in}$, cannot exceed the total
    size of the input files of the processed inner nodes,
    $\mathit{In}_\textsc{in}$.  Therefore, $M_{\mathit{new}} =
    \mathit{In}_\textsc{in} + \mathit{Out}_\textsc{in} +
    \mathit{Out}_\textsc{lf} + \mathit{InIdle} + \f{c} \leq
    \mathit{In}_\textsc{in} + \mathit{Out}_\textsc{in} +
    \mathit{Out}_\textsc{lf} + 2 \mathit{InIdle} \leq 2
    \mathit{In}_\textsc{in} + \mathit{Out}_\textsc{lf} + 2
    \mathit{InIdle} \leq 2( \mathit{In}_\textsc{in} +
    \mathit{Out}_\textsc{lf} + \mathit{InIdle})$.

    So the new memory requirement $M_{\mathit{new}}$ is not greater
    than twice the memory occupied by all \textit{input} files and all
    \emph{output} files of leaf nodes. Because the tree is by
    hypothesis a reduction tree, executing an inner node never
    increases the total size of all input files and all output files of
    leaves. This can only happen by starting a leaf, but that is
    not done if it would exceed the required memory $M$. Therefore,
    $\mathit{In}_\textsc{in} + \mathit{Out}_\textsc{lf} +
    \mathit{InIdle}$ never exceeds $M$ and $M_{\mathit{new}} \leq 2M$.
\end{compactenum}

We now prove that when the algorithm ends the entire input tree has
been processed. We reason by contradiction and assume that this is not
the case.  Ready inner nodes are processed without checking the amount
of available memory. Therefore, when the algorithm terminates without
having completed the processing of the tree, $\mathit{eventSet}$ is
empty but some leaves have not been processed. Then, let $l$ be the
first un-processed leaf, according to the order $O$. At the time
Algorithm~\ref{algo.listSchedulingMemLimit} terminates, it has
processed exactly the same leaves as the sequential algorithm when
it starts processing leaf $l$. Because $\mathit{eventSet}$ is empty,
there are no remaining ready inner nodes and no node is processed at
the time of the algorithm termination. Because of the hypothesis that
$T$ is a reduction tree, the amount of available memory when
Algorithm~\ref{algo.listSchedulingMemLimit} terminates is not smaller
than the amount of available memory under the sequential algorithm
right before it starts processing leaf $l$.  Because the sequential
algorithm can process the whole tree with a peak memory usage of
$M_{seq} \leq M$, the processing of leaf $l$ can be started by
Algorithm~\ref{algo.listSchedulingMemLimit}. This contradicts the
assumption of early termination.
\end{proof}

We define a variant \ParDeepestFirstMemLimitOptim
of \ParDeepestFirstMemLimit, and a variant \ParInnerFirstMemLimitOptim
of \ParInnerFirstMemLimit, by being more aggressive about starting
leaves. Instead of checking for the condition $M_{\mathit{used}} + \f{c} \leq M$
before starting a leaf node $c$ (line~\ref{algo.line.check} of
Algorithm~\ref{algo.listSchedulingMemLimit}), it is in fact sufficient
to check that $\mathit{In}_\textsc{in} +
\frac{1}{2}\mathit{Out}_\textsc{lf} + \mathit{InIdle} + \f{c} \leq M$
(using the notation of the proof of Theorem~\ref{theo.memLimitedAlgos}).
For Case~\eqref{thm6.case1} of the proof, one just needs to
remark that after leaf $c$ is started, $M_{\mathit{new}} =
\mathit{In}_\textsc{in} + \mathit{Out}_\textsc{in} +
\mathit{Out}_\textsc{lf} + \mathit{InIdle} + \f{c}$. Then, because the
tree is a reduction tree, $\mathit{Out}_\textsc{in} \leq
\mathit{In}_\textsc{in}$. Therefore, $M_{\mathit{new}} \leq
2\mathit{In}_\textsc{in} + \mathit{Out}_\textsc{lf} + \f{c}+
\mathit{InIdle} \leq 2\mathit{In}_\textsc{in} +
\mathit{Out}_\textsc{lf} + \f{c}+ 2\mathit{InIdle}$, which, in turn,
is no greater than $2M$ because of the new condition.
The modified condition has no impact on the study of Case~\eqref{thm6.case2}, because the inequality $\mathit{In}_\textsc{in} + \frac{1}{2}\mathit{Out}_\textsc{lf} + \mathit{InIdle} \leq M$ is sufficient to conclude that case.

\paragraph{Memory}
Theorem~\ref{theo.memLimitedAlgos} establishes that the peak memory
required by \ParInnerFirstMemLimit and \ParDeepestFirstMemLimit with
$p$ processors is at most twice that of their sequential execution
($p=1$) with the same order. It should be noted that this peak
requirement $M_{seq}$ does not correspond in general to the memory
requirement of an \textit{optimal} sequential algorithm. In
particular, the sequential execution of \ParInnerFirstMemLimit
corresponds to a postorder traversal, which is not optimal for all
instances, but generally provides good
results~\cite{ipdps-tree-traversal}. We propose to
use \ParInnerFirstMemLimit with a node order $O$ that corresponds to
an optimal sequential postorder, e.g., with \cite{Liu86}. The memory
requirement of the sequential \ParDeepestFirstMemLimit is unbounded
compared to the optimal sequential memory requirement, because the
same arguments apply as the ones discussed for \ParDeepestFirst in
Section~\ref{sec.parDeepestFirst}.

\paragraph{Makespan} We have already stated that the above heuristics
are $p$-approximation algorithms for makespan minimization. The
following lemma refines this result:
\begin{lemma}
  \ParInnerFirstMemLimit and \ParInnerFirstMemLimitOptim are both
  $p$-approximation algorithms for makespan minimization and this
  bound is tight.
\end{lemma}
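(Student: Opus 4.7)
The plan is to first establish the upper bound $C_{\max} \leq p\cdot C^*_{\max}$ by the standard ``always one processor busy'' argument, and then to prove tightness by exhibiting a parametric family of instances whose ratio approaches $p$.

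For the upper bound, I will rely on the completion guarantee of Theorem~\ref{theo.memLimitedAlgos}: if every processor were simultaneously idle while tasks remained, the event set would be empty and no further event could fire, so the algorithm would deadlock, contradicting that theorem. Hence at least one processor works at every instant before completion, so $C_{\max} \leq \sum_i \len{i}$. Combining with the work lower bound $C^*_{\max} \geq \frac{1}{p}\sum_i \len{i}$ of Lemma~\ref{le.lower.bounds} yields the $p$-approximation.

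For tightness, the plan is to force the heuristic into a fully serial execution while an optimal schedule on $p$ processors parallelizes almost perfectly. I will consider, for growing $N$, the tree made of a root $r$ with $N$ inner children $c_1,\ldots,c_N$, each $c_i$ being the parent of a single leaf $l_i$; all execution times are $1$, all execution files are null, $\f{l_i}=M$, and $\f{c_i}=\f{r}=0$. This is a reduction tree without execution files, with sequential peak memory $M_{seq}=M$, so the memory bound is set to $M$ in order to satisfy the hypothesis of Theorem~\ref{theo.memLimitedAlgos}. Using an optimal sequential postorder to break ties, the priority queue lists the leaves as $l_1,\ldots,l_N$; at time~$0$ the head check $0+M\leq M$ succeeds for one processor, but the second check $M+M\leq M$ fails, so every other processor idles. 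For \ParInnerFirstMemLimitOptim, the relaxed check on $l_2$ evaluates to $\frac{1}{2}M+M=\frac{3}{2}M>M$, which fails as well. The algorithm therefore executes the tasks in the strictly serial order $l_1,c_1,l_2,c_2,\ldots,l_N,c_N,r$, giving $C_{\max}=2N+1$, whereas an optimal schedule on $p$ processors dispatches the $N$ leaves and then the $N$ inner nodes in batches of $p$, achieving $C^*_{\max}\leq 2\lceil N/p\rceil+1$. The ratio $(2N+1)/(2\lceil N/p\rceil+1)$ tends to $p$ as $N\to\infty$, showing that the factor $p$ cannot be improved.

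The main subtlety will be to ensure that the Optim variant behaves as badly as the basic one: its check discounts $\mathit{Out}_\textsc{lf}$ by a factor of $\frac{1}{2}$, which on generic instances can admit extra parallelism. The choice $\f{l_i}=M$ exactly equal to the memory bound is what keeps even the discounted check strict enough to forbid a second leaf from starting while another is running, and is therefore central to the construction; everything else is routine bookkeeping.
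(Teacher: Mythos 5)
Your upper-bound half is fine and is essentially the paper's own argument (completion from Theorem~\ref{theo.memLimitedAlgos} implies at least one busy processor at all times, then compare with the work bound of Lemma~\ref{le.lower.bounds}). The problem is the tightness half, where your construction's key claim --- forced serial execution of \ParInnerFirstMemLimit --- is not justified and in fact fails under Algorithm~\ref{algo.listSchedulingMemLimit} as written. Inner nodes are started with no memory test at all (the condition on line~\ref{algo.line.check} is ``$|\Children{c}|>0$ \textbf{or} \ldots''), and, crucially, when an inner node $j$ becomes ready the algorithm executes $M_{\mathit{used}} \leftarrow M_{\mathit{used}} - \inputs{j}$. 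So on your tree, at time $1$ the leaf $l_1$ finishes, $c_1$ becomes ready, $M_{\mathit{used}}$ drops from $M$ to $0$, $c_1$ is launched for free, and the test for $l_2$ reads $0+M\leq M$ and passes: from then on $l_{i+1}$ runs concurrently with $c_i$, the makespan is about $N+2$ rather than $2N+1$, and the exhibited ratio tends to $p/2$, not $p$. (Your serialization argument does go through for \ParInnerFirstMemLimitOptim, whose test is stated directly in terms of $\mathit{In}_\textsc{in}$, $\mathit{Out}_\textsc{lf}$, $\mathit{InIdle}$ and is blocked by $\mathit{In}_\textsc{in}=M$ while $c_i$ runs, but the lemma requires tightness for both heuristics.)

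A secondary weakness: your fast comparison schedule processes $p$ leaves at once and thus uses memory $pM$, so even where your trace is correct you only obtain tightness against a memory-oblivious optimum. The paper's instance (Figure~\ref{fig.mem-bounded-list-sched-heterogeneous}) avoids both issues at once: the file that blocks the next subtree is the output $M/p$ of a \emph{currently running} long leaf $d_i$ of length $k$, which no accounting subtlety can release, so the $p$ long tasks are genuinely serialized to a makespan of $p(2+k)$, while the schedule $c_1,b_1,\ldots,c_p,b_p$ followed by the $d_i$'s in parallel has makespan $2p+k$ and respects the same memory bound $M$. To repair your proof you would either need to switch to an instance of this type, or explicitly pin down the memory-accounting semantics of $M_{\mathit{used}}$ and show your serialization claim holds under it for the basic variant as well.
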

\begin{proof}
  We establish this result by studying the tree in
  Figure~\ref{fig.mem-bounded-list-sched-heterogeneous}. This tree can
  be processed with a peak memory usage of $M$. We assume
  that \ParInnerFirstMemLimit is called with this memory limit.
  The key observation is that in any schedule, among the three
  descendants of an $a_i$ node, the nodes $c_i$ and $b_i$ must be
  processed before the $d_i$ node: otherwise, keeping in memory the
  output file of size $M/p$ of node $d_i$ makes it impossible to start
  processing the leaf node $c_i$ because of its output file of size
  $M$. And since under \ParInnerFirstMemLimit, leaf nodes are
  processed according to a postorder traversal, the processing of the
  subtrees is sequentialized and the overall processing takes time
  $p(2+k)$. On the other hand, with respect to the makespan, it would
  be better to first sequentially process in that order $c_1$, $b_1$,
  $c_2$, $b_2$, ..., $c_p$, and $b_p$, which would take a time $2p$,
  and then process in parallel the $d_i$'s for an overall makespan of
  $2p+k$. Hence, on this example, the approximation ratio of
  \ParInnerFirstMemLimit is no smaller than $\frac{p(2+k)}{2p+k}$
  which tends to $p$ when $k$ tends to infinity.
\end{proof}
We do not have a similar result for the memory limited deepest first
algorithms as already the sequential traversal with these algorithms (which
determines the given memory limit) can require significantly more memory
than a postorder traversal. For the example in
Figure~\ref{fig.mem-bounded-list-sched-heterogeneous}, the minimum sequential
memory for a deepest first traversal is equal to $pM$. The additional
memory buys a lot of freedom for  \ParDeepestFirstMemLimit and makes
the comparison harder.


%

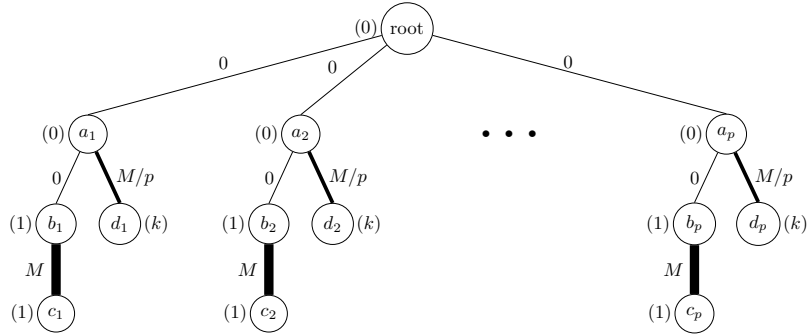
\begin{figure*}
  \centering
  \scalebox{0.7}{
  \begin{tikzpicture}[child anchor = north, label distance = -2pt]
    \node (root) [circle,draw,label=left:(0)]{root}
    [sibling distance=40mm, level distance=20mm]
    child{  node[circle, draw,label=left:(0)]{$a_1$}
      [sibling distance=12mm, level distance=17mm]
      child{node[circle, draw,label=left:(1)]{$b_1$}
        child{node[circle, draw,label=left:(1)]{$c_1$} edge from parent [line width=5pt] node [left] {$M$} } edge from parent node [left] {0}  }
      child{node[circle, draw,label=right:($k$)]{$d_1$} edge from parent [line width=2pt] node [right] {$M/p$} }
      edge from parent node [above left] {0}
    }
    child{  node[circle, draw,label=left:(0)]{$a_2$}
      [sibling distance=12mm, level distance=17mm]
      child{node[circle, draw,label=left:(1)]{$b_2$}
        child{node[circle, draw,label=left:(1)]{$c_2$} edge from parent [line width=5pt] node [left] {$M$} } edge from parent node [left] {0}  }
      child{node[circle, draw,label=right:($k$)]{$d_2$} edge from parent [line width=2pt] node [right] {$M/p$} }
      edge from parent node [above left] {0}
    }
    child{ node[circle, draw,draw=none, fill=white]{\Huge \bfseries \ldots } edge from parent [draw=none] }
    child{  node[circle, draw,label=left:(0)]{$a_p$}
      [sibling distance=12mm, level distance=17mm]
      child{node[circle, draw,label=left:(1)]{$b_p$}
        child{node[circle, draw,label=left:(1)]{$c_p$} edge from parent [line width=5pt] node [left] {$M$} } edge from parent node [left] {0}  }
      child{node[circle, draw,label=right:($k$)]{$d_p$} edge from parent [line width=2pt] node [right] {$M/p$} }
      edge from parent node [above left] {0}
    }
    ;
  \end{tikzpicture}}
\caption{Tree used to establish the worst-case performance of most
  memory bounded heuristics. Node labels in parentheses are processing
  times, edge labels are memory weights. All nodes have null-size
  processing files.}
  \label{fig.mem-bounded-list-sched-heterogeneous}
\end{figure*}

\subsubsection{Memory booking heuristic \BookingInnerFirst}
\label{sec:memorybooking}

The two heuristics described in the previous section satisfy an
achievable memory bound, $M$, in a relaxed way: the guarantee is that
they never use more than twice the memory limit. Here, we aim at
designing a heuristic that satisfies an achievable memory bound $M$ in
the strong sense: the heuristic never uses more than a memory of $M$.

To achieve such a goal, we want to ensure that whenever an inner node
$i$ becomes ready there is enough memory to process it. Therefore, we
book in advance some memory for its later processing. Our goal is to
book as little memory as possible, and to do so as late as possible.
The algorithm then relies on a sequential postorder schedule, denoted
$\PO$: for any node $k$ in the task graph, $\PO(k)$ denotes the step
at which node $k$ is executed under $\PO$. Let $j$ be the last child
of $i$ to be processed.  If the total size of the input files of $j$
is larger than (or equal to) $f_i$, then only that last child will
book some memory for node $i$. In this case (part of) the memory that
was used to store the input files of $j$ will be used for $f_i$.  If
the total size of the input files of $j$ is smaller than $f_i$, then
the second to last child of $i$ will also have to book some memory for
$f_i$, and so on. The following recursive formula states the amount of
memory $\mathit{Contrib}[j]$ a child $j$ has to book for its parent
$i$:
\begin{equation*}
  \mathit{Contrib}[j] = \min\left(
    \inputs{j},
    f_i - \sum_{\substack{j' \in \Children{i}\\ \PO(j')>\PO(j)}} \mathit{Contrib}[j']
  \right)
\end{equation*}

If $j$ is a leaf, it may also have to book some memory for its parent.
However, the behavior for leaves is quite different than for inner
nodes. A leaf node cannot transfer some of the memory used for its
input files (because it does not have any) to its parent for its
parent output file. Therefore, the memory booked by a leaf node may
not be available at the time of the booking. However, this memory will
eventually become available (after some inner nodes are processed);
booking the memory prevents the algorithm from starting the next leaf
if it would use too much memory: this ensures that the algorithm
completes the processing without violating the memory bound. The
contribution of a leaf $j$ for its parent $i$ is:
\begin{equation*}
  \mathit{Contrib}[j] = f_i - \sum_{\substack{j' \in \Children{i}\\ \PO(j')>\PO(j)}} \mathit{Contrib}[j']
\end{equation*}
Note that the value of $\mathit{Contrib}$ for each node can be
computed before starting the algorithm, in a simple tree
traversal. Using these formulas, we are able to guarantee that enough
memory is booked for each inner node $i$:
$$
\sum_{j\in \Children{i}} \mathit{Contrib}[j] = f_i.
$$
Using these definitions, we design a new heuristic,
\BookingInnerFirst, which is described in
Algorithm~\ref{alg.booking}. In this algorithm, $\mathit{Booked}[i]$
denotes the amount of memory currently booked for the processing of an
inner node $i$. We make use of a new notation: we denote by
$\Ancestors{i}$ the set of nodes on the path from $i$ to the root node
(excluding $i$ itself), that is, all ancestors of
$i$. 

Note that, contrarily to \ParInnerFirstMemLimit, \BookingInnerFirst
does not guarantee that there is always enough memory available to
process an inner node $i$ as soon as it becomes ready. This is why
Lemma~\ref{le.booking.inner} only guarantees that an inner node $i$
will \emph{eventually} be processed if a leaf $j$ with $\PO(j)>\PO(i)$
is started by \BookingInnerFirst.


\begin{algorithm}[tb]
  \DontPrintSemicolon
  \caption{\BookingInnerFirst($T$, $p$, $\PO$, $M$)}
  \label{alg.booking}
  \KwIn{tree $T$, number of processor $p$, postorder $\PO$, memory limit $M$ (not
    smaller than the peak memory of the sequential
    traversal defined by $\PO$)}
  \lForEach{node $i$}{
    $\mathit{Booked}[i] \gets 0$\;
  }
  $M_{\mathit{used}} \gets 0$\;
  \While{the whole tree is not processed}{
    Wait for an event (task finish time or starting point of the algorithm)\;
    \ForEach{finished non-leaf node $j$ with parent $i$}{
      $M_{\mathit{used}} \gets M_{\mathit{used}} - \inputs{j}$\;
      $\mathit{Booked}[i] \gets \mathit{Booked}[i] + \mathit{Contrib}[j]$\;
    }
    $\Done \leftarrow$ set of the new ready nodes\;
    Insert nodes from \Done in $\mathit{PQ}$ according to order $\PO$\;
    $\mathit{WaitForNextTermination} \gets \false$\;
    \While{$\mathit{WaitForNextTermination} = \false$ \textbf{and} there is an available processor $P_u$ \textbf{and} $\mathit{PQ}$ is not empty}{
      $j \gets pop(\mathit{PQ})$\;
      \If{$j$ is an inner node and $M_{\mathit{used}} + f_j \leq M$}{
        $M_{\mathit{used}} \gets M_{\mathit{used}} + f_j$\;
        $\mathit{Booked}[j] \gets 0$\;
        Make $P_u$ process $j$\;
      }
      \ElseIf{$j$ is a leaf and $M_{\mathit{used}} + f_j + \sum_{k\notin \Ancestors{j}} \mathit{Booked}[k] \leq M$\label{alg.booking.line16}}{
        $M_{\mathit{used}} \gets M_{\mathit{used}} + f_j$\;
          $\mathit{Booked}[\text{parent of }j] \gets
          \mathit{Booked}[\text{parent of }j] + \mathit{Contrib}[j]$\;
        Make $P_u$ process $j$\;
      }
      \Else{
        $push(j,\mathit{PQ})$\;
        $\mathit{WaitForNextTermination} \gets \true$\;
      }
    }
}
\end{algorithm}

\begin{lemma}
  Consider any inner node $i$. If some leaf $j$ with $\PO(j)>\PO(i)$ has
  been started by Algorithm~\ref{alg.booking}, then at some point,
  there will be enough memory to process $i$.
  \label{le.booking.inner}
\end{lemma}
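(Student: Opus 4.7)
The plan is to show that the memory test at line~\ref{alg.booking.line16} that succeeded when $j$ was started effectively reserves enough memory for $i$'s eventual processing. First, since $j$ is a leaf with $\PO(j) > \PO(i)$, $j$ cannot be a descendant of $i$ (every descendant of $i$ is processed strictly before $i$ in any postorder), so $i \notin \Ancestors{j}$. Hence $\mathit{Booked}[i]$ is one of the summands of the test $M_{\mathit{used}} + f_j + \sum_{k \notin \Ancestors{j}} \mathit{Booked}[k] \leq M$ that was verified just before $j$'s start, and this immediately yields $M_{\mathit{used}} + \mathit{Booked}[i] \leq M$ right after $j$ starts.

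Next I would combine two observations. First, by the recursive definition of $\mathit{Contrib}$ together with the reduction-tree property, which implies $\sum_{j' \in \Children{i}} \mathit{Contrib}[j'] = f_i$, once every child of $i$ has been dealt with---inner children upon completion, leaf children upon start---we have $\mathit{Booked}[i] = f_i$. Second, because all leaves are inserted in $\mathit{PQ}$ at the outset in $\PO$ order and popped in that order, starting $j$ forces every leaf $l$ with $\PO(l) \leq \PO(j)$ to also have been started; in particular every leaf in the subtree rooted at $i$ has been started. An induction on $\PO$ over the inner descendants of $i$ then ensures that $i$ itself eventually becomes ready with $\mathit{Booked}[i] = f_i$.

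The core of the argument is to propagate the inequality $M_{\mathit{used}} + \mathit{Booked}[i] \leq M$ forward in time until $i$ becomes ready. For this, I would maintain the stronger invariant
$$M_{\mathit{used}} + \sum_{k \notin \bigcup_{r \in R(t)} \Ancestors{r}} \mathit{Booked}[k] \leq M,$$
where $R(t)$ denotes the set of tasks currently being processed. Preservation would be checked event by event: leaf starts are controlled directly by the test at line~\ref{alg.booking.line16} (and enlarging the excluded ancestor set only shrinks the sum), while inner starts preserve the whole quantity $M_{\mathit{used}} + \sum_{k} \mathit{Booked}[k]$ because $\mathit{Booked}[m]$ drops from $f_m$ to $0$ exactly as $M_{\mathit{used}}$ grows by $f_m$. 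At the moment $i$ becomes ready, no descendant of $i$ is running, so $i$ is in the summation range and $\mathit{Booked}[i] = f_i$ is one of the summands; the invariant then gives $M_{\mathit{used}} + f_i \leq M$, which is exactly the desired memory availability.

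The hard part will be verifying the invariant across inner completions. When an inner node $m$ finishes, the excluded ancestor set may shrink (some ancestor of $m$ may no longer be covered by any other running task) and thereby admit previously ignored bookings into the sum, while $M_{\mathit{used}}$ drops only by $\inputs{m}$ and $\mathit{Booked}[\text{parent}(m)]$ rises by at most $\inputs{m}$. Controlling the net change will require a careful bookkeeping argument based on the reduction-tree property to bound the newly admitted bookings against the memory made available by the completion, likely by a telescoping argument along the path from $m$ up to the deepest ancestor still covered by some other running task.
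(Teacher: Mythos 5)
You correctly observe that $i\notin\Ancestors{j}$ and that $\mathit{Booked}[i]=f_i$ once all children of $i$ have contributed, but the core of your argument — propagating a memory inequality forward in time — rests on an invariant whose crucial case you explicitly leave open, and that is exactly where the difficulty of this lemma lies. The paper does not use your time-uniform invariant with the dynamically changing excluded set $\bigcup_{r\in R(t)}\Ancestors{r}$. It fixes $j$ to be the started leaf with \emph{largest} $\PO(j)$ among those with $\PO(j)>\PO(i)$ (so no further leaf start can occur), and from $j$'s start until the algorithm stops it tracks the single quantity $A=M-M_{\mathit{used}}-\sum_{k\notin\Ancestors{j}}\mathit{Booked}[k]$ with a \emph{static} excluded set; $A$ is allowed to dip temporarily while an ancestor of $j$ is being processed, the dip being recouped at that ancestor's completion via the reduction property, and $A\geq 0$ is claimed only at the final stopping point. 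With your dynamic excluded set the problematic events are completions: when a node $m$ — inner \emph{or} leaf — finishes, $\Ancestors{m}$ may leave the excluded set and their bookings re-enter the sum, while the memory freed is only $\inputs{m}$ (and nothing at all for a leaf completion, a case your event list implicitly treats as harmless, which is only valid for a static excluded set). These re-entering bookings are not bounded by $\inputs{m}$: $\mathit{Booked}$ on the path above $m$ can contain large contributions from already-finished subtrees of $m$'s siblings (e.g., a big leaf sibling of $m$'s ancestor started earlier), so the ``telescoping along the path from $m$ upward'' you propose cannot close this case by local bookkeeping. If your invariant is true at all, establishing it needs a global argument (essentially a comparison with the sequential postorder, which is what the paper's \emph{following} theorem performs), and no such argument is given; moreover the moment you want to invoke the invariant — when $i$ becomes ready — is itself a completion event, i.e.\ precisely the unproven case.

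Two secondary issues. First, you also set out to prove that $i$ eventually becomes ready (induction over the inner descendants of $i$); this is more than the lemma asserts, and your sketch is circular as stated: knowing that enough memory is available ``at some point'' does not by itself imply that the algorithm actually starts the node at such a point, so this would require the machinery of the subsequent theorem rather than a one-line induction. Second, your opening deduction that $M_{\mathit{used}}+\mathit{Booked}[i]\leq M$ right after $j$ starts is correct but is only the trivial part of the argument; all the content of the paper's proof lies in carrying the inequality from $j$'s start to the stopping time, which is the part your proposal does not supply.
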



\begin{proof}
  By contradiction, assume that an available inner node $i$ can never
  be processed because of memory constraints, that is,
  Algorithm~\ref{alg.booking} stops without processing $i$, and some
  leaf $j$ with $\PO(j)>\PO(i)$ has been started (in case of several
  such leaves, we consider the one with largest $\PO(j)$). Note
  that $i$ cannot be a parent of $j$ (otherwise we would have $\PO(i)
  > \PO(j)$). We consider the amount $A=M-M_{\mathit{used}} -
  \sum_{k \notin \Ancestors{j}} \mathit{Booked}[k]$ and its
  evolution. Before starting $j$, we check that $A \geq f_j$. When
  starting $j$, the amount of available memory is decreased by $f_j$
  and, thus, we have $A \geq 0$. The following events may happen after
  the beginning of $j$:
  \begin{compactitem}
  \item Some inner node $u$ not in $\Ancestors{j}$ is terminated. Let
    us call $v$ its parent. When $u$ completes,
    $M_{\mathit{used}}$ decreases by \inputs{u}, while
    $\mathit{Booked}[v]$ increases by $\mathit{Contrib}[u] \leq
    \inputs{u}$. Thus, $A$ does not decrease.
  \item Some inner node $k$ not in $\Ancestors{j}$ is started. In that
    case, the booked memory $f_k$ is traded for used memory, and the
    total memory amount $A$ is preserved.
  \item An inner node $u$ in $\Ancestors{j}$ is started. In this
    case, the amount of available memory may temporarily
    decrease. However, because of the reduction property, the amount
    of memory freed when $u$ completes is not smaller than the amount
    of additional memory temporarily used for the processing of $u$.
  \item A leaf node has completed: this modifies neither the amount of
    available or booked memory and, so, $A$ is left unchanged.
  \end{compactitem}
  Therefore, when the algorithm stops with $i$ available, $A\geq
  0$. Thus, $M-M_{\mathit{used}}\geq \mathit{Booked}[i] = f_i$:
  there is enough memory to process $i$.
\end{proof}




Using the previous lemma, we now prove Algorithm \BookingInnerFirst.

\begin{theorem}
  \BookingInnerFirst called with a postorder $\PO$ and a memory bound
  $M$ processes the whole tree with memory $M$ if $M$ is not smaller
  than the peak memory of the sequential traversal defined by $\PO$.
\end{theorem}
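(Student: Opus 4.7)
The statement bundles two guarantees: that the memory consumption never exceeds $M$, and that the algorithm actually processes every node. I would prove them in turn.

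For the memory bound the argument is essentially bookkeeping. The counter $M_{\mathit{used}}$ is incremented only by $f_j$ when a task $j$ is started, and decremented only by $\inputs{j}$ when a non-leaf task $j$ finishes; leaf completions do not change it. Each start is guarded by an explicit check ($M_{\mathit{used}} + f_j \leq M$ for an inner node and the stronger booking test on line~\ref{alg.booking.line16} for a leaf), and decrements cannot push $M_{\mathit{used}}$ over $M$. Hence $M_{\mathit{used}} \leq M$ is maintained throughout.

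For completeness I would argue by contradiction that the algorithm cannot deadlock. Suppose it reaches a state from which no further progress is possible: no task is running, so no future event will arrive, and some node remains unprocessed. Since the algorithm pops from $\mathit{PQ}$ strictly in $\PO$ order, the head $N^\star$ of $\mathit{PQ}$ has the property that every task with $\PO$ index smaller than $\PO(N^\star)$ has already been started; combined with the assumption that no task is running, every such task has in fact finished. Consequently the parallel state precisely coincides with the state the sequential $\PO$ traversal would have reached just before processing $N^\star$; in particular $M_{\mathit{used}}$ equals the sequential memory footprint $M_{\mathit{seq}}(N^\star)$ at that point.

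It remains to show the memory check for $N^\star$ succeeds in this configuration. For an inner $N^\star$ the check is $M_{\mathit{used}} + f_{N^\star} \leq M$, which holds because $M_{\mathit{seq}}(N^\star) + f_{N^\star}$ is at most the peak of the sequential $\PO$ traversal, which in turn is at most $M$ by hypothesis. For a leaf $N^\star$ the check additionally features $\sum_{k \notin \Ancestors{N^\star}} \mathit{Booked}[k]$, and I would show this sum is zero in the stuck state: inner nodes $k$ already processed have $\mathit{Booked}[k] = 0$ because it is reset at start and never updated afterward; inner nodes $k$ with $\PO(k) > \PO(N^\star)$ that are not ancestors of $N^\star$ lie entirely in a subtree processed later in $\PO$ than $N^\star$'s subtree, so none of their children have been started or finished and their booking is still zero; and all remaining unprocessed inner nodes are ancestors of $N^\star$ and thus excluded from the sum. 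The leaf check therefore reduces to $M_{\mathit{seq}}(N^\star) + f_{N^\star} \leq M$, again giving a contradiction.

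The main obstacle is the verification that $\mathit{Booked}[k]$ vanishes for every non-ancestor $k$ in the stuck state, for which I rely on the $\PO$-order invariance of $\mathit{PQ}$, the contiguous-subtree property of postorder traversals, and the bookkeeping rules for $\mathit{Booked}$. Once the stuck state is ruled out, the algorithm keeps making progress (either by starting a task or by awaiting the completion of a running one), and so it terminates with every node processed. Lemma~\ref{le.booking.inner} can alternatively be invoked when the blocking head is an inner node and a later leaf has already been started, but the argument above handles the remaining configurations uniformly.
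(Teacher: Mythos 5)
There is a genuine gap in your completeness argument, and it sits exactly where the booking mechanism matters. Your central step asserts that, in the stuck state, the parallel execution coincides with the state of the sequential $\PO$ traversal just before $N^\star$, so that $M_{\mathit{used}}$ equals the sequential footprint and the sequential-peak hypothesis finishes the proof. You justify this by saying the algorithm ``pops from $\mathit{PQ}$ strictly in $\PO$ order,'' but pop order among \emph{currently ready} nodes is not global start order: an inner node enters $\mathit{PQ}$ only when its last child finishes, and while it is not yet ready the head of $\mathit{PQ}$ can be a leaf with a \emph{larger} $\PO$ value, which the algorithm may start (and whose completion may in turn make later inner nodes ready and processed). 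So when the blocked head $N^\star$ is an inner node, the processed set can strictly exceed the $\PO$-prefix, $M_{\mathit{used}}$ can exceed the sequential footprint at step $\PO(N^\star)$, and the inequality $M_{\mathit{used}} + f_{N^\star} \leq M$ does not follow from the sequential peak. This out-of-order configuration is precisely why the algorithm books memory at all; if the state always matched the sequential prefix, $\mathit{Contrib}$ and $\mathit{Booked}$ would be pointless.

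The paper closes this case with Lemma~\ref{le.booking.inner}: if some leaf $j$ with $\PO(j)>\PO(N^\star)$ has been started, the invariant $M-M_{\mathit{used}}-\sum_{k\notin\Ancestors{j}}\mathit{Booked}[k]\geq 0$ (maintained through the four kinds of events) yields $M-M_{\mathit{used}}\geq \mathit{Booked}[N^\star]=f_{N^\star}$ at the stall, so $N^\star$ could have been started; only in the complementary case (no later leaf started) does one get the prefix-coincidence you assert, and then the sequential-peak argument applies. Your closing remark treats the lemma as an ``alternative,'' but it is the indispensable half of a necessary case split, and your ``uniform'' argument does not cover the configuration it handles. (Your leaf-head case is essentially sound, since an unstarted leaf sits in $\mathit{PQ}$ from the outset and only the head is ever started, so nothing with larger $\PO$ can have begun — but note that the justification must be this execution-order fact, not the sequential statement that later subtrees are ``processed later in $\PO$.'')
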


\begin{proof}
  By contradiction, assume that the algorithm stops while some nodes
  are unprocessed. We consider two cases:
  \begin{itemize}
  \item There is at least one available unprocessed inner node $i$ (if
    there are several, we choose the one with the smallest $\PO$
    value).  Consider the step $\PO(i)$ when this node $i$ is
    processed in the sequential postorder schedule. At this time, the
    set $\mathcal{S}$ of the leaves processed by the sequential
    postorder is exactly the set of the leaves $j$ such that
    $\PO(j)<\PO(i)$. Thanks to Lemma~\ref{le.booking.inner}, we know
    that \BookingInnerFirst has not processed any leaf $j$
    with $\PO(j)>\PO(i)$. Therefore, the set of the leaves processed
    by \BookingInnerFirst is a subset of $\mathcal{S}$.  Node
    $i$ being available, \BookingInnerFirst has processed all
    the leaves in the subtree $\mathit{ST}$ rooted at
    $i$. \BookingInnerFirst cannot start a leaf $k$ if a leaf
    $j$ with $\PO(j) < \PO(k)$ has not been started. Therefore all
    leaves that precedes in the postorder the leaves of $\mathit{ST}$
    have also been processed by \BookingInnerFirst. By
    definition of a postorder, there is no leaf that does not belong
    to $\mathit{ST}$ that is scheduled after the first of the leaf of
    $\mathit{ST}$ and before $i$. Therefore,
    \BookingInnerFirst has processed the exact same set of
    leaves than the sequential postorder at step $\PO(i)$.  We now
    prove that the same set of inner nodes have been processed by both
    algorithms:
    \begin{itemize}
    \item Assume that an inner node $k$ has been processed by
      \BookingInnerFirst but not by the sequential postorder
      at time $\PO(i)$. Since $k$ has not yet been processed by the
      sequential postorder, $\PO(k)>\PO(i)$. Since no leaf $j$ with
      $\PO(j)>\PO(i)$ has been processed by
      \BookingInnerFirst, since $\PO(k)>\PO(i)$, and since $\PO$ is a
      postorder, then $k$ can only be
      a parent of $i$, which contradicts the fact that $i$ is not
      processed.
    \item Assume that an inner node $k$ has been processed by the
      sequential postorder at time $\PO(i)$ but not by
      \BookingInnerFirst. Since it has been processed before
      $i$ in the sequential postorder, $\PO(k)<\PO(i)$. This node, or
      one of its inner node predecessor, must be available in
      \BookingInnerFirst (note that it cannot be a leaf,
      since all leaves with $\PO$ values smaller than $\PO(i)$ are
      already processed). This contradicts the fact that $i$ is the
      available inner node with smallest $\PO$ value.
    \end{itemize}
    Thus, there is no difference in the state of the sequential
    postorder when it starts $i$ and \BookingInnerFirst when
    it stops, including in the amount of available memory.  This
    contradicts the fact that $i$ cannot be started because of memory
    issues.

  \item There is no unprocessed available inner node. Thus, some
    leaf is available and cannot be processed. Let $j$ be the first
    of these leaves according to $\PO$. None of the inner nodes for
    which some memory has been booked is available and, thus, they are
    all parents of $j$ (because $\PO$ is a postorder and because the
    processing of all the leaves that precede $j$ in $\PO$ has been
    completed). Thus, the memory condition which prevents $j$ to be
    executed can be rewritten: $M-M_{\mathit{used}} <
    f_j$. However, since no inner node is available, this is the same
    situation as right before $j$ is processed in the sequential postorder,
    which contradicts the fact that $j$ can be processed in the
    sequential postorder.
  \end{itemize}
\end{proof}

\begin{lemma}
  \BookingInnerFirst is a $p$-approximation algorithm for makespan
  minimization, and this bound is tight.
\end{lemma}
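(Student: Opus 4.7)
The plan has two parts: establishing the $p$-approximation upper bound on the makespan, and exhibiting an instance showing the bound is tight.

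For the upper bound, I would first argue that at any time strictly before \BookingInnerFirst terminates, at least one processor must be busy executing some task. Indeed, if at some intermediate time all processors were idle with no task in progress, the event-driven main loop could never trigger a new event and the tree would never be fully processed, contradicting the correctness theorem that precedes this lemma. Integrating the number of busy processors over $[0, C_{\max}]$ then yields $C_{\max} \leq \sum_i \len{i}$, and combining this with the first inequality of Lemma~\ref{le.lower.bounds}, namely $C_{\max}^* \geq \frac{1}{p}\sum_i \len{i}$, gives $C_{\max} \leq p\,C_{\max}^*$.

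For tightness, I would reuse the tree of Figure~\ref{fig.mem-bounded-list-sched-heterogeneous}, together with the postorder $\PO$ that processes the $p$ subtrees one after the other: $c_1, b_1, d_1, a_1, c_2, b_2, d_2, a_2, \ldots, c_p, b_p, d_p, a_p, \text{root}$, and with memory bound $M$ equal to the peak memory of the sequential postorder. A short induction along the recursive definition of $\mathit{Contrib}$ shows that, since $f_{a_i} = f_{b_i} = f_{\text{root}} = 0$, every $\mathit{Contrib}[\cdot]$ value vanishes on this tree, so the leaf-starting condition on line~\ref{alg.booking.line16} simplifies to the plain test $M_{\mathit{used}} + f_j \leq M$. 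Tracing the execution: the leaf $c_1$ alone saturates the memory (since $f_{c_1} = M$); once $b_1$ has been executed and its input released, only $d_1$ (of size $M/p$) fits next; while $d_1$ is running, no leaf $c_{i'}$ of another subtree can be started; after $a_i$ completes and frees $M/p$, the same pattern repeats starting with $c_{i+1}$. Thus the $p$ subtrees are processed strictly sequentially, giving a makespan of $p(2+k)$, whereas the unconstrained optimum parallelizes the three layers $\{c_i\}$, $\{b_i\}$, $\{d_i\}$ and achieves $2+k$. The ratio $p(2+k)/(2+k) = p$ then establishes tightness.

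The main obstacle is carrying out this execution trace cleanly: the interplay between the zero-duration inner nodes $a_i$ and the root, the event bookkeeping, and the $\mathit{Booked}[\cdot]$ update upon inner-node completion needs to be handled carefully. However, because all $\mathit{Contrib}$ values vanish on this instance, the booking mechanism contributes nothing beyond the ordinary memory test, which should keep the case analysis tractable.
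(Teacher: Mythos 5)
Your proposal is correct and follows essentially the paper's approach: the paper proves this lemma by pointing to the argument for \ParInnerFirstMemLimit, i.e., the same busy-processor/work upper bound and the same tree of Figure~\ref{fig.mem-bounded-list-sched-heterogeneous}, and your trace (including the observation that all $\mathit{Contrib}$ values vanish there) matches that argument. The only difference is that you compare against the memory-unconstrained optimum $2+k$, giving ratio exactly $p$, whereas the paper compares against a memory-feasible schedule of makespan $2p+k$, giving a ratio that tends to $p$; both establish tightness.
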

This result is proved following the exact same arguments than for the
bound on the performance of \ParInnerFirstMemLimit, including the tree
on Figure~\ref{fig.mem-bounded-list-sched-heterogeneous}.

\paragraph{Complexity}

Algorithm~\ref{alg.booking} can be implemented with the same
complexity as the other heuristics, namely $O(n\log(n))$ (which comes
from the management of the $PQ$ queue). The only operations added to
this algorithm which could increase this complexity is the test
executed on line~\ref{alg.booking.line16} to make sure that a new leaf
can be started, that is, the computation of $\sum_{k\notin \Ancestors{j}}
\mathit{Booked}[k]$ for each leaf might take $O(n^2)$ time if not done
carefully. However, it is possible to avoid recomputing the values too
many times. We first remark the following property: when leaf $j$ has
not been started,
$$
\sum_{k\notin \Ancestors{j}} \mathit{Booked}[k] = \sum_{\PO(k)< \PO(j)} \mathit{Booked}[k].
$$

Indeed, if leaf $j$ has not been started, the postorder property
ensures that any $k \notin \Ancestors{j}$ with $\PO(k)\geq\PO(j)$ has
$\mathit{Booked}[k] = 0$, because none of its children have started their
execution.

For an efficient implementation, we keep a record of $R=\sum_{\PO(k)<
  \PO(j)} \mathit{Booked}[k]$ for the leaf $j$ which was tested on the
last execution of line~\ref{alg.booking.line16}. To keep this record,
it is enough to
\begin{itemize}
\item decrease $R$ by $\mathit{Booked}[i]$ each time an inner node $i$
  with $\PO(i)<\PO(j)$ begins execution,
\item increase $R$ by $\mathit{Contrib}[i]$ each time an inner node
  $i$ with $\PO(i)<\PO(j)$ finishes,
\item and increase $R$ by $\sum_{\PO(j) \leq \PO(k)< \PO(j')}
  \mathit{Booked}[k]$ if a new leaf $j'$ is being considered in the
  test of line~\ref{alg.booking.line16}.
\end{itemize}

In total, the number of updates to $R$ over the course of the whole
algorithm is bounded by $2n$: each $\mathit{Contrib}$ value is added
at most once to $R$, and each $\mathit{Booked}$ value is subtracted
at most once. Furthermore, the cost of computing the
sums $\sum_{\PO(j) \leq \PO(k)< \PO(j')} \mathit{Booked}[k]$ is also
bounded by $n$ since each node is considered only once. Hence, these
updates do not increase the total complexity of $O(n\log(n))$ of the
whole algorithm.

\section{Experimental validation}
\label{sec.experiments}

In this section, we experimentally compare the heuristics proposed in
the previous section, and we compare their performance to lower bounds.

\subsection{Setup}

All heuristics have been implemented in C. Special care has been
devoted to the implementation to avoid complexity issues. Especially,
priority queues have been implemented using binary heap to allow for
$O(\log{n})$ insertion and minimum extraction.
We have also implemented Liu's algorithm~\cite{Liu87} to obtain the
minimum sequential memory peak, which is used as a lower bound on memory
for comparing the heuristics.

\subsection{Data set}

The data set contains assembly trees of a set of sparse matrices
obtained from the University of Florida Sparse Matrix Collection
(\url{http://www.cise.ufl.edu/research/sparse/ matrices/}). The chosen
matrices satisfy the following assertions: not binary, not
corresponding to a graph, square, having a symmetric pattern, a number
of rows between 20,000 and 2,000,000, a number of nonzeros per row at
least equal to 2.5, and a number of nonzeros at most equal to
5,000,000; and each chosen matrix 
has the largest number of nonzeros among the matrices in its group
satisfying the previous assertions. With these criteria we
automatically select a set of medium to large matrices from different
application domains with nontrivial number of nonzeros.
 At the time of testing there were 76 matrices
satisfying these properties.  We first order the matrices using
MeTiS~\cite{kaku:98:metis} (through the MeshPart toolbox~\cite{gimt:98})
and {\tt amd} (available in Matlab), and then build the corresponding
elimination trees using the {\tt symbfact} routine of Matlab.  We also
perform a relaxed node amalgamation~\cite{liu92multifrontal} on these elimination trees to create
assembly trees. We have created a large set of instances by allowing
1, 2, 4, and 16 (if more than $1.6 \times 10^5$ nodes) relaxed amalgamations per
node.

%
%
%
%


At the end we compute memory weights and processing times to
accurately simulate the matrix factorization: we compute the memory
weight \n{i} of a node as $\eta^2 +2\eta(\mu-1)$, where $\eta$ is the
number of nodes amalgamated, and $\mu$ is the number of nonzeros in
the column of the Cholesky factor of the matrix which is associated
with the highest node (in the starting elimination tree); the
processing time \len{i} of a node is defined as $2/3\eta^3 +
\eta^2(\mu-1) + \eta (\mu-1)^2$ (these terms corresponds to one
gaussian elimination, two multiplications of a triangular
$\eta\times\eta$ matrix with a $\eta\times(\mu-1)$ matrix, and one
multiplication of a $(\mu-1)\times \eta$ matrix with a $\eta \times
(\mu-1)$ matrix).  The memory weights \f{i} of edges are computed as
$(\mu-1)^2$.

The resulting 608 trees contains from 2,000 to 1,000,000 nodes. Their
depth ranges from 12 to 70,000 and their maximum degree ranges from 2
to 175,000. Each heuristic is tested on each tree using $p=2$, 4, 8,
16, and 32 processors. Then the memory and makespan of the resulting
schedules are evaluated by simulating a parallel execution.

\subsection{Results for heuristics without memory bound}

\begin{table}[htbp]
    \resizebox{\textwidth}{!}{%
      \begin{tabular}[tabular]{|c||c|c|c||c|c|c|}
        \hline
        \multirow{2}{*}{Heuristic}  & \multirow{2}{*}{Best memory}  & Within 5\% of & Normalized  & \multirow{2}{*}{Best makespan}  & Within 5\% of & Normalized\\
        &  &  best memory & memory  &  & best makespan & makespan\\
        \hline
        \ParSubtrees  & 81.1 \% & 85.2 \% & 2.34     & 0.2 \%  & 14.2 \%  & 1.40  \\
        \ParSubtreesOptim  & 49.9 \% & 65.6 \% & 2.46    & 1.1 \%  & 19.1 \%  & 1.33 \\
        \ParInnerFirst  & 19.1 \% & 26.2 \% & 3.79      & 37.2 \%  & 82.4 \%  & 1.07  \\
        \ParDeepestFirst  & 3.0 \% & 9.6 \% & 4.13     & 95.7 \%  & 99.9 \%  & 1.04 \\
        \hline
      \end{tabular}
    }
    \caption{Proportions of scenarii when heuristics reach best (or
      close to best) performance, and average deviations from optimal
      memory and best achieved makespan.\label{tab.cmp}}
\end{table}


The comparison of the first set of heuristics (without memory bounds)
is summarized in Table~\ref{tab.cmp}. It presents the fraction of the
cases where each heuristic reaches the best memory (respectively
makespan) among all heuristics, or when its memory (resp. makespan) is
within 5\% of the best one. It also shows the average normalized
memory and makespan. For each scenario (consisting in a tree and a
number of processors), the memory obtained by each heuristic is
normalized by the optimal (sequential) memory, and the makespan is
normalized using a classical lower bound, since makespan minimization
is NP-hard even without memory constraint. The lower bound is the maximum between
the total processing time of the tree divided by the number of
processors, and the maximum weighted critical
path.

Table~\ref{tab.cmp} shows that \ParSubtrees and \ParSubtreesOptim are
the best heuristics for memory minimization. On average they use less
than 2.5 times the amount of memory required by the optimal sequential
traversal, when \ParInnerFirst and \ParDeepestFirst respectively need
3.79 and 4.13 times this amount of memory. \ParInnerFirst
and \ParDeepestFirst perform best for makespan minimization, having
makespans very close on average to the best achieved ones, which is
consistent with their 2-approximation ratio for makespan minimization.
Furthermore, given the critical-path-oriented node ordering, we can
expect that \ParDeepestFirst makespan is close to
optimal. \ParDeepestFirst outperforms \ParInnerFirst for makespan
minimization, at the cost of a noticeable increase in memory.
\ParSubtrees and \ParSubtreesOptim may be better trade-offs, since
they use (on average) almost only half the memory of \ParDeepestFirst
for at most a 35\% increase in makespan.

\begin{figure}[htbp]
  \centering
  \scalebox{0.6}{\input{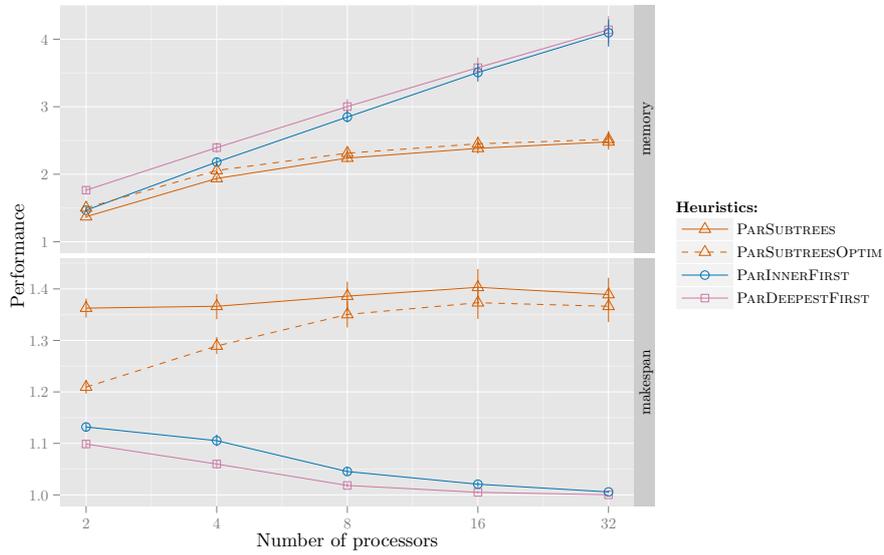}}
  \caption{Performance (makespan and memory) to the respective lower
    bounds for the first set of heuristics. Vertical bars represents
    confidence intervals when we exclude trees with extreme
    performance.}
  \label{fig.simu.unbounded}
\end{figure}

\begin{figure}[htbp]
  \centering
  \scalebox{0.6}{\input{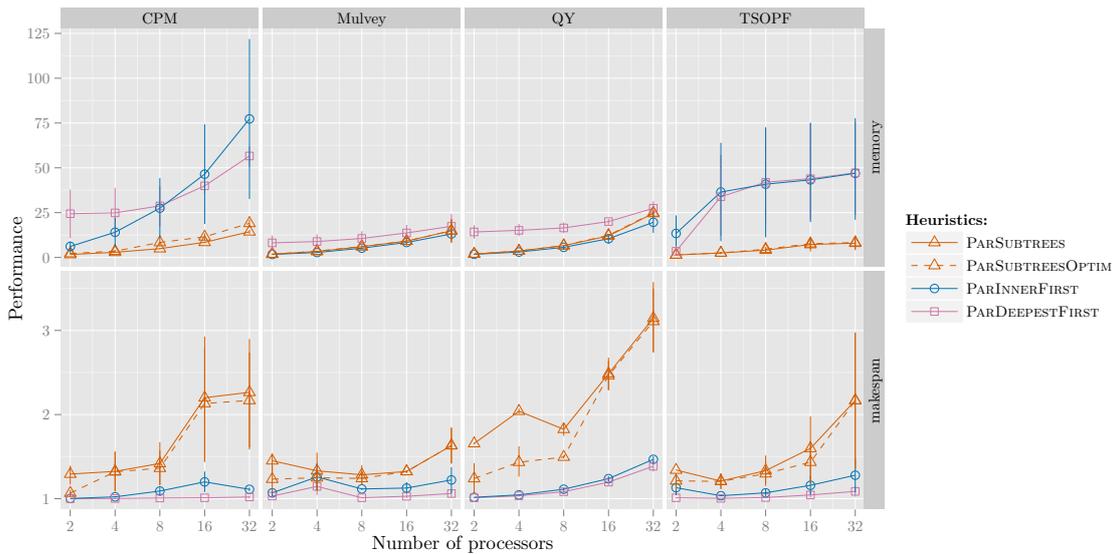}}
  \caption{Performance (makespan and memory) to the respective lower
    bounds for the first set of heuristics, on four specific classes
    of trees which show specific behavior. Vertical bars represents
    confidence intervals.}
  \label{fig.simu.unbounded.outliers}
\end{figure}

Figure~\ref{fig.simu.unbounded} presents the evolution of the
performance of these heuristics with the number of processors. On this
figure, we plot the results for all 608 trees except 76 of them, for
which the results are so different that it does not make sense to
compute average values anymore. These outliers belong to four
different classes of applications, and the specific results for these
graphs are shown on Figure~\ref{fig.simu.unbounded.outliers}.
Figure~\ref{fig.simu.unbounded} shows that \ParDeepestFirst
and \ParInnerFirst have a similar performance evolution, just
like \ParSubtrees and \ParSubtreesOptim. The performance gap between
these two groups, both for memory and makespan, increases with the
number of processors. With a large number of
processors, \ParDeepestFirst and \ParInnerFirst are able to decrease
the normalized makespan (at the cost of an increase of memory),
while \ParSubtrees has a an almost constant normalized makespan with
the number of processor.

Despite the very different values for makespan and memory utilization,
and a much higher variability, the results for the outliers give the
same conclusions about the relative performance of the
heuristics. Furthermore, this graph also exhibits the absence of
approximation ratios for \ParDeepestFirst and \ParInnerFirst for
memory minmization. Indeed, even though the trees used in this set are
taken from real-life applications, in contrast with the carefully
crafted counter-examples of Section~\ref{sec.heuristics}, the memory
usage of \ParDeepestFirst and \ParInnerFirst on those trees can reach
up to 100 times the optimal memory usage.

\subsection{Results for memory-bounded heuristics}

In addition to the previous heuristics, we also test the memory-bounded
heuristics. Since they can be applied only to reduction
trees with null processing sizes, we transform the trees used in the
previous tests into reduction trees as explained in Section~\ref{sec:reductiontrees}. For a given
scenario (tree, number of processors, memory bound), the memory obtain
by each heuristic is normalized by the optimal memory on the original
tree (not the reduction one).  Thus, the normalized memory represents
the actual memory used by the heuristic compared to the one of a
sequential processing. In particular, this allows a fair comparison
between memory-bounded heuristics and the previous unbounded
heuristics.


\begin{sidewaysfigure}
  \centering
  \scalebox{0.50}{\input{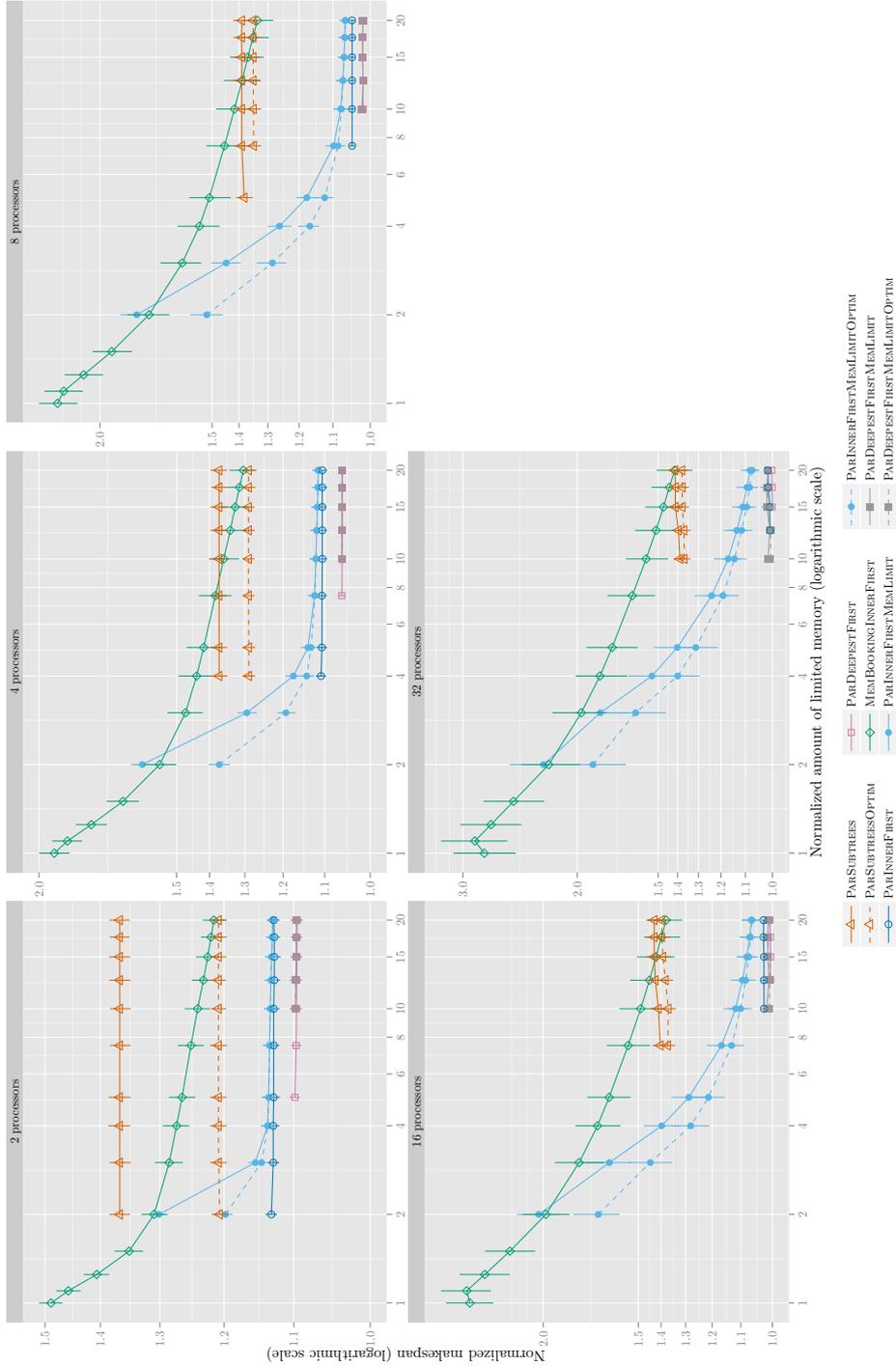}}
  \caption{Memory and makespan performance of memory-bounded
    heuristics (logarithmic scale).}
  \label{fig.simu.bounded}
\end{sidewaysfigure}

In order to compare the memory bounded heuristics, we have applied
them on the previous data-set, using various memory bounds. For each
tree, we first compute the minimum sequential memory $M_{\mathit{seq}}$ obtained by a
postorder traversal of the original tree. Then, each heuristic is
tested on the corresponding tree with a memory bound $B=x  M_{\mathit{seq}}$ for various ratios
$x\geq 1$. Sometimes, the heuristic cannot run because the amount of
available memory is too small. This is explained by the following
factors:
\begin{compactitem}
\item The memory-bounded heuristics use a reduction tree which may
  well need more memory than the original tree. In general, however, the
  transformation from original tree does not significantly increase
  the minimum amount of memory needed to process the tree.
\item \ParInnerFirstMemLimit has a minimum memory guarantee which is twice the
  sequential memory of a postorder traversal, thus it cannot run with
  a memory smaller than $2 M_{\mathit{seq}}$.
\item \ParDeepestFirstMemLimit has a minimum memory guarantee which is twice
  the sequential memory of a deepest first sequential traversal of the
  tree. A deepest first traversal uses much more memory than a
  postorder traversal, and thus \ParDeepestFirstMemLimit needs much more
  memory than \ParInnerFirstMemLimit to process a tree.
\end{compactitem}
Figure~\ref{fig.simu.bounded} presents the results of these
simulations. On this figure, points are shown only when a heuristic
succeeds in more than 95\% of the cases. The intuition is that with a
success rate larger than 95\%, the heuristic is presumably useful for
this ratio. This figure shows that when the memory is very limited
($B<2 M_{\mathit{seq}}$), \BookingInnerFirst is the only heuristic
that can be run, and it achieves reasonable makespans. For a less
strict memory bound ($2 M_{\mathit{seq}} \leq B< 5 M_{\mathit{seq}}$
or $2 M_{\mathit{seq}} \leq B< 10 M_{\mathit{seq}}$ depending on the
number of processors), \ParInnerFirstMemLimit is able to process the
tree, and achieves better makespans, especially when B is
large. Finally, when memory is abundant, \ParDeepestFirstMemLimit is
the best among all heuristics. On this figure, we also tested the two
variants
\ParInnerFirstMemLimitOptim and \ParDeepestFirstMemLimitOptim
presented in Section~\ref{sec:boundedlistsched} that are more
aggressive when starting leaves, but with the same memory guarantee
as \ParInnerFirstMemLimit and \ParDeepestFirstMemLimit. We see that
these strategies are able to better reduce the makespan in the case of
a very limited memory ($B$ close to $2 M_{\mathit{seq}}$).


\begin{figure}[htbp]
  \centering
  \scalebox{0.7}{
    \input{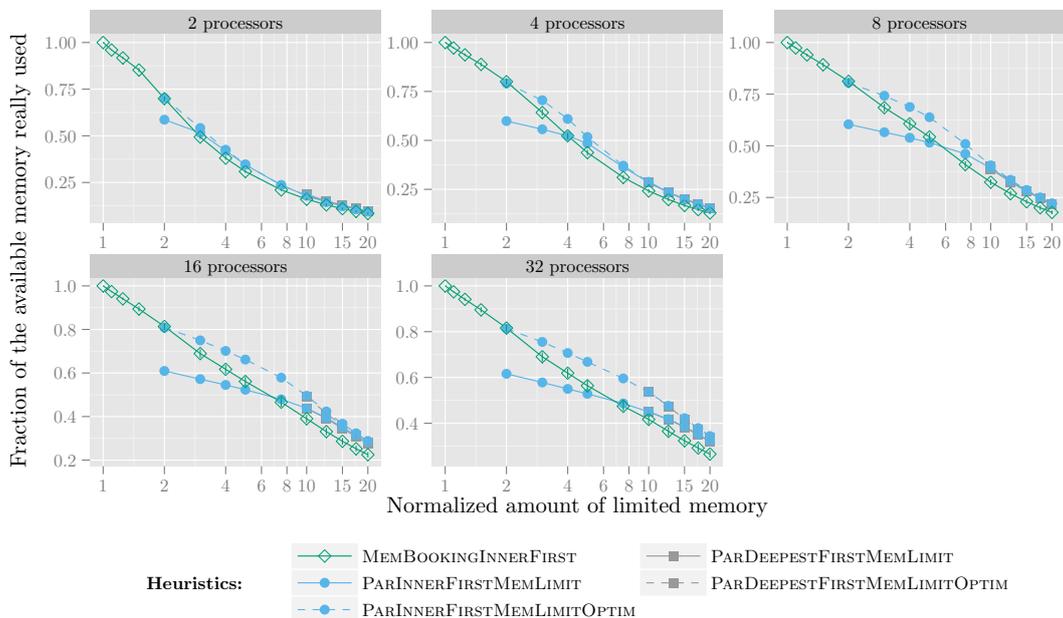}
  }
  \caption{Real use of limited memory for memory-bounded
    heuristics.}
  \label{fig.bounded-mem-usage}
\end{figure}

Finally, Figure~\ref{fig.bounded-mem-usage} shows the ability of
memory bounded heuristics to make use of the limited amount of
available memory. On this figure, points corresponding
to \ParDeepestFirstMemLimit
(respectively \ParDeepestFirstMemLimitOptim) are hardly
distinguishable from \ParInnerFirstMemLimit
(resp. \ParInnerFirstMemLimitOptim).  We notice that
\BookingInnerFirst is able to fully use the very limited amount of
memory when $B$ is close to $M_{\mathit{seq}}$. The good use of memory
is directly correlated with good makespan performance: for a given
amount of bounded memory, heuristics giving best makespans are the
ones that uses the largest fraction of available memory. Especially,
we can see that \ParInnerFirstMemLimitOptim
and \ParDeepestFirstMemLimitOptim are able to use much more memory
than their non-optimized counterpart, especially when memory is very
limited.

\section{Conclusion}

In this study we have investigated the scheduling of tree-shaped
task graphs onto multiple processors under a given memory limit
and with the objective to minimize the makespan. We started by
showing that the parallel version of the pebble
game on trees is NP-complete, hence stressing the negative impact of
the memory constraints on the complexity of the problem. More
importantly, we have proved that there does not exist any algorithm
that is simultaneously a constant-ratio approximation algorithm for
both makespan minimization and peak memory usage minimization when
scheduling tree-shaped task graphs. We have also established bounds on
the achievable approximation ratios for makespan and memory when the
number of processors is fixed. Based on these complexity results, we
have then designed a series of practical heuristics; some of these
heuristics are guaranteed to keep the memory under a given memory limit.
Finally, we have assessed the
performance of our heuristics using real task graphs arising from
sparse matrices computation. These simulations demonstrated that the
different heuristics achieve different trade-offs between the
minimization of peak memory usage and makespan; hence, the set of
designed heuristics provide an efficient solution for each situation.

This work represents an important step towards a comprehensive
theoretical analysis of memory/makespan minimization for applications
organized as trees of tasks, as it provides both complexity results
and memory bounded heuristics. Multifrontal sparse matrix
factorization is an important application for this work, and a good
incentive to refine the computation model. In a second step, we should
consider trees of parallel tasks rather than of pure sequential tasks,
as the computations corresponding to large tasks (at the top of the
tree) are usually distributed across processors. Of course, one would need
a proper computation model to derive relevant complexity
results. To get even closer to reality, one would also need to
consider distributed memory rather than shared memory, or a mix of
both. Hence, many important but challenging findings remain to be
done.

\section*{Acknowledgement}

We thank Bora Uçar for his help in creating and managing the data set
used in the experiments.  We gratefully acknowledge that this work is
partially supported by the Marsden Fund Council from Government
funding, Grant 9073-3624767, administered by the Royal Society of New
Zealand, and by the ANR {\em RESCUE} and {\em SOLHAR} projects, funded
by the French National Research Agency.

\bibliographystyle{plain}
\bibliography{biblio}

\begin{thebibliography}{10}

\bibitem{agulloPP12}
Emmanuel Agullo, Patrick Amestoy, Alfredo Buttari, Abdou Guermouche, Jean-Yves
  L'Excellent, and Fran\c{c}ois-Henry Rouet.
\newblock Robust memory-aware mappings for parallel multifrontal
  factorizations, 2012.
\newblock {SIAM} conf. on Parallel Processing for Scientific Computing (PP12).

\bibitem{amestoy2001fam}
P.~R. Amestoy, I.~S. Duff, J.~Koster, and J.-Y. L'Excellent.
\newblock A fully asynchronous multifrontal solver using distributed dynamic
  scheduling.
\newblock {\em SIAM Journal on Matrix Analysis and Applications}, 23(1):15--41,
  2001.

\bibitem{amestoy2005hsp}
P.~R. Amestoy, A.~Guermouche, J.-Y. L'Excellent, and S.~Pralet.
\newblock Hybrid scheduling for the parallel solution of linear systems.
\newblock {\em Parallel Computing}, 32(2):136--156, 2006.

\bibitem{livre6}
H.~Casanova, A.~Legrand, and Y.~Robert.
\newblock {\em Parallel Algorithms}.
\newblock Chapman and Hall/CRC Press, 2008.

\bibitem{GareyJohnson}
M.~R. Garey and D.~S. Johnson.
\newblock {\em Computers and Intractability, {A} Guide to the Theory of
  {NP}-Completeness}.
\newblock W.H. Freeman and Co, 1979.

\bibitem{gilbert80}
John~R. Gilbert, Thomas Lengauer, and Robert~Endre Tarjan.
\newblock The pebbling problem is complete in polynomial space.
\newblock {\em SIAM J. Comput.}, 9(3), 1980.

\bibitem{gimt:98}
John~R. Gilbert, Gary~L. Miller, and Shang-Hua Teng.
\newblock Geometric mesh partitioning: Implementation and experiments.
\newblock {\em SIAM Journal on Scientific Computing}, 19(6):2091--2110, 1998.

\bibitem{GrahamList}
R.~L. Graham.
\newblock Bounds for certain multiprocessing anomalies.
\newblock {\em Bell System Technical Journal}, XLV(9):1563--1581, 1966.

\bibitem{guermouche04ipdps}
A.~Guermouche and J.-Y. L'Excellent.
\newblock Memory-based scheduling for a parallel multifrontal solver.
\newblock In {\em Proceedings of the 18th International Parallel and
  Distributed Processing Symposium ({IPDPS}'04)}, page~71, 2004.

\bibitem{Hu61}
T.C. Hu.
\newblock Parallel sequencing and assembly line problems.
\newblock {\em Operations Research}, 9, 1961.

\bibitem{Hwang1989:spg}
J.~J. Hwang, Y.~C. Chow, F.~D. Anger, and C.~Y. Lee.
\newblock Scheduling precedence graphs in systems with interprocessor
  communication times.
\newblock {\em SIAM Journal of Computing}, 18(2), 1989.

\bibitem{ipdps-tree-traversal}
Mathias Jacquelin, Loris Marchal, Yves Robert, and Bora Ucar.
\newblock On optimal tree traversals for sparse matrix factorization.
\newblock {\em Proceedings of the 25th IEEE International Parallel and
  Distributed Processing Symposium ({IPDPS}'11)}, 2011.

\bibitem{kaku:98:metis}
G.~Karypis and V.~Kumar.
\newblock {\em {MeTiS}: {A} Software Package for Partitioning Unstructured
  Graphs, Partitioning Meshes, and Computing Fill-Reducing Orderings of Sparse
  Matrices Version 4.0}.
\newblock U. of Minnesota, Dpt. of Comp. Sci. and Eng., Army HPC Research
  Center, Minneapolis, 1998.

\bibitem{rauber11CLSS}
Chi-Chung Lam, Thomas Rauber, Gerald Baumgartner, Daniel Cociorva, and
  P.~Sadayappan.
\newblock Memory-optimal evaluation of expression trees involving large
  objects.
\newblock {\em Computer Languages, Systems {\&} Structures}, 37(2):63--75,
  2011.

\bibitem{LenstraRKBr77}
J.~K. Lenstra, A.~H.~G. Rinnooy~Kan, and P.~Brucker.
\newblock Complexity of machine scheduling problems.
\newblock {\em Annals of Discrete Mathematics}, 1:343--362, 1977.

\bibitem{liu92multifrontal}
J.~Liu.
\newblock The multifrontal method for sparse matrix solution: Theory and
  practice.
\newblock {\em SIAM Review}, 34(1):82--109, 1992.

\bibitem{Liu86}
Joseph W.~H. Liu.
\newblock On the storage requirement in the out-of-core multifrontal method for
  sparse factorization.
\newblock {\em ACM Trans. Math. Software}, 12(3):249--264, 1986.

\bibitem{Liu87}
Joseph W.~H. Liu.
\newblock An application of generalized tree pebbling to sparse matrix
  factorization.
\newblock {\em SIAM J. Algebraic Discrete Methods}, 8(3), 1987.

\bibitem{MarchalSiVi13IPDPS}
Loris Marchal, Oliver Sinnen, and Fr{\'e}d{\'e}ric Vivien.
\newblock Scheduling tree-shaped task graphs to minimize memory and makespan.
\newblock In {\em Proceedings of hte 27th IEEE International Symposium on
  Parallel and Distributed Processing (IPDPS)}, pages 839--850, 2013.

\bibitem{ramakrishnan07ccgrid}
Arun Ramakrishnan, Gurmeet Singh, Henan Zhao, Ewa Deelman, Rizos Sakellariou,
  Karan Vahi, Kent Blackburn, David Meyers, and Michael Samidi.
\newblock Scheduling data-intensiveworkflows onto storage-constrained
  distributed resources.
\newblock In {\em Proceedings of the IEEE Symposium on Cluster Computing and
  the Grid (CCGrid'07)}. IEEE, 2007.

\bibitem{Sethi73}
Ravi Sethi.
\newblock Complete register allocation problems.
\newblock In {\em Proceedings of the 5th Annual ACM Symposium on Theory of
  Computing (STOC'73)}, pages 182--195. ACM Press, 1973.

\bibitem{SethiUllman70}
Ravi Sethi and J.D. Ullman.
\newblock The generation of optimal code for arithmetic expressions.
\newblock {\em J. ACM}, 17(4):715--728, 1970.

\end{thebibliography}


\end{document}